\newtheorem{property}{Property}
\crefname{property}{Property}{Properties}
\definecolor{color1}{HTML}{05668D}
\definecolor{color2}{HTML}{DA3E52}
\definecolor{color3}{HTML}{F8BE57}
\definecolor{color4}{HTML}{F2BAC9}  
\definecolor{color5}{HTML}{3c096c}
\definecolor{color6}{HTML}{FF784F}
\definecolor{color7}{HTML}{02C39A}
\definecolor{color8}{HTML}{3c096c}  
\definecolor{twogreen}{HTML}{A7D7A7} 
\definecolor{twodarkgreen}{HTML}{C1E1C1}
\definecolor{twoyellow}{HTML}{FFDD94}
\definecolor{twored}{HTML}{FFADAD}
\newcommand{\tikzxmark}{%
\tikz[scale=0.23] {
    \draw[line width=0.7,line cap=round] (0,0) to [bend left=6] (1,1);
    \draw[line width=0.7,line cap=round] (0.2,0.95) to [bend right=3] (0.8,0.05);
}}
\newcommand{\tikzcmark}{%
\tikz[scale=0.23] {
    \draw[line width=0.7,line cap=round] (0.25,0) to [bend left=10] (1,1);
    \draw[line width=0.8,line cap=round] (0,0.35) to [bend right=1] (0.23,0);
}}
\newcommand{\tx}{\textit{tx}}
\newcommand{\txtime}{t}
\newcommand{\txkeys}{K}
\newcommand{\txset}{T}
\newcommand{\TFM}{\textsf{TFM}\xspace}
\newcommand{\GCM}{\textsf{GCM}\xspace}
\newcommand{\GCMs}{\textsf{GCM}s\xspace}
\newcommand{\TFMs}{\textsf{TFM}s\xspace}
\newcommand{\gas}{\textsf{gas}}
\title{Transaction Fee Market Design for Parallel Execution} 
\author{Bahar Acilan\footnote{The authors of this work are listed alphabetically.}}{ETH Zurich}{bacilan@ethz.ch}{https://orcid.org/0009-0005-9241-0369}{}
\author{Andrei Constantinescu}{ETH Zurich}{aconstantine@ethz.ch}{https://orcid.org/0009-0005-1708-9376}{}
\author{Lioba Heimbach}{ETH Zurich}{hlioba@ethz.ch}{https://orcid.org/0000-0002-8258-1712}{}
\author{Roger Wattenhofer}{ETH Zurich}{wattenhofer@ethz.ch}{https://orcid.org/0000-0002-6339-3134}{}
\authorrunning{B. Acilan, A. Constantinescu, L. Heimbach and R. Wattenhofer}
\keywords{blockchain, transaction fee mechanism, parallel execution} 
\begin{document}

\maketitle

\begin{abstract}
Given the low throughput of blockchains like Bitcoin and Ethereum, scalability --- the ability to process an increasing number of transactions --- has become a central focus of blockchain research. One promising approach is the parallelization of transaction execution across multiple threads. However, achieving efficient parallelization requires a redesign of the incentive structure within the fee market. Currently, the fee market does not differentiate between transactions that access multiple high-demand storage keys (i.e., unique identifiers for individual data entries) versus a single low-demand one, as long as they require the same computational effort. Addressing this discrepancy is crucial for enabling more effective parallel execution.

In this work, we aim to bridge the gap between the current fee market and the need for parallel execution by exploring alternative fee market designs. To this end, we propose a framework consisting of two key components: a \textit{Gas Computation Mechanism (GCM)}, which quantifies the load a transaction places on the network in terms of parallelization and computation, measured in \textit{units of gas}, and a \textit{Transaction Fee Mechanism (TFM)}, which assigns a price to each unit of gas. We additionally introduce a set of desirable properties for a \GCM, propose several candidate mechanisms, and evaluate them against these criteria. Our analysis highlights two strong candidates: the \textit{weighted area} \GCM, which integrates smoothly with existing TFMs such as EIP‑1559 and satisfies a broad subset of the outlined properties, and the \textit{time-proportional makespan} \GCM, which assigns gas costs based on the context of the entire block’s schedule and, through this dependence on the overall execution outcome, captures the dynamics of parallel execution more accurately.
\end{abstract}

\section{Introduction}
Scalability, the ability to process more transactions efficiently, has become a central focus in blockchain research, especially given the low throughput of many existing networks. Ethereum, for example, is constrained by its single-threaded execution model, limiting transaction throughput. One promising way to enhance scalability is by parallelizing transaction execution across multiple threads, taking advantage of the multi-core processors common in modern hardware. However, achieving the full efficiency gains of parallel execution requires rethinking the fee market design to better account for \textit{storage key} contention and scheduling constraints. A storage key is a unique identifier (like an address label) for a specific data item in storage.

A \textit{transaction fee mechanism (TFM)} is a core component of any blockchain protocol, determining which pending transactions should be processed and what users must pay for the privilege of having their transactions executed. Traditional fee mechanisms, like Bitcoin’s first-price auction, involve users submitting bids with their transactions, and the transactions with the highest bids per computation are included in the next block. Ethereum initially used a similar purely first-price auction-based model but switched to the more sophisticated EIP-1559 mechanism in 2021, which introduces a fluctuating base fee based on network demand, aiming to improve incentive compatibility and reduce price volatility~\cite{eip1559spec}. Most of the existing literature on transaction fee mechanisms focuses on settings where transactions are executed sequentially and therefore does not account for storage key contention, which is crucial in parallel execution environments.

Thus, viewed in isolation, these traditional TFMs are ill-suited to the complexities introduced by parallel transaction execution. They price transactions solely based on their computational cost, without distinguishing between those that access a single storage key and those that interact with multiple, potentially contested storage keys. This pricing model works in a single-threaded environment but fails to capture the nuances of parallel execution, where transactions may impose vastly different constraints on storage key scheduling. Transactions that touch multiple high-contention storage keys can introduce bottlenecks, while those interacting with isolated storage keys are far easier to schedule efficiently.

Ethereum has yet to adopt parallel execution, but several blockchains, such as Solana, Aptos, and Sui, already employ parallel transaction processing~\cite{aptos,monad,sei_protocol,solana,sui}. However, many of these networks have yet to implement fee models that fully account for the challenges of parallel execution. Sui and Solana have introduced fee markets tailored to parallelization, but these mechanisms require users to engage in first-price auctions for congested storage keys~\cite{lostin2025truth,mueller2025}. As a result, these fee markets demand a high level of sophistication from users to effectively optimize their fee settings and are also not incentive-compatible. The requirements for an effective fee market that is suitable for parallel execution and the design of such a market have so far remained unresolved.

\paragraph*{Our Contributions.} In this work, we aim to bridge this gap by outlining the requirements for such a fee market and evaluating possible candidates. We outline our main contributions below: 
\begin{itemize}[topsep=0pt,itemsep=-1ex,partopsep=1ex,parsep=1ex]
    \item We introduce a framework with two main components: a \textit{Gas Computation Mechanism (GCM)}, which measures the load a transaction imposes on the network in terms of both parallel execution and computation, expressed in \emph{units of gas}, and a Transaction Fee Mechanism (TFM), which determines the cost associated with each unit of gas.
    \item  We introduce a list of desirable properties for a GCM and evaluate against them a set of mechanisms that we propose. 
    \item Our analysis identifies two promising candidates. The weighted area \GCM achieves a large subset of the outlined properties and, importantly, can be seamlessly integrated with existing TFMs, inheriting their properties. Complementing it, the time-proportional makespan \GCM is designed to price execution costs in proportion to the total computational load of a block, allowing more accurate resource pricing at the block level.
\end{itemize}

\section{Model}

\newcommand{\keyset}{\mathcal{K}}

We consider a universe consisting of several stateful \emph{storage keys} (e.g., user accounts, storage addresses of smart contracts). Each storage key can be thought of as a system global variable. We write $\keyset$ for the set of storage keys. For analysis purposes, we assume that $\keyset$ is infinite.
A \emph{transaction} is a sequence of elementary instructions performing computation and interacting with the storage keys. Some of these operations \emph{access} a given target storage key (e.g., read its value, write to it).
For simplicity, we assume that the following are known in advance and supplied with the transaction:

\begin{itemize}[topsep=0pt,itemsep=-1ex,partopsep=1ex,parsep=1ex]
    \item The non-empty set of storage keys $K \subseteq \keyset$ the transaction accesses, or an overestimate.\footnote{In Ethereum, transaction accesses are generally not known in advance. Instead, transactions can execute arbitrary logic (constrained by a maximum amount of computational effort) and their execution depends on the blockchain's state at the time of execution. Ethereum currently supports optional \textit{access lists} that allow transactions to specify their accesses~\cite{eip2930}. This optional list could be made mandatory to provide an overestimate of accesses. Additionally, it is worth noting that in other blockchains (e.g., Solana~\cite{solana_transactions} and Sui~\cite{sui_object_model}), an overestimate of accesses is typically known in advance.}
    \item The total time $t > 0$ it takes to execute the transaction on a single thread.\footnote{In reality, the time to execute a transaction depends on the hardware it is executed on. Thus, time is estimated in Ethereum by assigning each operation a computational effort. Then the sum of the computational effort of a transaction's operation can be seen as a proxy for time.}
\end{itemize}

Transactions execute concurrently, but \emph{atomically}, meaning that the overall effect of executing a batch of transactions should also be achievable by a sequential, single-core execution. For simplicity, we restrict to \emph{concurrent schedules} following a \emph{simple lock-based execution policy}: each storage key has a lock associated with it; whenever a thread wants to execute a transaction, it first locks all required storage keys, then executes the transaction, and then releases the locks. We assume that acquiring (and similarly releasing) the required locks happens simultaneously and takes no additional time.
These simplifications have a desirable side-effect: $t$ and $K$ for each transaction now uniquely determine the set of admissible concurrent schedules, allowing us to ignore other details about the transactions:  

\begin{definition}[Transaction]\label{def:tx}
    A transaction $\tx$ is specified through a tuple $(\txtime, \txkeys)$, where $\txtime > 0$ denotes the time required to execute the transaction and $\txkeys \neq \varnothing$ represents the set of storage keys the transaction demands, which are locked for the duration of the transaction. We will sometimes write $t(\tx)$ and $K(\tx)$ for $t$ and $K$ respectively.
\end{definition}\vspace{2pt}

To give the tuple associated with a transaction, we will write $\tx \simeq (t, R)$. Note that different transactions might have the same associated tuple. We will also write $\tx_1 \simeq \tx_2$ to mean that the two transactions have the same associated tuple.

\noindent
\begin{minipage}[b]{0.65\textwidth}
    \quad In Figure~\ref{fig:tx}, we illustrate a sample transaction $\tx \simeq (3,\{k_2,k_3\})$. Transaction $\tx$ thus takes 3 units of time to execute and utilizes storage keys $k_2$ and $k_3$. We illustrate this by a rectangle of corresponding length (i.e., time) and width (i.e., storage keys). Throughout, we will illustrate transactions in this manner to aid in visualizing concepts and results. Note that we will always represent transactions as rectangles (i.e., they use consecutive storage keys). In reality, this is, of course, not the case. Importantly, all our results also hold in the more general setting where a transaction can use any subset of storage keys. 
\end{minipage}%
\hfill
\begin{minipage}[b]{0.3\textwidth}
    \centering
    \begin{tikzpicture}[scale=1]
    \draw[-stealth, thick] (0, 0) -- (0, -2.3)  node[midway,above, rotate=180,sloped,inner sep = 20 pt] {\small time};
    \draw[ thick] (-0.1, -1.5) -- (0, -1.5)  node[above, rotate=180,sloped,inner sep = 20 pt] {};
    \node[text=gray] at (-0.3,-1.5)[](R1) {\small 3} ;
    \node[text=gray] at (0.25,0.25)[](R1) {\small $k_1$} ;
    \node[text=gray] at (0.75,0.25)[](R2) {\small $k_2$} ;
    \node[text=gray] at (1.25,0.25)[](R3) {\small $k_3$} ;
    \node[text=gray] at (1.75,0.25)[](R4) {\small $k_4$} ;
    \draw[-stealth, thick] (0, 0) -- (2.3,0)  node[midway,above,sloped,inner sep = 15 pt] {\small storage keys};

    \draw [dotted] (0.5,0) -- (0.5,-2.2);
    \draw [dotted] (1,0) -- (1,-2.2);
    \draw [dotted] (1.5,0) -- (1.5,-2.2);
    \draw [dotted] (2,0) -- (2,-2.2);

    \fill[color1!50] (0.5,-0) rectangle (1.5,-1.5); 
    \draw[black] (0.5,-0) rectangle (1.5,-1.5);
    \draw[white] (4,-0) rectangle (4,-0);
\end{tikzpicture}\vspace{-40pt}
    \captionof{figure}{Illustration of a sample transaction $\tx \simeq (3,\{k_2,k_3\})$.}
    \label{fig:tx}\vspace{5pt}
\end{minipage}

A \emph{blockchain} is a sequence of \emph{blocks} comprising bundles of transactions: $B_1, \dots, B_m$. The system starts in some predetermined initial state. The blocks are executed in order, starting from the oldest (the \emph{genesis} block $B_1$), successively changing the system's state. For simplicity, we assume no cross-block parallelism, so the execution of a block only starts after the previous block's execution has been completed. However, the execution of transactions inside a block happens concurrently. The system's state after executing $B_m$ is the \emph{current} system state. Users wanting to change the system's state compete for inclusion in the next block $B_{m + 1}$ and have to pay a \emph{fee} if successfully included. Desirably, the fee should be higher for more complex transactions and higher during periods of high demand due to limited block space. These requirements are typically decoupled and ensured through different means:

\begin{enumerate}[topsep=0pt,itemsep=-1ex,partopsep=1ex,parsep=1ex]
  \item \label{item:1} A transaction's complexity is quantified in units of \emph{gas}:\footnote{For Ethereum, the unit is called wei.} the more complex a transaction is, the more gas it consumes. Gas encompasses multiple components such as execution, storage space, and data bandwidth. For our purposes, we will only be concerned with \emph{execution gas}.\footnote{From this point forward, gas will refer specifically to execution gas.} Currently, the execution gas acts as a proxy for the execution time $t$,\footnote{Throughout we will assume this approximation to the exact.} but this need not be the case, as we will demonstrate in our paper.
  \item \label{item:2} The fair competition for block space is ensured through a \emph{transaction fee mechanism (\TFM)}: users submit transactions they would like to be included in the next block together with bids of what they would be willing to pay per unit of gas. 
  Importantly, block space is limited, i.e., there is limited space for transactions. The mechanism then determines the set\footnote{In practice, blocks are ordered, and transactions often compete for earlier positions (particularly in MEV settings). Following a common simplification in the TFM literature, we model blocks as \emph{sets} of transactions. Accounting for intra-block ordering would require extending our model and constraining it to schedulers that honor the specified ordering. Our framework appears reasonably extensible to such settings, and we leave a detailed treatment to future work.} of transactions to be included in the block, together with a price
  per unit of gas to be paid by each included transaction (potentially not the same for all transactions).\footnote{There are several other components of a \TFM, but for the level of detail we need here, this suffices.} 
\end{enumerate}

As such, an included transaction consuming $g$ gas units at a price of $p$ per unit of gas will have to pay a fee of $g \cdot p$ (generally in the blockchain's native currency).

To keep the separation of concerns in \eqref{item:1} and \eqref{item:2}, we will keep the formula for the fee $g \cdot p$ and instead vary the \emph{gas computation mechanism} used to determine $g$:

\begin{definition}[Gas Computation Mechanism]\label{def:gas-comp-mech}
    A gas computation mechanism (\GCM) takes as input a set of transactions $\txset$ to be included in a block and determines in a deterministic\footnote{We focus solely on deterministic mechanisms. This is common in blockchain fee markets, given the inherent difficulty of accessing true randomness on-chain (though notable approaches exist, such as VRFs). While randomized mechanisms can statistically improve over their deterministic counterparts, their ex-post behavior is often hard to predict, leading to higher fee uncertainty, an outcome generally undesirable in blockchain settings.} manner the amount of gas consumed by each transaction $\tx \in \txset$, written $\gas_{\txset}(\tx)$.
\end{definition}\vspace{2pt}

Currently deployed GCMs associate a fixed, predetermined gas consumption with each transaction, independent of the specific storage keys accessed by the transaction, making them unsuitable for a parallel execution environment. In particular, this is true for Ethereum's \emph{current} GCM:

\begin{definition}[Current \GCM]\label{def:current-gas-comp-mech}
    Given a set of transactions $\txset$ and a transaction $\tx \in T$ with $\tx \simeq (t, R)$, the \emph{current} \GCM computes the amount of gas used by $\tx$ as follows:  %
    \[
        \gas_{\txset}^C(\tx) := t.
    \]
    Since this does not depend on $T$, we often drop the subscript.
\end{definition}\vspace{2pt}

Our goal is to ensure that fees accurately reflect the parallelizability of transactions. Therefore, the gas consumption calculated for a transaction should depend on the set of storage keys it accesses and may also be influenced by factors external to the transaction itself (like interactions between transactions).

\section{\GCM Properties} \label{sect:gcm-prop}

Next, we outline several desirable properties that a \GCM should possess to provide the right incentives for parallelization. These properties should be viewed as a wishlist --- as will become clear later on, no single mechanism can satisfy all of them simultaneously.

We begin with two natural monotonicity properties, one for storage keys and one for time. First, a transaction that requires a subset of the storage keys used by another transaction, while taking the same amount of time, should consume no more gas (\cref{prop:resource_monotonicity,fig:resource_mon}). Similarly, a transaction that requires no more execution time than another, assuming both involve the same set of storage keys, should consume no more gas (\cref{prop:time_monotonicity,fig:time_mon}). 

\begin{figure}[t]
\centering
  
  \begin{minipage}[t]{0.48\linewidth}
  
  \centering
   \begin{tikzpicture}[scale=1]

  \draw[-stealth, thick] (0, 0) -- (0, -2.3)  node[midway,above, rotate=180,sloped,inner sep = 5 pt] {\small time};
  \node[text=gray] at (0.25,0.25)[](R1) {\small $k_1$} ;
  \node[text=gray] at (0.75,0.25)[](R2) {\small $k_2$} ;
  \node[text=gray] at (1.25,0.25)[](R3) {\small $k_3$} ;
  \node[text=gray] at (1.75,0.25)[](R4) {\small $k_4$} ;
  \draw[-stealth, thick] (0, 0) -- (2.3,0)  node[midway,above,sloped,inner sep = 15 pt] {\small storage keys};

  \draw [dotted] (0.5,0) -- (0.5,-2.2);
    \draw [dotted] (1,0) -- (1,-2.2);
  \draw [dotted] (1.5,0) -- (1.5,-2.2);
  \draw [dotted] (2,0) -- (2,-2.2);

    \fill[color1!50] (0.5,-0) rectangle (1,-1.5); 
    \draw[black] (0.5,-0) rectangle (1,-1.5);

\end{tikzpicture}\begin{tikzpicture}[scale=1]

  \draw[-stealth, thick] (0, 0) -- (0, -2.3)  node[midway,above, rotate=180,sloped,inner sep = 5 pt] {\small time};
  \node[text=gray] at (0.25,0.25)[](R1) {\small $k_1$} ;
  \node[text=gray] at (0.75,0.25)[](R2) {\small $k_2$} ;
  \node[text=gray] at (1.25,0.25)[](R3) {\small $k_3$} ;
  \node[text=gray] at (1.75,0.25)[](R4) {\small $k_4$} ;
  \draw[-stealth, thick] (0, 0) -- (2.3,0)  node[midway,above,sloped,inner sep = 15 pt] {\small storage keys};

  \draw [dotted] (0.5,0) -- (0.5,-2.2);
    \draw [dotted] (1,0) -- (1,-2.2);
  \draw [dotted] (1.5,0) -- (1.5,-2.2);
  \draw [dotted] (2,0) -- (2,-2.2);

    \fill[color2!50] (0.5,-0) rectangle (1.5,-1.5); 
    \draw[black] (0.5,-0) rectangle (1.5,-1.5);
 
\end{tikzpicture}\vspace{-4pt}
    \caption{Illustration of \cref{prop:resource_monotonicity},  where $\txset=\emptyset$, \fcolorbox{black}{color1!50}{\rule{0pt}{3pt}\rule{3pt}{0pt}}\hspace{1pt} represents a sample $\tx_1$ and \fcolorbox{black}{color2!50}{\rule{0pt}{3pt}\rule{3pt}{0pt}}\hspace{1pt} represents a sample $\tx_2$. } \label{fig:resource_mon}
  \end{minipage}%
  \hfill
  \begin{minipage}[t]{0.48\linewidth}
  
  \centering
   \begin{tikzpicture}[scale=1]

  \draw[-stealth, thick] (0, 0) -- (0, -2.3)  node[midway,above, rotate=180,sloped,inner sep = 5 pt] {\small time};
  \node[text=gray] at (0.25,0.25)[](R1) {\small $k_1$} ;
  \node[text=gray] at (0.75,0.25)[](R2) {\small $k_2$} ;
  \node[text=gray] at (1.25,0.25)[](R3) {\small $k_3$} ;
  \node[text=gray] at (1.75,0.25)[](R4) {\small $k_4$} ;
  \draw[-stealth, thick] (0, 0) -- (2.3,0)  node[midway,above,sloped,inner sep = 15 pt] {\small storage keys};

  \draw [dotted] (0.5,0) -- (0.5,-2.2);
    \draw [dotted] (1,0) -- (1,-2.2);
  \draw [dotted] (1.5,0) -- (1.5,-2.2);
  \draw [dotted] (2,0) -- (2,-2.2);

    \fill[color1!50] (0.5,-0) rectangle (1.5,-1.5); 
    \draw[black] (0.5,-0) rectangle (1.5,-1.5);

\end{tikzpicture}\begin{tikzpicture}[scale=1]

  \draw[-stealth, thick] (0, 0) -- (0, -2.3)  node[midway,above, rotate=180,sloped,inner sep = 5 pt] {\small time};
  \node[text=gray] at (0.25,0.25)[](R1) {\small $k_1$} ;
  \node[text=gray] at (0.75,0.25)[](R2) {\small $k_2$} ;
  \node[text=gray] at (1.25,0.25)[](R3) {\small $k_3$} ;
  \node[text=gray] at (1.75,0.25)[](R4) {\small $k_4$} ;
  \draw[-stealth, thick] (0, 0) -- (2.3,0)  node[midway,above,sloped,inner sep = 15 pt] {\small storage keys};

  \draw [dotted] (0.5,0) -- (0.5,-2.2);
    \draw [dotted] (1,0) -- (1,-2.2);
  \draw [dotted] (1.5,0) -- (1.5,-2.2);
  \draw [dotted] (2,0) -- (2,-2.2);

    \fill[color2!50] (0.5,-0) rectangle (1.5,-2.2); 
    \draw[black] (0.5,-0) rectangle (1.5,-2.2);

\end{tikzpicture}\vspace{-4pt}
    \caption{Illustration of \cref{prop:time_monotonicity},  where $\txset=\emptyset$, \fcolorbox{black}{color1!50}{\rule{0pt}{3pt}\rule{3pt}{0pt}}\hspace{1pt} represents a sample $\tx_1$ and \fcolorbox{black}{color2!50}{\rule{0pt}{3pt}\rule{3pt}{0pt}}\hspace{1pt} represents a sample $\tx_2$. } \label{fig:time_mon}
  \end{minipage}%

    \begin{minipage}[t]{0.48\linewidth}
  \centering
   \begin{tikzpicture}[scale=1]

  \draw[-stealth, thick] (0, 0) -- (0, -2.3)  node[midway,above, rotate=180,sloped,inner sep = 5 pt] {\small time};
  \node[text=gray] at (0.25,0.25)[](R1) {\small $k_1$} ;
  \node[text=gray] at (0.75,0.25)[](R2) {\small $k_2$} ;
  \node[text=gray] at (1.25,0.25)[](R3) {\small $k_3$} ;
  \node[text=gray] at (1.75,0.25)[](R4) {\small $k_4$} ;
  \draw[-stealth, thick] (0, 0) -- (2.3,0)  node[midway,above,sloped,inner sep = 15 pt] {\small storage keys};

  \draw [dotted] (0.5,0) -- (0.5,-2.2);
    \draw [dotted] (1,0) -- (1,-2.2);
  \draw [dotted] (1.5,0) -- (1.5,-2.2);
  \draw [dotted] (2,0) -- (2,-2.2);

    \fill[color1!50] (0.5,-0) rectangle (1,-1.5); 
    \draw[black] (0.5,-0) rectangle (1,-1.5);

    \fill[color2!50] (1,-0) rectangle (2,-1); 
    \draw[black] (1,-0) rectangle (2,-1);

\end{tikzpicture}\begin{tikzpicture}[scale=1]

  \draw[-stealth, thick] (0, 0) -- (0, -2.3)  node[midway,above, rotate=180,sloped,inner sep = 5 pt] {\small time};
  \node[text=gray] at (0.25,0.25)[](R1) {\small $k_1$} ;
  \node[text=gray] at (0.75,0.25)[](R2) {\small $k_2$} ;
  \node[text=gray] at (1.25,0.25)[](R3) {\small $k_3$} ;
  \node[text=gray] at (1.75,0.25)[](R4) {\small $k_4$} ;
  \draw[-stealth, thick] (0, 0) -- (2.3,0)  node[midway,above,sloped,inner sep = 15 pt] {\small storage keys};

  \draw [dotted] (0.5,0) -- (0.5,-2.2);
    \draw [dotted] (1,0) -- (1,-2.2);
  \draw [dotted] (1.5,0) -- (1.5,-2.2);
  \draw [dotted] (2,0) -- (2,-2.2);

    \fill[color1!50] (0.5,-0) rectangle (1,-1.5); 
    \draw[black] (0.5,-0) rectangle (1,-1.5);

    \fill[color2!50] (1,-0) rectangle (2,-1); 
    \draw[black] (1,-0) rectangle (2,-1); 

    \fill[color3!50] (1,-1) rectangle (2,-1.8); 
    \draw[black] (1,-1) rectangle (2,-1.8);

\end{tikzpicture}\vspace{-4pt}
    \caption{Illustration of \cref{prop:set_inclusion}, where $\txset=\emptyset$,  $\txset_1=\{\tx_1, tx_2\}$, and $\txset_2=\txset_1\cup \{\tx_3\}$. Here, \fcolorbox{black}{color1!50}{\rule{0pt}{3pt}\rule{3pt}{0pt}}\hspace{1pt} represents a sample $\tx_1$, \fcolorbox{black}{color2!50}{\rule{0pt}{3pt}\rule{3pt}{0pt}}\hspace{1pt} represents a sample $\tx_2$, and \fcolorbox{black}{color3!50}{\rule{0pt}{3pt}\rule{3pt}{0pt}}\hspace{1pt} represents a sample $\tx_3$.} \label{fig:set}
  \end{minipage}%
  \hfill
    \begin{minipage}[t]{0.48\linewidth}
  
  \centering
   \begin{tikzpicture}[scale=1]

  \draw[-stealth, thick] (0, 0) -- (0, -2.3)  node[midway,above, rotate=180,sloped,inner sep = 5 pt] {\small time};
  \node[text=gray] at (0.25,0.25)[](R1) {\small $k_1$} ;
  \node[text=gray] at (0.75,0.25)[](R2) {\small $k_2$} ;
  \node[text=gray] at (1.25,0.25)[](R3) {\small $k_3$} ;
  \node[text=gray] at (1.75,0.25)[](R4) {\small $k_4$} ;
  \draw[-stealth, thick] (0, 0) -- (2.3,0)  node[midway,above,sloped,inner sep = 15 pt] {\small storage keys};

  \draw [dotted] (0.5,0) -- (0.5,-2.2);
    \draw [dotted] (1,0) -- (1,-2.2);
  \draw [dotted] (1.5,0) -- (1.5,-2.2);
  \draw [dotted] (2,0) -- (2,-2.2);

    \fill[color1!20] (0.5,-0) rectangle (2,-2); 
    \fill[color1!50] (0.5,-0) rectangle (1,-0.5); 
    \fill[color1!50] (1,-0.5) rectangle (1.5,-1.5); 
    \fill[color1!50] (1.5,-1.5) rectangle (2,-2); 
    \draw[black] (0.5,-0) rectangle (2,-2);

\end{tikzpicture}\begin{tikzpicture}[scale=1]

  \draw[-stealth, thick] (0, 0) -- (0, -2.3)  node[midway,above, rotate=180,sloped,inner sep = 5 pt] {\small time};
  \node[text=gray] at (0.25,0.25)[](R1) {\small $k_1$} ;
  \node[text=gray] at (0.75,0.25)[](R2) {\small $k_2$} ;
  \node[text=gray] at (1.25,0.25)[](R3) {\small $k_3$} ;
  \node[text=gray] at (1.75,0.25)[](R4) {\small $k_4$} ;
  \draw[-stealth, thick] (0, 0) -- (2.3,0)  node[midway,above,sloped,inner sep = 15 pt] {\small storage keys};

  \draw [dotted] (0.5,0) -- (0.5,-2.2);
    \draw [dotted] (1,0) -- (1,-2.2);
  \draw [dotted] (1.5,0) -- (1.5,-2.2);
  \draw [dotted] (2,0) -- (2,-2.2);

    \fill[color2!20] (0.5,-0) rectangle (1.5,-1); 
    \fill[color2!50] (0.5,-0) rectangle (1,-0.5); 
    \fill[color2!50] (1,-0.5) rectangle (1.5,-1); 
    \draw[black]   (0.5,-0) rectangle (1.5,-1); 

    \fill[color3!20] (1,-1) rectangle (2,-2); 
    
    \fill[color3!50] (1,-1) rectangle (1.5,-1.5); 
    \fill[color3!50] (1.5,-1.5) rectangle (2,-2); 
    \draw[black]    (1,-1) rectangle (2,-2);
\end{tikzpicture}\vspace{-4pt}
    \caption{Illustration of \cref{prop:tx_bundling}, where $\txset=\emptyset$, \fcolorbox{black}{color1!20}{\rule{0pt}{3pt}\rule{3pt}{0pt}}\hspace{1pt} represents a sample $\tx_3$, while \fcolorbox{black}{color2!20}{\rule{0pt}{3pt}\rule{3pt}{0pt}}\hspace{1pt} represents a sample $\tx_1$ and  \fcolorbox{black}{color3!20}{\rule{0pt}{3pt}\rule{3pt}{0pt}}\hspace{1pt} represents a sample $\tx_2$. The darker shaded areas indicate when a transaction operates on a storage key.} \label{fig:decom}
  \end{minipage}%

\end{figure}

\begin{property}[Storage Key Monotonicity]\label{prop:resource_monotonicity}
    Given a set of transactions $\txset$ and two transactions $\tx_1 \simeq (\txtime_1, \txkeys_1)$ and $\tx_2 \simeq (\txtime_2, \txkeys_2)$, both not in $T$,
    such that $\txtime_1=\txtime_2$ and $\txkeys_1 \subseteq \txkeys_2$:
    $$\gas_{T \cup \{\tx_1\}}(\tx_1) \leq \gas_{T \cup \{\tx_2\}}(\tx_2). $$
\end{property}

\begin{property}[Time Monotonicity]\label{prop:time_monotonicity}
   Given a set of transactions $\txset$ and two transactions $\tx_1 \simeq (\txtime_1, \txkeys_1)$ and $\tx_2 \simeq (\txtime_2, \txkeys_2)$, both not in $T$,
   such that $\txtime_1 \leq \txtime_2$ and $\txkeys_1 = \txkeys_2$:
    $$\gas_{T \cup \{\tx_1\}}(\tx_1) \leq \gas_{T \cup \{\tx_2\}}(\tx_2). $$
\end{property}

The previous two properties fix one dimension while varying the other. One can also define a seemingly stronger property that allows both to vary, as follows:

\begin{property}[Storage Key-Time Monotonicity]\label{prop:resource_time_monotonicity}
    Given a set of transactions $\txset$ and two transactions $\tx_1 \simeq (\txtime_1, \txkeys_1)$ and $\tx_2 \simeq (\txtime_2, \txkeys_2)$, both not in $T$,
    such that $\txtime_1 \leq \txtime_2$ and $\txkeys_1 \subseteq \txkeys_2$:
    $$\gas_{T \cup \{\tx_1\}}(\tx_1) \leq \gas_{T \cup \{\tx_2\}}(\tx_2). $$
\end{property}

Intuitively, if $\tx_1 \lesssim \tx_2$, by which we mean $t(\tx_1) \leq t(\tx_2)$ and $K(\tx_1) \subseteq K(\tx_2)$, then $\tx_1$ should cost no less than $\tx_2$. However, an attentive reader will observe that the former two are collectively equivalent to the latter (the proof and all subsequent omitted proofs can be found in the appendix):

\begin{restatable}{lemma}{lemmaproponetwothree} \label{lemma:prop-1-2-3-equivalence} \cref{prop:resource_time_monotonicity} holds
if and only if 
\cref{prop:resource_monotonicity,prop:time_monotonicity} hold.
\end{restatable}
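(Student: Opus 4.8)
The plan is to prove the two implications separately, treating the forward direction as a direct specialization and the reverse as the substantive part. For the forward direction, I would show that \cref{prop:resource_time_monotonicity} implies \cref{prop:resource_monotonicity,prop:time_monotonicity} by observing that each of the latter is a restricted instance of the former. Indeed, \cref{prop:resource_monotonicity} is recovered by additionally imposing $\txtime_1 = \txtime_2$ (which in particular satisfies $\txtime_1 \leq \txtime_2$), and \cref{prop:time_monotonicity} is recovered by imposing $\txkeys_1 = \txkeys_2$ (which satisfies $\txkeys_1 \subseteq \txkeys_2$). In both cases the hypotheses of \cref{prop:resource_time_monotonicity} hold, so its conclusion transfers verbatim.

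The more interesting direction is the converse: assuming \cref{prop:resource_monotonicity,prop:time_monotonicity}, derive \cref{prop:resource_time_monotonicity}. Given $\txset$ together with $\tx_1 \simeq (\txtime_1, \txkeys_1)$ and $\tx_2 \simeq (\txtime_2, \txkeys_2)$ not in $\txset$, satisfying $\txtime_1 \leq \txtime_2$ and $\txkeys_1 \subseteq \txkeys_2$, the idea is to interpolate through a hybrid transaction that changes only one coordinate of the pair $(\txtime, \txkeys)$ at a time. Concretely, I would introduce an intermediate transaction $\tx_3 \simeq (\txtime_1, \txkeys_2)$ and chain two single-coordinate comparisons: since $\tx_1$ and $\tx_3$ share the time $\txtime_1$ and satisfy $\txkeys_1 \subseteq \txkeys_2$, \cref{prop:resource_monotonicity} yields $\gas_{\txset \cup \{\tx_1\}}(\tx_1) \leq \gas_{\txset \cup \{\tx_3\}}(\tx_3)$; and since $\tx_3$ and $\tx_2$ share the key set $\txkeys_2$ and satisfy $\txtime_1 \leq \txtime_2$, \cref{prop:time_monotonicity} yields $\gas_{\txset \cup \{\tx_3\}}(\tx_3) \leq \gas_{\txset \cup \{\tx_2\}}(\tx_2)$. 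Transitivity of $\leq$ then gives the desired $\gas_{\txset \cup \{\tx_1\}}(\tx_1) \leq \gas_{\txset \cup \{\tx_2\}}(\tx_2)$.

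The only point requiring care, and what I expect to be the main (albeit minor) obstacle, is the admissibility of the hybrid step: each invocation of a monotonicity property requires that both compared transactions lie outside $\txset$. Since transactions are objects distinct from their associated tuples (the paper explicitly allows distinct transactions to share a tuple), I would take $\tx_3$ to be a fresh transaction with tuple $(\txtime_1, \txkeys_2)$ chosen not to coincide with any element of $\txset$; this guarantees $\tx_3 \notin \txset$ and makes both applications legitimate, as $\tx_1, \tx_2 \notin \txset$ by assumption. Everything else is a two-step telescoping along the product order on $(\txtime, \txkeys)$, and routing instead through $\tx_3 \simeq (\txtime_2, \txkeys_1)$ would work symmetrically.
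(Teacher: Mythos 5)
Your proposal is correct and follows essentially the same route as the paper: the forward direction by specialization ($\txtime_1 = \txtime_2$ or $\txkeys_1 = \txkeys_2$), and the converse via the same hybrid transaction $\tx_3 \simeq (\txtime_1, \txkeys_2)$ not in $\txset$, chaining \cref{prop:resource_monotonicity} then \cref{prop:time_monotonicity}. Your extra care about choosing $\tx_3$ fresh (since distinct transactions may share a tuple) is a valid elaboration of what the paper handles by simply stipulating $\tx_3 \notin \txset$.
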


Let us now take a moment to briefly evaluate why \cref{prop:resource_monotonicity,prop:time_monotonicity,prop:resource_time_monotonicity} are not merely intuitive, but their violation can lead to harmful consequences and misaligned incentives: suppose $\tx_1 \lesssim \tx_2$ and $T$ is a set of transactions containing neither of the two. If $\gas_{T \cup \{\tx_1\}}(\tx_1) > \gas_{T \cup \{\tx_2\}}(\tx_2)$, a user intending to submit $\tx_1$ might instead pad $\tx_1$ with unnecessary instructions and declare a larger access list to decrease the gas consumption. This would paradoxically reduce the gas usage and, assuming a reasonable $\TFM$ is used to compute transaction fees, also lower the transaction fee.

For any of the three properties above, we say that the property is \emph{strictly} satisfied if for $\tx_1 \not\simeq \tx_2$ the conclusion inequality holds strictly.
Note that properties holding strictly are even more desirable with respect to the reasoning above: replacing a transaction with a ``larger'' one is then not only no better but actively worse.
Unsurprisingly, \cref{lemma:prop-1-2-3-equivalence} also holds for the strict versions of the properties:

\begin{restatable}{lemma}{lemmaproponetwothreeequiv}\label{lemma:prop-1-2-3-equivalence-strict} \cref{prop:resource_time_monotonicity} holds strictly
if and only if 
\cref{prop:resource_monotonicity,prop:time_monotonicity} hold strictly.
\end{restatable}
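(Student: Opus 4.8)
The plan is to follow the proof of \cref{lemma:prop-1-2-3-equivalence} and track carefully where each inequality becomes strict. For the forward direction, I would observe that strict \cref{prop:resource_monotonicity} and strict \cref{prop:time_monotonicity} are each a specialization of strict \cref{prop:resource_time_monotonicity}. Restricting to $\txtime_1 = \txtime_2$, any pair with $\tx_1 \not\simeq \tx_2$ must have $\txkeys_1 \subsetneq \txkeys_2$, which is exactly the strict hypothesis of \cref{prop:resource_monotonicity}; restricting to $\txkeys_1 = \txkeys_2$ forces $\txtime_1 < \txtime_2$, matching \cref{prop:time_monotonicity}. In both cases the conclusion of strict \cref{prop:resource_time_monotonicity} immediately yields the desired strict inequality.

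For the backward direction, assume strict \cref{prop:resource_monotonicity,prop:time_monotonicity} and take $\tx_1 \simeq (\txtime_1, \txkeys_1)$, $\tx_2 \simeq (\txtime_2, \txkeys_2)$ with $\txtime_1 \le \txtime_2$, $\txkeys_1 \subseteq \txkeys_2$, and $\tx_1 \not\simeq \tx_2$. As in the non-strict argument, I would introduce the intermediate transaction $\tx' \simeq (\txtime_1, \txkeys_2)$ and chain
$$\gas_{\txset \cup \{\tx_1\}}(\tx_1) \le \gas_{\txset \cup \{\tx'\}}(\tx') \le \gas_{\txset \cup \{\tx_2\}}(\tx_2),$$
where the first step is an instance of \cref{prop:resource_monotonicity} (common time $\txtime_1$, keys $\txkeys_1 \subseteq \txkeys_2$) and the second of \cref{prop:time_monotonicity} (common keys $\txkeys_2$, times $\txtime_1 \le \txtime_2$).

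The crux is promoting exactly one of these two inequalities to a strict one. Since $\tx_1 \not\simeq \tx_2$, at least one coordinate differs. If $\txkeys_1 \subsetneq \txkeys_2$, then $\tx_1 \not\simeq \tx'$ and the first step is strict by strict \cref{prop:resource_monotonicity}; otherwise $\txtime_1 < \txtime_2$, so $\tx' \not\simeq \tx_2$ and the second step is strict by strict \cref{prop:time_monotonicity}. The remaining step is justified as a non-strict inequality by noting that a strictly satisfied property also holds in the ordinary sense: when the two compared transactions share the same tuple the mechanism assigns them equal gas, and when they differ the strict inequality in particular gives $\le$. Either branch makes the chain strict overall, establishing strict \cref{prop:resource_time_monotonicity}.

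The step I expect to demand the most care is precisely this last one — arguing that each link of the chain is at least non-strict even when its two endpoints coincide as tuples (e.g. when $\tx' \simeq \tx_1$ because $\txkeys_1 = \txkeys_2$). This relies on the modeling convention that $\gas$ depends only on the tuples $(\txtime, \txkeys)$, so that tuple-equal transactions receive equal gas; this is the subtlety that distinguishes the strict statement from its non-strict counterpart and is where I would be most explicit.
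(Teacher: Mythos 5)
Your proposal is correct and takes essentially the same approach as the paper: the identical forward specialization, and the same intermediate transaction $\tx_3 \simeq (\txtime_1, \txkeys_2)$ chaining two applications of \cref{prop:resource_monotonicity,prop:time_monotonicity}. The only cosmetic difference is that you obtain strictness by a direct case split on which coordinate of $\tx_1, \tx_2$ differs (promoting the corresponding link of the chain to a strict inequality), whereas the paper assumes overall equality and derives a contradiction by forcing $\tx_1 \simeq \tx_3 \simeq \tx_2$ --- logically the same content, and your closing remark that tuple-equal transactions receive equal gas (so the non-strict links remain valid) is exactly the implicit convention the paper relies on when invoking the non-strict versions.
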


Next, we introduce another desirable monotonicity property, this time with a different emphasis: if two sets of transactions satisfy $\txset_1 \subseteq \txset_2$, the transactions in $\txset_1$ should collectively consume no more gas than those in $\txset_2$
(\cref{prop:set_inclusion,fig:set}). To formalize this, given a set of transactions $\txset$ and a subset $\txset' \subseteq \txset$, write $\gas_{\txset}(\txset') := \sum_{\tx' \in \txset'} \gas_{\txset}(\tx')$ for the total gas consumed by the transactions in $\txset'$ when included in the set $\txset$ that constitutes a block.

\begin{property}[Set Inclusion]\label{prop:set_inclusion}
    Given a set of transactions  $\txset$ and two sets of transactions $\txset_1 \subseteq \txset_2$, disjoint from $\txset$:
    \[
        \gas_{\txset \cup \txset_1}(\txset_1) \leq \gas_{\txset \cup \txset_2}(\txset_2).
    \]
\end{property}

To understand why this property is desirable, consider a \GCM for which it does not hold. Then, there must be a scenario where a set of transactions can reduce their total gas consumption by adding additional transactions. This situation could be exploited through collusion by the users originating these transactions. Importantly, such a possibility is undesirable, as it would primarily benefit sophisticated users capable of orchestrating such arrangements.

Similarly to before, we say that \cref{prop:set_inclusion} holds \emph{strictly} if for $\txset_1 \neq \txset_2$ the inequality in the conclusion holds strictly, which is desirable for reasons similar to those discussed above.

We now move on to more complex properties. Given two transactions $\tx_1 \simeq (\txtime_1, \txkeys_1)$ and $\tx_2 \simeq (\txtime_2, \txkeys_2),$ a transaction $\tx_3$ is their \emph{sequential composition} (or, more simply, \emph{concatenation}), 
if it executes the steps of $\tx_1$ followed by the steps of $\tx_2$. Note that, in this case $\tx_3 \simeq (t_1 + t_2, K_1 \cup K_2)$.
Our next property states that the concatenation of two transactions should consume no less gas than submitting them individually (\cref{prop:tx_bundling,fig:decom}). The two individual transactions perform the same actions as their concatenation, but not atomically, with no guarantee of their relative ordering or control over what happens between them. Naturally, enforcing atomicity and ordering limits the set of admissible concurrent schedules and requires storage keys to remain locked over longer continuous timespans. In particular, $\tx_3$
requires the storage keys in $\txkeys_1 \cup \txkeys_2$ to be locked over a continuous span of $\txtime_1 + \txtime_2$ units, while $\tx_1$ and $\tx_2$ submitted individually only require exclusive access to storage keys in $\txkeys_1$ for $t_1$ units and to storage keys in $\txkeys_2$ for $t_2$ units. Hence, the ``bigger'' transaction is at least as hard to schedule as its two constituent ``parts'' and should hence consume no less gas.

\begin{property}[Transaction Bundling]\label{prop:tx_bundling}
    Consider a set of transactions $\txset$ and three transactions $\tx_1, \tx_2, \tx_3 \notin T$ such that $\tx_3$ is the concatenation of $\tx_1$ and $\tx_2$, then:
    \[\gas_{\txset \cup \{\tx_1, \tx_2\}}(\tx_1)+\gas_{\txset \cup \{\tx_1, \tx_2\}}(\tx_2) \leq \gas_{\txset \cup \{\tx_3\}}(\tx_3).\]
 \end{property}

Let us again consider the risks of having a \GCM that does not satisfy the previous property. In such a scenario, a group of users could collude to collectively consume less gas by combining their transactions into a single transaction rather than processing them individually. This outcome would be undesirable, particularly because it disproportionately benefits sophisticated users, as we outlined before. Note that this could even be the case for a single user wanting to submit multiple transactions.

We say that \cref{prop:tx_bundling} holds \emph{strictly} if the inequality in the conclusion holds strictly, which is again more desirable than the basic version of the property.

Next, we would like to formalize the intuitive idea that a transaction's gas consumption fairly reflects its impact on the execution time.
To do so, we first need to formalize scheduling more precisely. Let $n \geq 2$ be the number of available threads. 

\begin{definition}
    A \emph{scheduler} (for $n$ threads) takes as input a set of transactions $T$, and outputs a \emph{concurrent schedule} using at most $n$ threads to execute all transactions in $T$. This schedule specifies the operations each thread should perform and the order in which they should be performed.
\end{definition}

The following conditions should hold for any generated schedule:
\begin{itemize}[topsep=0pt,itemsep=-1ex,partopsep=1ex,parsep=1ex]
    \item Each transaction is assigned to a single thread;
    \item A thread can work on only one transaction at a time;
    \item There is no preemption: once a thread starts executing a transaction, it completes the transaction without context switching;
    \item Transactions with overlapping storage key access sets cannot be executed in parallel: one must finish before the other can begin.
\end{itemize}

\begin{figure}[t]
\centering

\begin{tikzpicture}[scale=1]

  \draw[-stealth, thick] (0, 0) -- (0, -4.5)  node[midway,above, rotate=180,sloped,inner sep = 5 pt] {\small time};

  \node[text=gray] at (0.25,0.25)[](R1) {\small $k_1$} ;
  \node[text=gray] at (0.75,0.25)[](R2) {\small $k_2$} ;
  \node[text=gray] at (1.25,0.25)[](R3) {\small $k_3$} ;
  \node[text=gray] at (1.75,0.25)[](R4) {\small $k_4$} ;
  \node[text=gray] at (2.25,0.25)[](R5) {\small $k_5$} ;
  \node[text=gray] at (2.75,0.25)[](R6) {\small $k_6$} ;
  \node[text=gray] at (3.25,0.25)[](R7) {\small $k_7$} ;
  \node[text=gray] at (3.75,0.25)[](R8) {\small $k_8$} ;
  \draw[-stealth, thick] (0, 0) -- (4.5,0)  node[midway,above,sloped,inner sep = 15 pt] {\small storage keys};

    \draw [dotted] (0.5,0) -- (0.5,-4.2);
    \draw [dotted] (1,0) -- (1,-4.2);
    \draw [dotted] (1.5,0) -- (1.5,-4.2);
    \draw [dotted] (2,0) -- (2,-4.2);
    \draw [dotted] (2.5,0) -- (2.5,-4.2);
    \draw [dotted] (3,0) -- (3,-4.2);
    \draw [dotted] (3.5,0) -- (3.5,-4.2);
    \draw [dotted] (4,0) -- (4,-4.2);

    \fill[color1!50] (0.5,-0) rectangle (4,-1);
    \draw[black]  (0.5,-0) rectangle (4,-1);
    
    \fill[color2!50] (0.5,-1) rectangle (1.5,-3);
    \draw[black]  (0.5,-1) rectangle (1.5,-3);
    
    \fill[color3!50] (1.5,-1) rectangle (3,-3.5);
    \draw[black]  (1.5,-1) rectangle (3,-3.5);
    
    \fill[color4!50] (3,-1) rectangle (4,-2);
    \draw[black]  (3,-1) rectangle (4,-2);

\end{tikzpicture}\hspace{1cm}\begin{tikzpicture}[scale=1]

  \draw[-stealth, thick] (0, 0) -- (0, -4.5)  node[midway,above, rotate=180,sloped,inner sep = 5 pt] {\small time};

  \node[text=gray] at (0.25,0.25)[](R1) {\small $k_1$} ;
  \node[text=gray] at (0.75,0.25)[](R2) {\small $k_2$} ;
  \node[text=gray] at (1.25,0.25)[](R3) {\small $k_3$} ;
  \node[text=gray] at (1.75,0.25)[](R4) {\small $k_4$} ;
  \node[text=gray] at (2.25,0.25)[](R5) {\small $k_5$} ;
  \node[text=gray] at (2.75,0.25)[](R6) {\small $k_6$} ;
  \node[text=gray] at (3.25,0.25)[](R7) {\small $k_7$} ;
  \node[text=gray] at (3.75,0.25)[](R8) {\small $k_8$} ;
  \draw[-stealth, thick] (0, 0) -- (4.5,0)  node[midway,above,sloped,inner sep = 15 pt] {\small storage keys};

    \draw [dotted] (0.5,0) -- (0.5,-4.2);
    \draw [dotted] (1,0) -- (1,-4.2);
    \draw [dotted] (1.5,0) -- (1.5,-4.2);
    \draw [dotted] (2,0) -- (2,-4.2);
    \draw [dotted] (2.5,0) -- (2.5,-4.2);
    \draw [dotted] (3,0) -- (3,-4.2);
    \draw [dotted] (3.5,0) -- (3.5,-4.2);
    \draw [dotted] (4,0) -- (4,-4.2);

    \fill[color1!50] (0.5,-0) rectangle (4,-1);
    \draw[black]  (0.5,-0) rectangle (4,-1);
    
    \fill[color2!50] (0.5,-1) rectangle (1.5,-3);
    \draw[black]  (0.5,-1) rectangle (1.5,-3);
    
    \fill[color3!50] (1.5,-1) rectangle (3,-3.5);
    \draw[black]  (1.5,-1) rectangle (3,-3.5);
    
    \fill[color4!50] (3,-3) rectangle (4,-4);
    \draw[black]  (3,-3) rectangle (4,-4);

\end{tikzpicture}\vspace{-4pt}
\caption{Illustration of optimal schedules for a set of four transactions: $\tx_1 \simeq (2,\{k_2,\dots,k_8\})$  (shown in \fcolorbox{black}{color1!50}{\rule{0pt}{3pt}\rule{3pt}{0pt}}\hspace{1pt}), $\tx_2 \simeq (4,\{k_2,k_3\})$  (shown in \fcolorbox{black}{color2!50}{\rule{0pt}{3pt}\rule{3pt}{0pt}}\hspace{1pt}), $\tx_3 \simeq (5,\{k_4,k_5,k_6\})$  (shown in \fcolorbox{black}{color3!50}{\rule{0pt}{3pt}\rule{3pt}{0pt}}\hspace{1pt}), and $\tx_4 \simeq (2,\{k_7,k_8\})$  (shown in \fcolorbox{black}{color4!50}{\rule{0pt}{3pt}\rule{3pt}{0pt}}\hspace{1pt}). In the left plot, we show the optimal schedule for $n=3$, and in the right plot for $n=2$. Notice how for $n=2$, we cannot schedule the three transactions $\tx_2$, $\tx_3$, and $\tx_4$ to be executed in parallel even though they access pairwise-disjoint sets of storage keys.} \label{fig:schedule}

\end{figure}

Note that some of these conditions are natural for scheduling in general, while others arise from us assuming the \emph{simple lock-based execution policy}.\footnote{Using locks is one way to enforce this policy, but in our case --- where the contents of a block are known before execution commences --- it can also be achieved without locks, as long as the execution environment ensures that threads strictly follow a pre-determined schedule. We chose this name because it is largely suggestive of the intended semantics.} For our purposes, we are not concerned with which thread executes which transaction, but only that no more than $n$ transactions ever execute simultaneously (so we can draw concurrent schedules as in \cref{fig:schedule}, which illustrates the concept). To determine gas consumption, our notation will need to capture even less --- given a scheduler $s$ (for $n$ threads), we write $v^s(T)$ for the \emph{makespan} of the schedule produced by $s$ for the set of transactions $T$ (i.e., the time required to execute the schedule in parallel using $n$ threads). It is instructive to read the paper having in mind as a prime example $s$ being the \emph{optimal scheduler} (for $n$ threads), which returns a schedule for $n$ threads
that minimizes the makespan. However, implementing such a scheduler is computationally prohibitive in practice, so greedy heuristics are typically used instead. To make our results apply even to non-optimal schedulers, we assume a set of minimal, reasonable properties for the scheduler:

\begin{enumerate}[label=(S\arabic*),topsep=0pt,itemsep=-1ex,partopsep=1ex,parsep=1ex,leftmargin=2.25em]
    \item \label{sched:monot-t} \emph{Monotonicity in $\txset$:} for any $\txset \subseteq \txset'$, we have $v^s(\txset) \leq v^s(\txset')$. Intuitively, scheduling no fewer transactions takes no less time.
    \item \label{sched:monot-bundle} \emph{Monotonicity under bundling:} consider any set of transactions $\txset$ and three transactions $\tx_1, \tx_2, \tx_3 \notin \txset$, such that $\tx_3$ is the concatenation of $\tx_1$ and $\tx_2$, then we have $v^s(\txset \cup \{\tx_1, \tx_2\}) \leq v^s(\txset \cup \{\tx_3\})$. Intuitively, replacing two transactions by their concatenation makes scheduling no easier. 
    \item \label{sched:monot-t-R} \emph{Monotonicity in $t$ and $K$:} consider any set of transactions $\txset$ and two transactions $\tx_1 \simeq (\txtime_1, \txkeys_1)$ and $\tx_2 \simeq (\txtime_2, \txkeys_2)$, both not in $T$,
    such that $\txtime_1 \leq \txtime_2$ and $\txkeys_1 \subseteq \txkeys_2$, i.e., $\tx_1 \lesssim \tx_2$, then we have  
    $v(T \cup \{\tx_1\}) \leq v(T \cup \{\tx_2\})$. Intuitively, ``larger'' transactions are no easier to schedule. 
    \item \label{sched:empty-set} \emph{Empty set:} $v(\varnothing) = 0$.
\end{enumerate}

The optimal scheduler can be easily seen to satisfy these properties.\footnote{All but the second straightforward, while for the second, any admissible schedule for $\txset \cup \{\tx_3\}$ can be turned into a same-makespan admissible schedule for $\txset \cup \{\tx_1, \tx_2\}$ by replacing $\tx_3$ by $\tx_1$ followed immediately by $\tx_2$.} \emph{Our positive results will apply to any scheduler satisfying the properties, while our negative results will be for the optimal scheduler itself.} Henceforth, we drop the $s$ superscript for brevity and follow this convention for who $s$ is.

An attentive reader may notice that \ref{sched:monot-t}--\ref{sched:monot-t-R} partially resemble \cref{prop:resource_monotonicity,prop:time_monotonicity,prop:resource_time_monotonicity,prop:set_inclusion,prop:tx_bundling} for \GCMs. This resemblance is not coincidental, but it is important to emphasize that they address different aspects: the former are properties of the scheduler, while the latter are properties of the \GCM. A \GCM can, in fact, satisfy \cref{prop:resource_monotonicity,prop:time_monotonicity,prop:resource_time_monotonicity,prop:set_inclusion,prop:tx_bundling} without depending on the scheduler at all (e.g., the current mechanism). Conversely, for mechanisms defined in terms of the scheduler, \ref{sched:monot-t}--\ref{sched:monot-t-R} play a crucial role in establishing their properties, including \cref{prop:resource_monotonicity,prop:time_monotonicity,prop:resource_time_monotonicity,prop:set_inclusion,prop:tx_bundling}.

Armed as such, we now return to our latest goal: formalizing the idea that a transaction's gas consumption should fairly reflect its impact on execution time. We do this as follows:

\begin{property}[Scheduling Monotonicity]\label{prop:schedule_monotonicity}
    Given a set of transactions $\txset$ and two transactions $\tx_1$ and $\tx_2$, both not in $T$,
    such that 
    $v(\txset\cup\{\tx_1\}) < v(\txset\cup\{\tx_2\})$:\footnote{Interestingly, unlike our earlier properties, writing $v(\txset\cup\{\tx_1\}) \leq v(\txset\cup\{\tx_2\})$ here may be too strong, as applying the property twice would then imply that if $v(\txset\cup\{\tx_1\}) = v(\txset\cup\{\tx_2\})$, then $\gas_{\txset \cup \{\tx_1\}}(\tx_1) = \gas_{\txset \cup \{\tx_2\}}(\tx_2)$, which is not necessarily desirable. Writing $v(\txset\cup\{\tx_1\}) \leq v(\txset\cup\{\tx_2\})$ would also unintentionally make the strict and non-strict versions of the property incomparable.} 
    \[ \gas_{\txset \cup \{\tx_1\}}(\tx_1) \leq \gas_{\txset \cup \{\tx_2\}}(\tx_2).\]
\end{property}

Intuitively, transactions with higher marginal contributions to the execution time should consume no less gas.\footnote{Technically, what we wrote above in \cref{prop:schedule_monotonicity} are not marginal contributions, but can be made to be by subtracting $v(\txset)$ from both sides of the inequality in the antecedent.} As usual, we define a \emph{strict} version of this property, where the latter inequality also becomes strict (i.e., higher marginal contributions in execution time imply higher gas consumption). One might be tempted to believe that Scheduling Monotonicity implies Storage Key-Time Monotonicity. However, proving this requires a non-strict inequality in the premise $v(\txset\cup\{\tx_1\}) < v(\txset\cup\{\tx_2\})$.

There is a second way in which gas consumption should adequately indicate the effort required to execute transactions (\cref{prop:efficiency}).  Specifically, the gas consumption of all transactions in a block should \emph{collectively} account for the total time needed to execute the block. This can be seen as properly tracking the time consumed by the execution environment to execute the block.

\begin{property}[Efficiency]\label{prop:efficiency} Consider a set of transactions $\txset$ and recall the definition $\gas_{\txset}(\txset) = \sum_{\tx \in \txset}\gas_{\txset}(\tx)$, then:
\[
    \gas_{\txset}(\txset) = v(\txset).
\] 
\end{property}

Next, we introduce two practical properties. The first requires that transaction submitters be able to estimate a transaction's gas consumption in advance. We formalize this by requiring that $\gas_\txset(\tx)$ does not depend on $\txset$ (\cref{prop:Composability}).

\begin{property}[Easy Gas Estimation]\label{prop:Composability} Given two sets of transactions $\txset_1$ and $\txset_2$ and a transaction $\tx$ belonging to neither $T_1$ nor $T_2$:
\[
    \gas_{\txset_1 \cup \{\tx\}}(\tx) = 
    \gas_{\txset_2 \cup \{\tx\}}(\tx).
\]
\end{property}

This property ensures a good user experience by making it easy for users to estimate gas usage. If the gas consumption of a transaction depended on the remaining transactions in the block, this estimation would become significantly more complex. In general, we aim to keep gas estimation straightforward to avoid giving an advantage to more sophisticated users. Additionally, as we will see later, a \GCM satisfying this property can be seamlessly composed with existing \TFMs, retaining their desirable properties (see \cref{sec:outlook}). Sadly, as one might have already guessed, Easy Gas Estimation is incompatible with Scheduling Monotonicity (except for \emph{constant} mechanisms, i.e., \GCMs that return a constant independent of $T$ and $\tx$), and also incompatible with Efficiency:

\begin{theorem}\label{thm:impossibility} Easy Gas Estimation (\cref{prop:Composability}) is:
\begin{enumerate}
    \item Incompatible with Scheduling Monotonicity (\cref{prop:schedule_monotonicity}) unless using a constant \GCM.\footnote{Constant \GCMs satisfy Scheduling Monotonicity, but not Strict Scheduling Monotonicity.}
    \item Incompatible with Efficiency (\cref{prop:efficiency}). 
\end{enumerate}
\end{theorem}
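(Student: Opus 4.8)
The plan is to treat the two incompatibilities separately. In both cases the key simplification comes from Easy Gas Estimation: it lets me write $\gas(\tx)$ for the gas of $\tx$ unambiguously, since this quantity no longer depends on the surrounding block $\txset$. I would state this reduction once at the start and then use it throughout.

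Part~(2) is the routine one, so I would dispatch it first. Applying Efficiency to a singleton block $\txset = \{\tx\}$ with $\tx \simeq (t, K)$ gives $\gas(\tx) = v(\{\tx\}) = t$, i.e. combined with Easy Gas Estimation the mechanism is forced to coincide with the current \GCM, $\gas(\tx) = t(\tx)$ for every $\tx$. It then suffices to exhibit one block whose makespan is strictly below the sum of execution times. Taking two transactions $\tx_1 \simeq (t_1, K_1)$, $\tx_2 \simeq (t_2, K_2)$ on disjoint single-key sets (available since $\keyset$ is infinite) and $n \geq 2$, the optimal scheduler runs them in parallel, so $v(\{\tx_1,\tx_2\}) = \max(t_1,t_2) < t_1 + t_2 = \gas(\tx_1) + \gas(\tx_2)$, contradicting Efficiency.

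Part~(1) is the crux. Assuming both Easy Gas Estimation and Scheduling Monotonicity, I aim to show $\gas$ is constant. Under Easy Gas Estimation, Scheduling Monotonicity reads: $v(\txset\cup\{\tx_1\}) < v(\txset\cup\{\tx_2\})$ implies $\gas(\tx_1) \leq \gas(\tx_2)$. The central lemma I would prove is that any two transactions $\tx \simeq (t,K)$ and $\tx' \simeq (t',K')$ with $K \cap K' = \varnothing$ have $\gas(\tx) = \gas(\tx')$. For $\gas(\tx) \leq \gas(\tx')$, I put a single long ``blocker'' $\simeq (M, K')$ into $\txset$ with $M$ large: it conflicts with $\tx'$ (forcing sequential execution, makespan $M + t'$) but not with $\tx$ (which runs in parallel on a second thread, makespan $M$). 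Hence $v(\txset\cup\{\tx\}) = M < M + t' = v(\txset\cup\{\tx'\})$ and Scheduling Monotonicity applies; swapping the roles of $K$ and $K'$ yields the reverse inequality, hence equality.

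Finally, to lift the lemma to an \emph{arbitrary} pair $\tx_1, \tx_2$, whose key sets may overlap (so the blocker trick does not apply directly, and when $\tx_1 \lesssim \tx_2$ the scheduler monotonicity in $t$ and $K$, property~\ref{sched:monot-t-R}, even forces the makespan order one way and blocks a direct reversal), I would route through a third transaction $\tx_3$ on a fresh key set disjoint from $K(\tx_1) \cup K(\tx_2)$; the lemma then gives $\gas(\tx_1) = \gas(\tx_3) = \gas(\tx_2)$, so $\gas$ is constant. Combined with the immediate observation that constant \GCMs do satisfy both properties, this identifies constant mechanisms as the precise exception. I expect the disjoint-key lemma to be the main obstacle: the whole argument hinges on engineering a block in which adding one transaction strictly raises the makespan while adding the other does not, and the blocker construction is exactly what makes this strict separation --- which Scheduling Monotonicity needs in its premise --- achievable in both directions.
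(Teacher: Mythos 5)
Your proposal is correct and matches the paper's proof in essentially every respect: part~(2) is the same singleton-block reduction to the current \GCM{} followed by a two-transaction parallel counterexample, and part~(1) uses the same blocker construction to force a strict makespan gap, then routes an arbitrary pair through a third transaction on fresh storage keys. The only cosmetic difference is that the paper proves the key lemma for \emph{incomparable} key sets (neither contained in the other) using a single-key blocker $k \in K_1 \setminus K_2$, whereas you prove it for disjoint key sets with a blocker on all of $K'$ --- but since the routing step only ever invokes the disjoint case, the two arguments are interchangeable.
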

\begin{proof} 

Consider a mechanism $M$ with Easy Gas Estimation; i.e., $\gas^M(\tx)$ is meaningful without a subscript for the set of transactions $T$ in the block. Then:

\begin{enumerate}[topsep=0pt,itemsep=-1ex,partopsep=1ex,parsep=1ex]
    \item Assume that $M$ has Scheduling Monotonicity; we will show that $M$ is a constant mechanism.

    Call two transactions $\tx_1 \simeq (t_1, K_1)$ and $\tx_2 \simeq (t_2, K_2)$ \emph{incomparable} if neither $K_1 \subseteq K_2$ nor $K_2 \subseteq K_1$. As a first step, we will show that if $\tx_1$ and $\tx_2$ are incomparable, then $\gas^M(\tx_1) = \gas^M(\tx_2)$. To show this, assume that $K_1 \not\subseteq K_2$. We will show that then $\gas^M(\tx_1) \geq \gas^M(\tx_2)$. Exchanging the roles of $\tx_1$ and $\tx_2$ will then give the conclusion. Let $k \in K_1 \setminus K_2$ and $\tx_3$ be another transaction such that $\tx_3 \simeq (t_3, \{k\})$ for some $t_3 > t_2$, and define $T = \{\tx_3\}$. Then, for any number of threads $n \geq 2$ and any scheduler that is optimal for two transactions, we have $v(T \cup \{\tx_1\}) = t_1 + t_3 > t_3 = v(T \cup \{\tx_2\})$. By Scheduling Monotonicity, this implies that $\gas^M(\tx_1) \geq \gas^M(\tx_2)$.
    
    We now know that $M$ associates the same gas consumption to any pair of incomparable transactions. Armed as such, consider two arbitrary transactions $\tx_1 \simeq (t_1, K_1)$ and $\tx_2 \simeq (t_2, K_2)$. Let $k$ be an arbitrary storage key \emph{not} in $K_1 \cup K_2$ and $\tx_3$ be a transaction such that $\tx_3 \simeq (1, \{r\})$. Since $K_1$ and $K_2$ are non-empty, one can easily see that $\tx_3$ is incomparable with both $\tx_1$ and $\tx_2$, from which $\gas^M(\tx_1) = \gas^M(\tx_3) = \gas^M(\tx_2)$.

    \item Assume for a contradiction that $M$ has Efficiency, then for any transaction $\tx \simeq (\txtime, \txkeys)$ we have $\gas^M(\tx) = \gas_\varnothing^M(\{\tx\}) = v(\{\tx\}) = \txtime$. Hence, by definition, $M$ is the current mechanism, which can be easily seen to not satisfy Efficiency: assume $n \geq 2$ threads and consider the set of transactions $T = \{\tx_1, \tx_2\}$ where $\tx_1 \simeq (1, \{k_1\})$ and $\tx_2 \simeq (1, \{k_2\})$. In this case, $\gas^M(T) = \gas^M(\tx_1) + \gas^M(\tx_1) = 1 + 1 = 2 \neq 1 = v(T)$. \qedhere
\end{enumerate}
\end{proof}

Our second practical property requires that gas consumption be efficiently computable, i.e., in polynomial time (\cref{prop:computational_complexity}). A \GCM that does not satisfy this property would be unsuitable for the blockchain context, where gas computation is intended to be a straightforward component.

\begin{property}[Poly-time Computable]\label{prop:computational_complexity}
    There exists a polynomial-time algorithm that takes as input a transaction set $\txset$ and a transaction $\tx$ and outputs 
    $\gas_{\txset \cup \{\tx\}}(\tx)$.
\end{property}

\section{\GCM Proposals}\label{sect:our-gcms}

We now propose multiple \GCM designs. Motivated by the incompatibilities in \cref{thm:impossibility}, these fall into two categories:

\begin{enumerate}[label=(C\arabic*),topsep=0pt,itemsep=-1ex,partopsep=1ex,parsep=1ex,leftmargin=2.45em]
    \item Mechanisms with Easy Gas Estimation, in which each transaction's gas consumption is computed in isolation. Such mechanisms tend to be both straightforward and attractive but can achieve neither Scheduling Monotonicity\footnote{Except for constant \GCMs.} nor Efficiency.
    \item Mechanisms \emph{without} Easy Gas Estimation, which, given a set of transactions $\txset$, rely on $v(T)$ or, more generally, $\left(v(\txset')\right)_{\txset' \subseteq \txset'}$ to holistically calculate gas consumptions for the block $\txset$, requiring knowledge of the entire block. Instead, these mechanisms aim to achieve Scheduling Monotonicity and/or Efficiency.
\end{enumerate}

\subsection{Mechanisms with Easy Gas Estimation}

We already introduced the \emph{current} \GCM (\cref{def:current-gas-comp-mech}).  For the next mechanism, we assume a globally available vector of \emph{positive} weights for the storage keys $(w_k)_{k \in \keyset}$.
For instance, these weights could all be 1. Alternatively, higher-weight storage keys could correspond to storage keys under higher (historical) demand. For our purposes, we only need to assume that the weights are known and do not depend on the block being built. In practice, one could update the weights between blocks to accurately reflect storage key demand (the exact details are not relevant here; see \cref{sec:outlook} for further discussion). Given the weights, we define:

\begin{definition}[Weighted Area \GCM]\label{def:weighted-area-gas-comp-mech}
    Given a set of transactions $\txset$ and a transaction $\tx \in T$ with $\tx \simeq (t, K)$, the \emph{Weighted Area} \GCM computes the amount of gas used by $\tx$ as follows:%
    \begin{equation}
        \gas_{\txset}^\textit{WA}(\tx) := t \cdot \left(1 + \sum_{k \in K} w_k\right)
        \label{def:weighted-area-eq}
    \end{equation}
    Since this does not depend on $T$, we often drop the subscript.
\end{definition}\vspace{2pt}

To gain intuition, it is instructive to consider the former case, where all weights are 1, i.e., the \emph{unweighted area} mechanism, in which case \cref{def:weighted-area-eq} becomes $\gas_{\txset}^\textit{WA}(\tx) = \txtime \cdot \left(1 + |\txkeys|\right)$. 
This mechanism can also be viewed as the addition of two terms: the \emph{current term}, i.e., the gas consumption of $\tx$ under the \emph{current} \GCM, namely $t$, and the \emph{area term}, i.e., the area of a $t \times |K|$ rectangle (which is also how we draw transactions in our diagrams). The \emph{current term} is helpful to ensure no transaction ever costs too little when the weights of all storage keys accessed by a transaction are too close to zero.\footnote{This term has a practical motivation. In reality, transactions will be scheduled on a small finite number of threads. Thus, even if they do not access any high-demand storage key, they occupy space in the schedule proportional to their execution time. Note, moreover, that this term could technically be omitted, and our results would still hold.}  

 The \emph{area term}, on the other hand, which is the main component, intuitively corresponds to ``negated throughput.'' That is, executing the transaction requires holding $|K|$ locks for $t$ time units, during which no other transactions requiring any of those storage keys can execute. In \cref{fig:area}, we illustrate this idea. Transaction $\tx$ prevents the execution of other transactions across its entire storage key set but only utilizes each storage key for a short period. While the current \GCM charges a transaction based solely on its total computation time (i.e., the height of the rectangle), the weighted area \GCM also accounts for the storage keys for which it ``negated throughput'', i.e., the area occupied by the transaction. This occupied area prevents other transactions using the same storage keys from being scheduled in parallel.
\noindent
\begin{minipage}[b]{0.42\textwidth}
    \quad Along the same line of reasoning, to further reinforce why this mechanism is a reasonable choice, consider a transaction set $T$. The sum $\sum_{\tx \in T} t(\tx) \cdot |\txkeys(\tx)|$ represents the ``total area'' of transactions in $\txset$. Dividing this term by the number of storage keys used in $\txset$ provides a lower bound on $v(\txset)$, serving as a rough proxy that does not require knowledge of $T$ when computing individual gas consumptions. Finally, we note that, even in the case of non-unit weights, the term \emph{weighted area} remains meaningful, as the area term still corresponds to the area of the respective rectangle in our diagrams if we give the column of each storage key $k \in \keyset$ a width of $w_k$.
\end{minipage}%
\hfill
\begin{minipage}[b]{0.5\textwidth}
    \centering
    \vspace{-8pt}
    \begin{tikzpicture}[scale=1]

  \draw[-stealth, thick] (0, 0) -- (0, -4.5)  node[midway,above, rotate=180,sloped,inner sep = 20 pt] {\small time};
    \draw[ thick] (-0.1, -3) -- (0, -3)  node[above, rotate=180,sloped,inner sep = 20 pt] {};
    \node[text=gray] at (-0.3,-3)[](R1) {\small 6} ;
  \node[text=gray] at (0.25,0.25)[](R1) {\small $k_1$} ;
  \node[text=gray] at (0.75,0.25)[](R2) {\small $k_2$} ;
  \node[text=gray] at (1.25,0.25)[](R3) {\small $k_3$} ;
  \node[text=gray] at (1.75,0.25)[](R4) {\small $k_4$} ;
  \node[text=gray] at (2.25,0.25)[](R5) {\small $k_5$} ;
  \node[text=gray] at (2.75,0.25)[](R6) {\small $k_6$} ;
  \node[text=gray] at (3.25,0.25)[](R7) {\small $k_7$} ;
  \node[text=gray] at (3.75,0.25)[](R8) {\small $k_8$} ;
  \draw[-stealth, thick] (0, 0) -- (4.5,0)  node[midway,above,sloped,inner sep = 15 pt] {\small storage keys};

    \draw [dotted] (0.5,0) -- (0.5,-4.2);
    \draw [dotted] (1,0) -- (1,-4.2);
    \draw [dotted] (1.5,0) -- (1.5,-4.2);
    \draw [dotted] (2,0) -- (2,-4.2);
    \draw [dotted] (2.5,0) -- (2.5,-4.2);
    \draw [dotted] (3,0) -- (3,-4.2);
    \draw [dotted] (3.5,0) -- (3.5,-4.2);
    \draw [dotted] (4,0) -- (4,-4.2);

    \node at (2,-3.67) [circle,fill,inner sep=1pt]{};
    \node at (2,-3.8) [circle,fill,inner sep=1pt]{};
    \node at (2,-3.93) [circle,fill,inner sep=1pt]{};

    \fill[color1!20] (0.5,-0) rectangle (3.5,-3); 
    \fill[color1!50] (0.5,-0) rectangle (1,-0.5);
    \fill[color1!50] (1,-0.5) rectangle (1.5,-1);
    \fill[color1!50] (1.5,-1) rectangle (2,-1.5);
    \fill[color1!50] (2,-1.5) rectangle (2.5,-2);
    \fill[color1!50] (2.5,-2) rectangle (3,-2.5);
    \fill[color1!50] (3,-2.5) rectangle (3.5,-3);

    \draw[black]  (0.5,-0) rectangle (3.5,-3); 

    \fill[color2!50] (0.5,-3) rectangle (1,-3.5);
    \draw[black]  (0.5,-3) rectangle (1,-3.5);

    \fill[color3!50] (1,-3) rectangle (1.5,-3.5);
    \draw[black]  (1,-3) rectangle (1.5,-3.5);

    \fill[color4!50] (1.5,-3) rectangle (2,-3.5);
    \draw[black]  (1.5,-3) rectangle (2,-3.5);

    \fill[color5!50] (2,-3) rectangle (2.5,-3.5);
    \draw[black]   (2,-3) rectangle (2.5,-3.5);

    \fill[color6!50] (2.5,-3) rectangle (3,-3.5);
    \draw[black]   (2.5,-3) rectangle (3,-3.5);

    \fill[color7!50] (3,-3) rectangle (3.5,-3.5);
    \draw[black]  (3,-3) rectangle (3.5,-3.5);







\end{tikzpicture}\vspace{-10pt}
    \captionof{figure}{Illustration of a worst-case transaction in terms of parallelizability. Transaction $\tx \simeq (6,\{k_2,\dots,k_7\})$ (shown in \fcolorbox{black}{color1!20}{\rule{0pt}{3pt}\rule{3pt}{0pt}}\hspace{1pt}) takes 6 units of time to execute and accesses 6 storage keys. However, the transaction only operates on each storage key for one unit of time each (shown in \fcolorbox{black}{color1!50}{\rule{0pt}{3pt}\rule{3pt}{0pt}}\hspace{1pt}).}
    \label{fig:area}\vspace{0pt}
\end{minipage}

\subsection{Mechanisms \emph{without} Easy Gas Estimation}

We now switch gears towards mechanisms without Easy Gas Estimation that instead aim to achieve Scheduling Monotonicity and/or Efficiency. Intuitively, for a set of transactions $\txset$, such mechanisms should start from $v(T)$ and take each transaction's marginal contribution towards $v(T)$ as its gas consumption. This, however, has to be done carefully, as, e.g., simply looking for each transaction $\tx \in \txset$ at $v(\txset) - v(\txset \setminus \{\tx\})$ does not help. Note that one can easily construct cases where removing any one transaction from $\txset$ does not change $v(\txset)$, so all reported numbers will be 0. However, all these numbers being zero does not mean that no transaction contributes to $v(\txset)$; it just means that we also have to consider removing multiple transactions to see the differences. This gives us the idea to look at the marginal contribution of each transaction $\tx \in \txset$ when added to each possible subset $S\subseteq \txset\setminus\{\tx\}$, and not just to $S = \txset\setminus\{\tx\}$ as before. Formally, we would like to compute $\tx$'s gas consumption as an aggregate of $\left(v(S\cup\{ \tx\} ) - v(S)\right)_{S\subseteq \txset\setminus\{\tx\}}$. We next present two mechanisms based on this idea: the \emph{Shapley} (\cref{def:shapley-gas-comp-mech}) and \emph{Banzhaf} (\cref{def:banzhaf-gas-comp-mech}) \GCMs, inspired by corresponding concepts in cooperative game theory.

\begin{definition}[Shapley \GCM]\label{def:shapley-gas-comp-mech}
    Given a set of transactions $\txset$
    consisting of $|\txset| = n$ transactions
    and a transaction $\tx \in \txset$, the \emph{Shapley} \GCM computes the amount of gas used by $\tx$ as follows:%
    \begin{align}
        \gas_{\txset}^S(\tx) := & \frac{1}{n!} \sum_\sigma \left[ v(P_{\tx}^\sigma \cup \{\tx\}) - v(P_{\tx}^\sigma) \right] \label{shapley:v1}\\
        = & \sum_{S\subseteq \txset\setminus\{\tx\}} \frac { |S|!\cdot (n-|S|-1)!} { n! } \left[ v ( S\cup\{ \tx\} ) - v(S) \right] \label{shapley:v2}.
    \end{align}
    Here, $\sigma$ ranges over the $n!$ possible ways to order the transactions in $\txset$ and $P_{\tx}^\sigma$ denotes the set of transactions that precede $\tx$ in the order $\sigma$. The equality between \cref{shapley:v1,shapley:v2} follows by counting the number of orders $\sigma$ such that $P_{\tx}^\sigma = S$, which there are $|S|! \cdot (n-|S|-1)!$ of.  
\end{definition}\vspace{2pt}

Another way to understand the Shapley \GCM is through the following probabilistic experiment:

\begin{enumerate}
    \item Select an ordering $\sigma$ of the transactions in $\txset$ uniformly at random.
    \item Start with an empty set of transactions and add transactions one by one in the order given by $\sigma$.
    \item Whenever a transaction $\tx$ is added, let $S = P_{\tx}^\sigma$ be the set of transactions just before adding it and compute $\tx$'s \emph{marginal contribution} to the execution time as $v(S \cup \{tx\}) - v(S)$; i.e., by how much did the execution time increase by adding $\tx$ to the current set of transactions.
    \item The gas consumption of $\tx \in \txset$ is the expectation of its marginal contribution across the experiment. 
\end{enumerate}

The reader familiar with cooperative game theory will have already recognized the immediate connection with Shapley values: if we see each transaction $\tx \in \txset$ as a player in a game with valuation function $v : 2^\txset \to \mathbb{R}$, then $\gas_{\txset}^S(\tx)$ is precisely the celebrated \emph{Shapley value} of player $\tx$, more traditionally written $\phi_{\tx}(v)$. One classical property of Shapley values is that, given $v(\varnothing) = 0$, as is the case for us by \ref{sched:empty-set}, their sum equals the valuation of the \emph{grand coalition} $\txset$: $\sum_{\tx \in \txset}\phi_{\tx}(v) = v(\txset)$. The proof is straightforward: for any fixed ordering $\sigma$ of $\txset$, the sum of the marginal contributions of the transactions is a telescoping sum, i.e., except first and last terms, all others appear once positively and once negatively, canceling as a result and leaving us with $v(\txset) - v(\varnothing) = v(\txset)$. Since the sum does not depend on $\sigma$, the same holds when taking the expectation with respect to $\sigma$. This already gives us our first property of the Shapley \GCM, namely Efficiency:

\begin{lemma}\label{lemma:shapley-efficient} The Shapley \GCM satisfies Efficiency (\cref{prop:efficiency}).
\end{lemma}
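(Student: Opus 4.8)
The plan is to invoke the standard efficiency (or ``budget-balance'') property of the Shapley value, which the paper has already essentially proved in the paragraph immediately preceding the lemma. Concretely, I would fix the set of transactions $\txset$ with $|\txset| = n$ and view it as the player set of a cooperative game with characteristic function $v : 2^\txset \to \mathbb{R}$ given by the scheduler's makespan. The goal is to show $\sum_{\tx \in \txset} \gas_{\txset}^S(\tx) = v(\txset)$, which is exactly the statement of Efficiency (\cref{prop:efficiency}) since $\gas_{\txset}(\txset)$ is defined as $\sum_{\tx \in \txset}\gas_{\txset}^S(\tx)$.

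The cleanest route is through the ordering-based formulation in \cref{shapley:v1}. First I would write the sum of gas consumptions and interchange the (finite) sums over transactions and over orderings:
\[
    \sum_{\tx \in \txset} \gas_{\txset}^S(\tx) = \frac{1}{n!} \sum_\sigma \sum_{\tx \in \txset} \left[ v(P_{\tx}^\sigma \cup \{\tx\}) - v(P_{\tx}^\sigma) \right].
\]
The key step is then to observe that for each \emph{fixed} ordering $\sigma$, the inner sum telescopes: enumerating the transactions in the order dictated by $\sigma$, the $i$-th transaction contributes $v(\text{prefix of length } i) - v(\text{prefix of length } i-1)$, so all intermediate terms cancel and only the largest prefix (the grand coalition $\txset$) and the empty prefix survive. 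This leaves $v(\txset) - v(\varnothing)$ for every $\sigma$. Invoking property \ref{sched:empty-set}, which guarantees $v(\varnothing) = 0$, each inner sum equals $v(\txset)$. Since this value is independent of $\sigma$, summing over all $n!$ orderings and dividing by $n!$ yields exactly $v(\txset)$.

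The main (and really only) obstacle is making the telescoping argument airtight: one must note that for a fixed $\sigma$ the prefixes $P_{\tx}^\sigma$ form a chain $\varnothing \subset \cdots \subset \txset$ increasing by exactly one transaction at each step, so that the set appearing as $P_{\tx}^\sigma \cup \{\tx\}$ for one transaction is precisely the set appearing as $P_{\tx'}^\sigma$ for the immediately succeeding transaction $\tx'$; this is what forces the cancellation. This is the crux identified in the preceding paragraph, and since the paper has already sketched exactly this reasoning, the proof of the lemma reduces to formalizing that observation and citing \ref{sched:empty-set} for $v(\varnothing) = 0$. No properties of the scheduler beyond \ref{sched:empty-set} are needed, and no assumption on optimality is required, so the result holds for any scheduler satisfying the stated minimal properties.
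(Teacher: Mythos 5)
Your proof is correct and is essentially identical to the paper's own argument: the paper establishes this lemma in the paragraph preceding it via the same telescoping of marginal contributions along each fixed ordering $\sigma$, invoking \ref{sched:empty-set} for $v(\varnothing)=0$ and then averaging over the $n!$ orderings. Your write-up merely formalizes the same sketch, so there is nothing further to add.
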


We postpone studying the Shapley \GCM further until introducing our other mechanisms in the section.

A second \GCM based on the idea that a transaction's gas consumption should be its marginal contribution to the execution time is the \emph{Banzhaf} mechanism:

\begin{definition}[Banzhaf \GCM]\label{def:banzhaf-gas-comp-mech}
    Given a set of transactions $\txset$
    consisting of $|\txset| = n$ transactions
    and a transaction $\tx \in \txset$, the \emph{Banzhaf} \GCM computes the amount of gas used by $\tx$ as follows:%
    \begin{equation*}
        \gas_{\txset}^B(\tx) := \frac{1}{2^{n - 1}}\sum_{S\subseteq \txset\setminus\{\tx\}} \left[ v ( S\cup\{ \tx\} ) - v(S) \right].
    \end{equation*}
\end{definition}\vspace{2pt}

As for the Shapley mechanism, the Banzhaf mechanism can be understood probabilistically, but this time with a separate experiment for each transaction $\tx$. Sample a subset $S$ of transactions other than $\tx$ uniformly at random and compute $\tx$'s marginal contribution to the execution time when added to $S$, namely $v(S \cup \{\tx\}) - v(S)$. The gas consumption of $\tx$ is then its expected marginal contribution to the execution time. The familiar reader will recognize this as the definition of the \emph{Banzhaf power index} $\beta_{\tx}(v)$. For consistency with \emph{Shapley values}, we will instead call these \emph{Banzhaf values}. While more straightforward than the corresponding experiment used in defining Shapley values, the fact that we now have $n$ separate experiments makes the sum of the values no longer well-behaved, losing Efficiency for the Banzhaf mechanism:

\begin{restatable}{lemma}{banzhafnotefficient}\label{lemma:banzhaf-not-efficient} The Banzhaf \GCM does \emph{not} satisfy Efficiency (\cref{prop:efficiency}).
\end{restatable}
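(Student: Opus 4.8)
The plan is to disprove Efficiency by exhibiting a single transaction set on which the Banzhaf values fail to sum to the makespan; since the lemma is a negative result, by the convention fixed earlier we are free to take $v$ to be the makespan of the \emph{optimal} scheduler. The first observation guiding the construction is that a two-transaction block cannot serve as a counterexample: for $|\txset| = 2$ the Banzhaf value coincides with the Shapley value (both reduce to averaging the marginal contributions $v(\{\tx_1\}) - v(\varnothing)$ and $v(\txset) - v(\{\tx_2\})$, and symmetrically for $\tx_2$), so by \cref{lemma:shapley-efficient} they already sum to $v(\txset)$. Hence the smallest hope is three transactions, and the mechanism can only break when the makespan function $v$ is genuinely non-additive across subsets.

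Accordingly, I would take three transactions with a conflict structure that makes $v$ non-additive: let $\tx_1 \simeq (1, \{k_1\})$, $\tx_2 \simeq (1, \{k_1\})$, and $\tx_3 \simeq (1, \{k_2\})$, and set $\txset = \{\tx_1, \tx_2, \tx_3\}$. Here $\tx_1$ and $\tx_2$ contend for $k_1$ and must be serialized, while $\tx_3$ is independent. The next step is mechanical: tabulate $v(S)$ over all eight subsets $S \subseteq \txset$ under the optimal scheduler, using $v(\varnothing) = 0$ (property \ref{sched:empty-set}), $v = 1$ on every singleton, $v(\{\tx_1, \tx_2\}) = 2$ (forced serialization), $v(\{\tx_1, \tx_3\}) = v(\{\tx_2, \tx_3\}) = 1$ (run in parallel), and $v(\txset) = 2$ (serialize the conflicting pair while $\tx_3$ runs alongside). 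A useful sanity check is that these values are unchanged for \emph{every} thread count $n \geq 2$: the independent $\tx_3$ can always be hidden behind the serialized pair, so additional threads never alter any $v(S)$.

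With these values in hand, one computes the three Banzhaf values directly from \cref{def:banzhaf-gas-comp-mech}, with normalizing factor $1/2^{|\txset|-1} = 1/4$. Exploiting the symmetry between $\tx_1$ and $\tx_2$, the marginal contributions of $\tx_1$ over the subsets of $\{\tx_2,\tx_3\}$ come out as $1,1,0,1$, giving $\gas_{\txset}^B(\tx_1) = \gas_{\txset}^B(\tx_2) = 3/4$, while those of $\tx_3$ over the subsets of $\{\tx_1,\tx_2\}$ are $1,0,0,0$, giving $\gas_{\txset}^B(\tx_3) = 1/4$. Therefore $\gas_\txset^B(\txset) = 3/4 + 3/4 + 1/4 = 7/4 \neq 2 = v(\txset)$, which contradicts Efficiency and proves the lemma.

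The only conceptual obstacle is selecting an instance on which the mechanism actually fails, i.e.\ one where $v$ is non-additive: the telescoping that makes the Shapley sum collapse to $v(\txset) - v(\varnothing)$ is destroyed once each transaction's marginal contributions are averaged \emph{separately} over subsets of different sizes, but only if those contributions are not constant across the subset lattice. Mixing a conflicting pair with an independent transaction is exactly what introduces this variation, whereas degenerate choices (for instance, three mutually conflicting transactions, for which $v(S) = |S|$ is additive) would leave the sum equal to $v(\txset)$ and yield no contradiction. Everything after fixing the instance is routine arithmetic.
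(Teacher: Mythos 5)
Your proposal is correct and uses the very same counterexample as the paper: the block $\{\tx_1,\tx_2,\tx_3\}$ with $\tx_1 \simeq (1,\{k_1\})$, $\tx_2 \simeq (1,\{k_1\})$, $\tx_3 \simeq (1,\{k_2\})$, yielding Banzhaf values $\tfrac{3}{4}, \tfrac{3}{4}, \tfrac{1}{4}$ summing to $\tfrac{7}{4} \neq 2 = v(\txset)$. Your added observations --- that $|\txset| = 2$ cannot work since Banzhaf and Shapley values then coincide, and that non-additivity of $v$ is the essential ingredient --- are sound motivation but do not change the argument, which matches the paper's proof exactly.
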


We conclude this section by introducing two additional reasonable mechanisms \emph{without} Easy Gas Estimation. These mechanisms are notably more straightforward than the Shapley and Banzhaf \GCMs, as they avoid computing marginal contributions. However, they have other drawbacks that will become more apparent when we begin studying their normative properties alongside the other mechanisms.

\begin{definition}[Time-Proportional Makespan \GCM]\label{def:time-prop-makespan-gas-comp-mech}
    Given a set of transactions $\txset$
    and a transaction $\tx \in \txset$, the \emph{Time-Proportional Makespan (TPM)} \GCM computes the amount of gas used by $\tx$ as follows:%
    \begin{equation*}
        \gas_{\txset}^\textit{TPM}(\tx) := \frac{t(\tx)}{\sum_{\tx' \in T}t(\tx')} \cdot v(T).
    \end{equation*}
\end{definition}\vspace{2pt}

\begin{definition}[Equally-Split Makespan \GCM]\label{def:eq-split-makespan-gas-comp-mech}
    Given a set of transactions $\txset$
    and a transaction $\tx \in \txset$, the \emph{Equally-Split Makespan (ESM)} \GCM computes the amount of gas used by $\tx$ as follows:%
    \begin{equation*}
        \gas_{\txset}^\textit{ESM}(\tx) := \frac{v(\txset)}{|\txset|}.
    \end{equation*}
\end{definition}\vspace{2pt}

We also consider the following rather pathological mechanism because of the combination of properties it turns out to satisfy (that none of our other mechanisms do):

\begin{definition}[Exponentially-Split Makespan \GCM]\label{def:exp-split-makespan-gas-comp-mech}
    Given a set of transactions $\txset$
    and a transaction $\tx \in \txset$, the \emph{Exponentially-Split Makespan (XSM)} \GCM computes the amount of gas used by $\tx$ as follows:%
    \begin{equation*}
        \gas_{\txset}^\textit{XSM}(\tx) := \frac{v(\txset)}{3^{|\txset|}}.
    \end{equation*}
\end{definition}\vspace{2pt}

\section{Analysis of Our GCMs} \label{sect:gcm-analysis}

In this section, we provide a detailed analysis of the normative properties of our \GCMs. Our results are summarized in \cref{tbl:results}.

\begin{table}[h!]
\centering
\resizebox{\columnwidth}{!}{%
\begin{tabular}{lccccccc}
\toprule
Property & Current & W.~Area & Shapley & Banzhaf & TPM & ESM  & XSM   \\ 
\midrule
Storage Key Monotonicity (\cref{prop:resource_monotonicity}) & \cellcolor{twoyellow}$=$  &  \cellcolor{twodarkgreen}$<$  & \cellcolor{twogreen}$\leq$&\cellcolor{twogreen}$\leq$ & \cellcolor{twogreen}$\leq$  & \cellcolor{twogreen}$\leq$ & \cellcolor{twogreen}$\leq$\\
Time Monotonicity (\cref{prop:time_monotonicity}) & \cellcolor{twodarkgreen}$<$&  \cellcolor{twodarkgreen}$<$  & \cellcolor{twodarkgreen}$<$ &\cellcolor{twodarkgreen}$<$ &\cellcolor{twodarkgreen}$<$  & \cellcolor{twogreen}$\leq$ & \cellcolor{twogreen}$\leq$ \\ 
Storage Key-Time Monotonicity (\cref{prop:resource_time_monotonicity}) & \cellcolor{twogreen}$\leq$  &  \cellcolor{twodarkgreen}$<$  & \cellcolor{twogreen}$\leq$ &\cellcolor{twogreen}$\leq$ &  \cellcolor{twogreen}$\leq$ & \cellcolor{twogreen}$\leq$ & \cellcolor{twogreen}$\leq$ \\ 
Set Inclusion  (\cref{prop:se\txtime_inclusion})& \cellcolor{twodarkgreen}$<$  &  \cellcolor{twodarkgreen}$<$  & \cellcolor{twored}\tikzxmark  &\cellcolor{twored}\tikzxmark  & \cellcolor{twogreen}$\leq$   & \cellcolor{twogreen}$\leq$ & \cellcolor{twored}\tikzxmark \\ 
Transaction Bundling  (\cref{prop:tx_bundling}) &  \cellcolor{twoyellow}$=$ &\cellcolor{twogreen}$\leq$  & \cellcolor{twored}\tikzxmark  &\cellcolor{twogreen}$\leq$ & \cellcolor{twogreen}$\leq$  & \cellcolor{twored}\tikzxmark &\cellcolor{twodarkgreen}$<$\\
Scheduling Monotonicity (\cref{prop:schedule_monotonicity})& \cellcolor{twored}\tikzxmark & \cellcolor{twored}\tikzxmark & \cellcolor{twored}\tikzxmark  & \cellcolor{twored}\tikzxmark & \cellcolor{twored}\tikzxmark &  \cellcolor{twodarkgreen}$<$ &   \cellcolor{twodarkgreen}$<$ \\ 
Efficiency (\cref{prop:efficiency})&\cellcolor{twored}\tikzxmark    &\cellcolor{twored}\tikzxmark  &\cellcolor{twodarkgreen}\tikzcmark & \cellcolor{twored}\tikzxmark & \cellcolor{twodarkgreen}\tikzcmark  & \cellcolor{twodarkgreen}\tikzcmark &\cellcolor{twored}\tikzxmark \\ 
Easy Gas Estimation (\cref{prop:Composability})& \cellcolor{twodarkgreen}\tikzcmark &\cellcolor{twodarkgreen}\tikzcmark & \cellcolor{twored}\tikzxmark &\cellcolor{twored}\tikzxmark& \cellcolor{twored}\tikzxmark &\cellcolor{twored}\tikzxmark& \cellcolor{twored}\tikzxmark \\ 
Poly-time Computable (\cref{prop:computational_complexity})& \cellcolor{twodarkgreen}\tikzcmark & \cellcolor{twodarkgreen}\tikzcmark & \cellcolor{twored}$S(v)$ &\cellcolor{twored}$B(v)$& \cellcolor{twoyellow}$v$  &\cellcolor{twoyellow}$v$  & \cellcolor{twoyellow}$v$ \\ 

 \bottomrule
\end{tabular}}\vspace{4pt}

\caption{Comparison of \GCMs based on their adherence to the defined properties. \(\,<\,\) indicates that the mechanism \emph{strictly} satisfies the property,
\(\,=\,\) indicates \emph{trivial} satisfaction (by equality),
and \(\,\leq\,\) indicates satisfaction (not necessarily strict).
A ``\(\checkmark\)'' means the property is satisfied; 
``\(\times\)'' means it is not satisfied.
For computational complexity, 
``\(v\)'' means the mechanism is as hard to compute as \(v(\cdot)\) itself,
``\(S(v)\)'' means it is as hard as computing Shapley values for \(v\),
and ``\(B(v)\)'' means it is as hard as computing Banzhaf values for \(v\).
 }
\label{tbl:results}\vspace{-20pt}
\end{table}

\subsection{Mechanisms with Easy Gas Estimation}

In this section, we analyze the current and Weighted Area \GCMs. By definition, both satisfy Easy Gas Estimation (\cref{prop:Composability}) and are Poly-time Computable (\cref{prop:computational_complexity}). Since they satisfy Easy Gas Estimation but are not constant \GCMs, \cref{thm:impossibility} implies that \emph{neither} satisfies Scheduling Monotonicity (\cref{prop:schedule_monotonicity}) nor Efficiency (\cref{prop:efficiency}). Furthermore, both mechanisms satisfy strict Time Monotonicity (\cref{prop:time_monotonicity}): increasing the time $t$ of a transaction strictly increases its gas consumption. Similarly, both satisfy strict Set Inclusion (\cref{prop:set_inclusion}): since transactions' gas consumptions are strictly positive, a strict superset of a given set of transactions consumes strictly more gas. For the remaining three properties, the two mechanisms behave slightly differently: 

\emph{Storage Key Monotonicity (\cref{prop:resource_monotonicity}).} The current mechanism ignores the set of storage keys $K$ that a transaction accesses, so replacing $K$ with a strict superset of it does not change the transaction's gas consumption. Therefore, the current mechanism satisfies Storage Key Monotonicity \emph{with equality}. On the other hand, the Weighted Area mechanism adds an extra term of $t \cdot w_k > 0$ to the gas consumption for each additional storage key $k$ added to $K$, so it satisfies strict Storage Key Monotonicity.

\emph{Storage Key-Time Monotonicity (\cref{prop:resource_time_monotonicity}).} From the above results on whether our two mechanisms satisfy Storage Key Monotonicity and Time Monotonicity, \cref{lemma:prop-1-2-3-equivalence,lemma:prop-1-2-3-equivalence-strict} allow us to conclude that the current mechanism satisfies Storage Key-Time Monotonicity, while the Weighted Area mechanism satisfies it strictly.

\emph{Transaction Bundling (\cref{prop:tx_bundling}).} Concatenating two transactions with times $t_1$ and $t_2$ results in a transaction with time $t_1 + t_2$. Since the current mechanism equates gas consumption with time, the bundled transaction has the same gas consumption as the two individual transactions combined. This implies that the current mechanism satisfies Transaction Bundling \emph{with equality}. Finally, the Weighted Area mechanism satisfies Transaction Bundling, as shown below. If we restrict ourselves to bundling transactions with different storage key sets, the property is strictly satisfied.

\begin{lemma} The Weighted Area \GCM satisfies Transaction Bundling (\cref{prop:tx_bundling}). If we only consider bundling transactions with different storage key sets, the property is strictly satisfied.
\end{lemma}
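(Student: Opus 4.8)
The plan is to exploit the closed-form definition of the Weighted Area \GCM together with the fact that it satisfies Easy Gas Estimation. Because $\gas^\textit{WA}$ does not depend on the ambient set $\txset$, all subscripts may be dropped, and the inequality of \cref{prop:tx_bundling} reduces to the purely arithmetic claim
\[
    \gas^\textit{WA}(\tx_1) + \gas^\textit{WA}(\tx_2) \le \gas^\textit{WA}(\tx_3),
\]
where $\tx_1 \simeq (t_1, K_1)$, $\tx_2 \simeq (t_2, K_2)$, and their concatenation is $\tx_3 \simeq (t_1 + t_2, K_1 \cup K_2)$ by the remark preceding the property. Introducing the shorthand $W(S) := \sum_{k \in S} w_k$ for any $S \subseteq \keyset$, I would rewrite the three gas values as $t_1(1 + W(K_1))$, $t_2(1 + W(K_2))$, and $(t_1 + t_2)(1 + W(K_1 \cup K_2))$ respectively.

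The key step is to split the right-hand side as $t_1(1 + W(K_1 \cup K_2)) + t_2(1 + W(K_1 \cup K_2))$ and compare it termwise with the left-hand side. Since the weights are positive and $K_1, K_2 \subseteq K_1 \cup K_2$, I have $W(K_1 \cup K_2) \ge W(K_1)$ and $W(K_1 \cup K_2) \ge W(K_2)$, which, multiplied by the positive times $t_1$ and $t_2$ respectively, yield $t_1(1 + W(K_1 \cup K_2)) \ge t_1(1 + W(K_1))$ and $t_2(1 + W(K_1 \cup K_2)) \ge t_2(1 + W(K_2))$. Adding these two inequalities gives exactly the desired bound, establishing the non-strict version of Transaction Bundling.

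For the strict version, the argument to make is that if $K_1 \neq K_2$, then the two sets cannot both equal their union, so at least one of them---say $K_1$ without loss of generality---is a \emph{proper} subset of $K_1 \cup K_2$. Because weights are strictly positive, this forces $W(K_1 \cup K_2) > W(K_1)$, and since $t_1 > 0$ the corresponding termwise inequality becomes strict; the other term remains at least non-strict, so the total sum is strictly larger. I would spell out the ``without loss of generality'' carefully, noting that it suffices that one of the two proper-subset relations holds to obtain a strict inequality.

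I do not expect a genuine obstacle here, as the whole argument is a direct computation enabled by positivity of the weights and Easy Gas Estimation. The only point requiring mild care is the strictness claim: one must verify that $K_1 \neq K_2$ indeed implies a strict weight gap for at least one of the two constituent transactions (rather than assuming both are proper subsets), and that the accompanying time factor is strictly positive so that the strictness is not annihilated.
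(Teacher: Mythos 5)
Your proposal is correct and follows essentially the same route as the paper's proof: both reduce the claim, via the $\txset$-independence of $\gas^\textit{WA}$, to the arithmetic inequality $t_1(1+\sum_{k\in K_1}w_k)+t_2(1+\sum_{k\in K_2}w_k)\leq(t_1+t_2)(1+\sum_{k\in K_1\cup K_2}w_k)$ and conclude from $\sum_{k\in K_i}w_k\leq\sum_{k\in K_1\cup K_2}w_k$ together with positivity of weights and times. Your strictness argument (at least one of $K_1,K_2$ is a proper subset of the union when $K_1\neq K_2$) is just the contrapositive phrasing of the paper's equality-case analysis ($K_1=K_1\cup K_2$ and $K_2=K_1\cup K_2$ iff $K_1=K_2$), so the two proofs are interchangeable.
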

\begin{proof} Consider two transactions $\tx_1 \simeq (t_1, K_1)$ and $\tx_2 \simeq (t_2, K_2)$. Let $\tx_3 \simeq (t_1 + t_2, K_1 \cup K_2)$ be a transaction consisting of the concatenation of $\tx_1$ and $\tx_2$. We want to show that $\gas^\textit{WA}(\tx_1)+\gas^\textit{WA}(\tx_2) \leq \gas^\textit{WA}(\tx_3)$. By definition, this amounts to:

\begin{equation}
    t_1 \cdot \left(1 + \sum_{k \in K_1} w_k\right) + 
    t_2 \cdot \left(1 + \sum_{k \in K_2} w_k\right) \leq 
    (t_1 + t_2) \cdot \left(1 + \sum_{k \in K_1 \cup K_2} w_k\right)  
\end{equation}

Which is true because $\sum_{k \in K_i} w_k \leq \sum_{k \in K_1 \cup K_2} w_k$ for $i \in \{1, 2\}$. Because the weights are strictly positive, equality occurs if and only if $K_1 = K_1 \cup K_2$ and $K_2 = K_1 \cup K_2$; i.e., $K_1 = K_2$. Hence, the property is satisfied strictly if we restrict bundling transactions to cases where $K_1 \neq K_2$.
\end{proof}

\subsection{Mechanisms \emph{without} Easy Gas Estimation}

In this section, we analyze the Shapley, Banzhaf, TPM, ESM, and XSM \GCMs. By definition, all of them require knowledge of $T$ to compute $\gas_T(\tx)$, so they do not satisfy Easy Gas Estimation (\cref{prop:Composability}). Next, we examine each of the remaining properties individually and analyze whether they hold for our five mechanisms.

\emph{Poly-time Computable (\cref{prop:computational_complexity}).} The TPM, ESM, and XSM \GCMs are all Poly-time Computable whenever determining the makespan of a given set of transactions under the chosen scheduler is feasible in polynomial time, i.e., when $v(T)$ can be computed in polynomial time.\footnote{Notably, this is essentially never true for non-trivial makespan minimization problems. In particular, if we restrict our attention to the case of unit-length transactions (i.e., with $t = 1$) and infinitely many threads $n = \infty$, then checking for the existence of a schedule with makespan $c$ corresponds to checking whether the intersection graph of the transactions (i.e., where edges correspond to transactions with intersecting storage key sets) is $c$-colorable. For $c = 3$, this is well-known to be NP-complete, but this result is for general graphs. However, there is a straightforward way to model any graph $G = (V, E)$ as an intersection graph of transactions: vertices $v \in V$ are transactions and for every edge $(u, v) \in E$, create a new storage key $k$ and add it to the storage key sets of transactions $u$ and $v$.} In fact, the problems are all equally difficult to computing such makespans. In contrast, computing gas consumptions under the Shapley and Banzhaf \GCMs hits another hurdle: it is no longer enough to be able to efficiently compute $v(T)$, but instead, one needs to be able to compute the Shapley or Banzhaf values of the transactions, involving aggregating over $(v(T'))_{T' \subseteq T}$.
Hence, computing Shapley and Banzhaf values is, in general, NP-hard and \#P-complete \cite{shapley_banzhaf_1,shapley_banzhaf_2,shapley_banzhaf_3,shapley_banzhaf_4,shapley_banzhaf_5}. Moreover, no deterministic polynomial-time algorithm can approximate the Shapley values within a constant factor unless P = NP. Using randomization could, in principle, circumvent this: to approximate an average consisting of exponentially many terms, sample polynomially many uniformly at random, and take their average. One could imagine implementing this in a blockchain via a VRF or similar mechanism to ensure unbiased randomness. 
Still, this comes with a notable downside: the accuracy of the computed gas consumption may be hard to predict in advance and might vary wildly.
Moreover, even with randomization, one could still run into issues computing the sampled terms, which is as hard as evaluating $v$ a constant number of times. See \cite{cooperative_book_edith} for an ampler discussion of computing Shapley and Banzhaf values. Note that the previous results mostly pertain to functions $v$ of a different shape than ours (i.e., not related to scheduling). We have not attempted to show that hardness is retained in our context, but expect this to be true. Note still that because the Shapley \GCM is Efficient, it can be used to compute $v(T)$ by adding up the gas consumptions of all transactions in $T$, meaning that the Shapley \GCM is at least as difficult to compute as $v(T)$. However, this simple reduction no longer works for the Banzhaf \GCM.

\emph{Efficiency (\cref{prop:efficiency}).} We have already seen that the Shapley \GCM is Efficient (\cref{lemma:shapley-efficient}) while the Banzhaf \GCM is not (\cref{lemma:banzhaf-not-efficient}). Moreover, by adding up the gas consumption, one can immediately see from the definitions that TPM and ESM are Efficient, while XSM is not. We take this occasion to also note that any \GCM can be made Efficient by appropriately scaling its output. Most notably, applying this to the Banzhaf \GCM yields a mechanism based on the well-known \emph{normalized Banzhaf values}, though the resulting mechanism otherwise seems rather poorly behaved.

\emph{Scheduling Monotonicity (\cref{prop:schedule_monotonicity}).} The ESM and XSM \GCMs can be easily seen to strictly satisfy Scheduling Monotonicity. This is because, under both mechanisms, given a set of transactions $T$, the gas consumption of any transaction in $T$ is computed as $f(|T|) \cdot v(T)$, where $f$ is either $\frac{1}{x}$ or $\frac{1}{3^x}$. Notably, the first factor depends only on $|T|$, so if a transaction in $T$ were modified in a way that increases the makespan, this increase would also be reflected proportionally in its gas consumption. In contrast, the Shapley, Banzhaf, and TPM \GCMs do \emph{not} satisfy Scheduling Monotonicity (proven in \cref{lem:shapleyschedule} below). This may be particularly surprising for the Shapley and Banzhaf \GCMs, as they were specifically designed to account for marginal increases in makespan. The catch is that Scheduling Monotonicity considers replacing a transaction $\tx$ with another one leading to an increase in makespan. However, some elements in $\left(v(S\cup\{ \tx\} ) - v(S)\right)_{S\subseteq \txset\setminus\{\tx\}}$ may still decrease as a result (except for $S = \txset\setminus\{\tx\}$, for which we assumed an increase). Because both the Shapley and Banzhaf values of $\tx$ take a weighted average over these values, the average might still decrease, which is what happens in the example in the lemma below.

\begin{lemma}\label{lem:shapleyschedule} The
Shapley, Banzhaf and TPM \GCMs do \emph{not} satisfy Scheduling Monotonicity (\cref{prop:schedule_monotonicity}).
\end{lemma}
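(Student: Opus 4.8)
The plan is to refute Scheduling Monotonicity by a single explicit counterexample that works simultaneously for all three mechanisms. Recall that to violate the property it suffices to exhibit a base set $\txset$ and two transactions $\tx_1, \tx_2 \notin \txset$ with $v(\txset \cup \{\tx_1\}) < v(\txset \cup \{\tx_2\})$ yet $\gas_{\txset \cup \{\tx_1\}}(\tx_1) > \gas_{\txset \cup \{\tx_2\}}(\tx_2)$, directly contradicting the required conclusion $\gas_{\txset \cup \{\tx_1\}}(\tx_1) \le \gas_{\txset \cup \{\tx_2\}}(\tx_2)$. I would fix $n \ge 2$ threads and take $\txset = \{\tx_0\}$ with $\tx_0 \simeq (4, \{k_1\})$, together with $\tx_1 \simeq (4, \{k_2\})$ and $\tx_2 \simeq (1, \{k_1\})$.

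Next I would verify the premise. Since $\tx_1$ uses a key disjoint from $\tx_0$, the two run fully in parallel, so $v(\txset \cup \{\tx_1\}) = \max(4,4) = 4$; since $\tx_2$ shares key $k_1$ with $\tx_0$, the two must serialize, so $v(\txset \cup \{\tx_2\}) = 4 + 1 = 5$. Hence $v(\txset \cup \{\tx_1\}) = 4 < 5 = v(\txset \cup \{\tx_2\})$, as needed. The conceptual point driving the example is that $\tx_1$ has a large execution time but is \emph{absorbed} into the existing schedule (contributing $0$ to the makespan of the full set), whereas $\tx_2$ has a tiny execution time yet lands on the critical path (contributing its full $1$ to the makespan); Scheduling Monotonicity compares only these full-set makespans.

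Finally I would compute the three gas values on this instance. For TPM, $\gas^{\textit{TPM}}_{\txset \cup \{\tx_1\}}(\tx_1) = \tfrac{4}{4+4}\cdot 4 = 2$ while $\gas^{\textit{TPM}}_{\txset \cup \{\tx_2\}}(\tx_2) = \tfrac{1}{4+1}\cdot 5 = 1$. For Shapley and Banzhaf, note that with only two transactions present the two values coincide and reduce to averaging the marginal contribution over the subsets $\varnothing$ and $\{\tx_0\}$ with equal weight $\tfrac12$: for $\tx_1$ these marginals are $v(\{\tx_1\}) = 4$ and $v(\{\tx_0,\tx_1\}) - v(\{\tx_0\}) = 0$, giving $\gas^{S}_{\txset\cup\{\tx_1\}}(\tx_1) = \gas^{B}_{\txset\cup\{\tx_1\}}(\tx_1) = 2$; for $\tx_2$ they are $v(\{\tx_2\}) = 1$ and $v(\{\tx_0,\tx_2\}) - v(\{\tx_0\}) = 1$, giving $\gas^{S}_{\txset\cup\{\tx_2\}}(\tx_2) = \gas^{B}_{\txset\cup\{\tx_2\}}(\tx_2) = 1$. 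In all three cases the transaction with the \emph{smaller} makespan is charged $2 > 1$, violating the property. The main subtlety, and the part a reader will find surprising, is precisely this Shapley/Banzhaf computation: although these mechanisms are built from marginal makespan contributions, Scheduling Monotonicity constrains only the marginal at the grand coalition $S = \txset$, while the contribution at the smaller subset $S = \varnothing$ (where $\tx_1$ runs alone and contributes its full time, whereas $\tx_2$ contributes only its tiny time) dominates the weighted average in the opposite direction. The only remaining care is checking admissibility of the instance (disjoint versus overlapping key sets, $n \ge 2$), which is immediate.
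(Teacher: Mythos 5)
Your proof is correct and takes essentially the same approach as the paper: a single explicit counterexample, verified by directly computing the two makespans and all three gas values, that simultaneously refutes Scheduling Monotonicity for the Shapley, Banzhaf, and TPM mechanisms under the optimal scheduler. The only difference is cosmetic --- you use a one-transaction base set (so the Shapley and Banzhaf values coincide, simplifying the arithmetic), whereas the paper uses a two-transaction base set; both instances are admissible and the computations check out.
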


\begin{proof} Consider the set of transactions \(T=\{\tx_1,\tx_2\}\) and two additional transactions $\tx_3$ and $\tx_4$ such that $\tx_1\simeq(1,\{k_1\})$, $\tx_2\simeq(3,\{k_2\})$, $\tx_3\simeq(2,\{k_1\})$, and $\tx_4\simeq(1,\{k_2\})$. Then, for any number of threads $n \geq 2$ and the optimal scheduler we have $v(\txset \cup \{\tx_3\}) =3 < 4 = v(\txset\cup\{\tx_4\})$. However:
\begin{align*}
    \gas_{\txset\cup\{\tx_3\}}^S(\tx_3)=\frac{2+2+2+0+0+0}{6} = 1 >& \frac{5}{6} = \frac{1+1+1+1+1+0}{6} = \gas_{\txset\cup\{\tx_4\}}^S(\tx_4)\\
    \gas_{\txset\cup\{\tx_3\}}^B(\tx_3)=\frac{2+2+0+0}{4} = 1 > & \frac{3}{4} = \frac{1+0+1+1}{4} = \gas_{\txset\cup\{\tx_4\}}^B(\tx_4) \\
    \gas_{\txset\cup\{\tx_3\}}^\textit{TPM}(\tx_3)=\frac{2}{1+3+2}\cdot 3 = 1 > & \frac{4}{5} = \frac{1}{1+3+1} \cdot 4 = \gas_{\txset\cup\{\tx_4\}}^\textit{TPM}(\tx_4).
\end{align*}

So, the Shapley, Banzhaf, and TPM \GCMs all violate Scheduling Monotonicity in this case. 
\end{proof}

\emph{Transaction Bundling  (\cref{prop:tx_bundling}).} We find that the Banzhaf, TPM and XSM \GCMs satisfy Transaction Bundling, only XSM satisfying it strictly, while the Shapley and ESM \GCMs do not satisfy the property. We prove these facts in the following 5 lemmas.

\begin{restatable}{lemma}{lemmashapleynotxbundling} The Shapley \GCM does \emph{not} satisfy Transaction Bundling (\cref{prop:tx_bundling}).
\end{restatable}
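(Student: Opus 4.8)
The plan is to refute the property with an explicit counterexample, since every monotonicity tool available (in particular scheduler property \ref{sched:monot-bundle}) only controls the \emph{block-level} makespan and pushes the wrong way. Concretely, I would exhibit a set $T$ and transactions $\tx_1, \tx_2, \tx_3$, with $\tx_3$ the concatenation of $\tx_1$ and $\tx_2$, for which
$\gas^S_{T \cup \{\tx_1, \tx_2\}}(\tx_1) + \gas^S_{T \cup \{\tx_1, \tx_2\}}(\tx_2) > \gas^S_{T \cup \{\tx_3\}}(\tx_3)$
under the optimal scheduler, for every $n \geq 2$.

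To locate such an instance rather than search blindly, I would use Efficiency of the Shapley \GCM (\cref{lemma:shapley-efficient}). Since the Shapley values sum to $v$ of the block in both the bundled and unbundled blocks, the desired violation is \emph{equivalent} to: replacing $\tx_1, \tx_2$ by $\tx_3$ raises the total gas charged to the transactions in $T$ by strictly more than it raises the makespan $v$. This reframes the goal: I want the bundling to barely change the overall makespan while noticeably boosting the charge absorbed by the bystander transactions in $T$.

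Guided by this, I would take $T = \{\tx_0\}$ with $\tx_0$ \emph{disjoint} from everything else, and make $\tx_1, \tx_2$ \emph{conflict} with one another (share a storage key), so that, submitted individually, they form a forced sequential chain and each earns a large marginal contribution in many orderings. The concrete choice is $\tx_0 \simeq (1, \{k_0\})$, $\tx_1 \simeq (1, \{k_1\})$, $\tx_2 \simeq (1, \{k_1\})$, hence $\tx_3 \simeq (2, \{k_1\})$. All makespans are immediate (the only conflict is $\tx_1$--$\tx_2$), so I would read off the six ordering-marginals of the three-transaction unbundled block and the two of the two-transaction bundled block directly from the ordering-based form~\eqref{shapley:v1}, obtaining $\gas^S_{T\cup\{\tx_1,\tx_2\}}(\tx_1) = \gas^S_{T\cup\{\tx_1,\tx_2\}}(\tx_2) = \tfrac{5}{6}$ and $\gas^S_{T\cup\{\tx_3\}}(\tx_3) = \tfrac{3}{2}$; since $\tfrac{5}{3} > \tfrac{3}{2}$, the property fails. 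The computation also exposes the mechanism: both blocks have makespan $2$, but the disjoint bystander $\tx_0$ is charged only $\tfrac{1}{3}$ in the three-player unbundled block and $\tfrac{1}{2}$ in the two-player bundled block, since its sole nonzero marginal is ``being scheduled first,'' an event that is less diluted in the smaller bundled block; this extra $\tfrac{1}{6}$ is exactly what is withdrawn from the bundle.

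The main obstacle is discovering this configuration, because the natural guesses all fail. If $\tx_3$ conflicts with every transaction of $T$, its marginal is the constant $t(\tx_3)$ in every coalition, so its Shapley value dominates $\gas^S(\tx_1)+\gas^S(\tx_2)$ and the property holds; symmetric all-disjoint instances, on the other hand, only approach equality from the \emph{satisfying} side as $|T|$ grows. The violation genuinely requires combining (i) an \emph{intra-bundle} conflict, which inflates the individual charges of $\tx_1$ and $\tx_2$, with (ii) a \emph{disjoint} bystander in $T$ that is diluted less once the block shrinks upon bundling; the Efficiency reformulation is precisely the lens that isolates this structure. A final routine check is that the stated makespans are optimal for every $n \geq 2$, which is clear since no coalition here requires more than two threads.
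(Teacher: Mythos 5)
Your proof is correct and is, up to renaming, exactly the paper's own counterexample: two unit-time transactions sharing a storage key bundled into $\tx_3 \simeq (2,\{k_1\})$, with a single disjoint unit-time bystander in $T$, giving $\gas^S(\tx_1)+\gas^S(\tx_2) = \frac{5}{6}+\frac{5}{6} = \frac{5}{3} > \frac{3}{2} = \gas^S(\tx_3)$ for every $n \geq 2$ under the optimal scheduler. The Efficiency-based reframing you use to \emph{locate} the instance (the bystander's charge growing from $\frac{1}{3}$ to $\frac{1}{2}$ upon bundling) is a nice explanatory lens absent from the paper, but the resulting argument coincides with the paper's proof.
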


\begin{restatable}{lemma}{lemmabanzhafhastxbundl} The Banzhaf \GCM satisfies Transaction Bundling (\cref{prop:tx_bundling}).
\end{restatable}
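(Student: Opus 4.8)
The plan is to expand all three Banzhaf values directly from their definition (\cref{def:banzhaf-gas-comp-mech}) and reduce the desired inequality to a single term-by-term comparison of makespans that is settled by scheduler property~\ref{sched:monot-bundle}. Writing $m = |T|$, note that the two left-hand values are normalized by $1/2^{m+1}$ (since $|T \cup \{\tx_1,\tx_2\}| = m+2$) and range over all $S \subseteq T \cup \{\tx_2\}$ and $S \subseteq T \cup \{\tx_1\}$ respectively, whereas the right-hand value $\gas^B_{T \cup \{\tx_3\}}(\tx_3)$ is normalized by $1/2^{m}$ and ranges over $S \subseteq T$. After multiplying the target inequality through by $2^{m+1}$, the goal becomes showing that the combined left-hand sum is at most $2\sum_{S \subseteq T}\left[v(S \cup \{\tx_3\}) - v(S)\right]$.

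The key step is to split each left-hand sum according to whether the extra transaction ($\tx_2$ in the first sum, $\tx_1$ in the second) lies in $S$, reindexing everything over $S \subseteq T$. This rewrites the first sum as $\sum_{S \subseteq T}\left[(v(S \cup \{\tx_1\}) - v(S)) + (v(S \cup \{\tx_1,\tx_2\}) - v(S \cup \{\tx_2\}))\right]$ and the second as the analogous expression with the roles of $\tx_1$ and $\tx_2$ interchanged. Adding the two, the cross terms $v(S \cup \{\tx_1\})$ and $v(S \cup \{\tx_2\})$ cancel, and what survives is exactly $2\sum_{S \subseteq T}\left[v(S \cup \{\tx_1,\tx_2\}) - v(S)\right]$. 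The elegant point is that the factor of two emerging from this cancellation is precisely what matches the discrepancy between the $2^{m+1}$ and $2^m$ normalizations, so after dividing back out everything reduces to comparing $\sum_{S \subseteq T}\left[v(S \cup \{\tx_1,\tx_2\}) - v(S)\right]$ with $\sum_{S \subseteq T}\left[v(S \cup \{\tx_3\}) - v(S)\right]$.

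Once the inequality is in this form, the $v(S)$ terms cancel and it suffices to prove it summand by summand, i.e.\ to show $v(S \cup \{\tx_1,\tx_2\}) \leq v(S \cup \{\tx_3\})$ for every $S \subseteq T$. This is precisely scheduler property~\ref{sched:monot-bundle} applied with ambient transaction set $S$ (valid since $\tx_1,\tx_2,\tx_3 \notin T \supseteq S$), using that $\tx_3$ is the concatenation of $\tx_1$ and $\tx_2$. I do not expect a genuine obstacle here; the only thing to watch carefully is the bookkeeping of the differing normalization constants and the reindexing of the power-set sums, making sure the cancellation of the cross terms is executed correctly so that the factor-of-two matching goes through cleanly.
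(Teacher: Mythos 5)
Your proposal is correct and follows essentially the same route as the paper's proof: expanding the Banzhaf values, reindexing both left-hand sums over $S \subseteq T$ by conditioning on whether the other transaction lies in $S$, cancelling the cross terms $v(S \cup \{\tx_1\})$ and $v(S \cup \{\tx_2\})$ to obtain $\tfrac{2}{2^{m+1}}\sum_{S \subseteq T}\left[v(S \cup \{\tx_1,\tx_2\}) - v(S)\right]$, and finishing with a termwise application of property~\ref{sched:monot-bundle}. The factor-of-two bookkeeping you flag is exactly the step the paper's chain of equalities executes, so no gap remains.
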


\begin{restatable}{lemma}{lemmatpmhasbundling} The TPM \GCM satisfies Transaction Bundling (\cref{prop:tx_bundling}).
\end{restatable}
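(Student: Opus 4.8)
The plan is to reduce Transaction Bundling for the TPM \GCM directly to the scheduler's monotonicity-under-bundling property \ref{sched:monot-bundle}. The observation driving this reduction is that concatenation preserves total execution time: writing $\tx_1 \simeq (t_1, K_1)$, $\tx_2 \simeq (t_2, K_2)$, and $\tx_3 \simeq (t_1 + t_2, K_1 \cup K_2)$, and setting $S := \sum_{\tx' \in \txset} t(\tx')$, both sets $\txset \cup \{\tx_1, \tx_2\}$ and $\txset \cup \{\tx_3\}$ have the same total time $S + t_1 + t_2$, since $t(\tx_3) = t_1 + t_2$.

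First I would expand both sides of the claimed inequality using the definition of $\gas^\textit{TPM}$. On the left, the two gas terms share the denominator $S + t_1 + t_2$ and their numerators $t_1$ and $t_2$ add up, giving
\[
\gas_{\txset \cup \{\tx_1, \tx_2\}}^\textit{TPM}(\tx_1) + \gas_{\txset \cup \{\tx_1, \tx_2\}}^\textit{TPM}(\tx_2) = \frac{t_1 + t_2}{S + t_1 + t_2} \cdot v(\txset \cup \{\tx_1, \tx_2\}).
\]
On the right, since $t(\tx_3) = t_1 + t_2$ and the total time is again $S + t_1 + t_2$, we obtain
\[
\gas_{\txset \cup \{\tx_3\}}^\textit{TPM}(\tx_3) = \frac{t_1 + t_2}{S + t_1 + t_2} \cdot v(\txset \cup \{\tx_3\}).
\]

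The crucial point is that the scalar prefactor $\frac{t_1 + t_2}{S + t_1 + t_2}$ is identical on both sides and strictly positive. Cancelling it, the desired inequality is equivalent to $v(\txset \cup \{\tx_1, \tx_2\}) \leq v(\txset \cup \{\tx_3\})$, which is precisely scheduler property \ref{sched:monot-bundle}. Since we assume throughout that the scheduler satisfies \ref{sched:monot-t}--\ref{sched:empty-set}, this concludes the argument.

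I do not expect a genuine obstacle here: the whole proof hinges on the single algebraic observation that bundling leaves $\sum_{\tx'} t(\tx')$ unchanged, so the time-proportional weights of $\tx_1$ and $\tx_2$ merge exactly into that of $\tx_3$, leaving only the makespan factor to compare. The one thing worth double-checking is that the property is asserted only in its non-strict form (consistent with the $\leq$ entry in \cref{tbl:results}), so no strict version of \ref{sched:monot-bundle} is required; the non-strict scheduler inequality suffices.
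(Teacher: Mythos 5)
Your proof is correct and is essentially identical to the paper's: both expand the definition of $\gas^\textit{TPM}$, use the fact that $t(\tx_3) = t_1 + t_2$ makes the total-time denominator the same for $\txset \cup \{\tx_1, \tx_2\}$ and $\txset \cup \{\tx_3\}$, and then invoke scheduler property \ref{sched:monot-bundle} to compare the makespans. Your closing remark is also accurate --- the non-strict form of \ref{sched:monot-bundle} is all that is needed, consistent with the $\leq$ entry in \cref{tbl:results}.
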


\begin{restatable}{lemma}{lemmaesmnobundling} The ESM \GCM does \emph{not} satisfy Transaction Bundling (\cref{prop:tx_bundling}).
\end{restatable}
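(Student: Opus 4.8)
The plan is to unfold the ESM definition and reduce Transaction Bundling (\cref{prop:tx_bundling}) to a comparison between two rescaled makespans. Since every transaction in a block receives the same ESM gas, the left-hand side of \cref{prop:tx_bundling} equals $2 \cdot \frac{v(\txset \cup \{\tx_1, \tx_2\})}{|\txset| + 2}$ (there being $|\txset| + 2$ transactions in $\txset \cup \{\tx_1, \tx_2\}$), while the right-hand side equals $\frac{v(\txset \cup \{\tx_3\})}{|\txset| + 1}$. Thus the property asks whether $\frac{2}{|\txset|+2}\, v(\txset \cup \{\tx_1, \tx_2\}) \le \frac{1}{|\txset|+1}\, v(\txset \cup \{\tx_3\})$, and the goal is to exhibit an instance where this fails.

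Next I would identify the two competing effects. By the scheduler's monotonicity under bundling \ref{sched:monot-bundle}, the makespan weakly grows, $v(\txset \cup \{\tx_1, \tx_2\}) \le v(\txset \cup \{\tx_3\})$, which favors the property holding; but the coefficient $\frac{2}{|\txset|+2}$ strictly exceeds $\frac{1}{|\txset|+1}$ whenever $|\txset| \ge 1$ (equivalently $2(|\txset|+1) > |\txset|+2$), which works against it. The strategy is therefore to pick an instance with $|\txset| \ge 1$ in which bundling does not increase the makespan at all, so that the two makespans coincide and the coefficient gap alone produces a violation.

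The cleanest way to force equal makespans is to eliminate all parallelism by placing every transaction on a single shared storage key, so that every admissible schedule (the optimal one in particular) runs them sequentially and the makespan is simply the sum of the execution times. Concretely I would take $\txset = \{\tx_0\}$ with $\tx_0 \simeq (1, \{k_1\})$ and $\tx_1, \tx_2 \simeq (1, \{k_1\})$, whence $\tx_3 \simeq (2, \{k_1\})$; then $v(\txset \cup \{\tx_1, \tx_2\}) = v(\txset \cup \{\tx_3\}) = 3$, yielding a left-hand side of $2 \cdot \frac{3}{3} = 2$ and a right-hand side of $\frac{3}{2}$, a strict violation.

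The main obstacle --- really the only subtle point --- is recognizing which way the tension resolves: one might expect Transaction Bundling to hold since bundling can only grow the makespan, but the per-transaction averaging in ESM means that dividing a fixed makespan among fewer transactions (the bundled side) awards each a larger share, so the bundled transaction ends up cheaper than its two constituents combined. Fixing the makespans to be exactly equal isolates this averaging effect, which is precisely why the shared-key instance suffices.
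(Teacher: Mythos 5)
Your proposal is correct and uses essentially the same counterexample as the paper: a block $\txset$ containing one unit-length transaction on a single shared key $k_1$, with $\tx_1, \tx_2 \simeq (1, \{k_1\})$ bundled into $\tx_3 \simeq (2, \{k_1\})$, giving gas $2 \cdot \frac{3}{3} = 2$ on the unbundled side versus $\frac{3}{2}$ on the bundled side. Your preliminary coefficient analysis ($\frac{2}{|\txset|+2}$ versus $\frac{1}{|\txset|+1}$) is a nice motivating addition but leads to the identical instance the paper exhibits.
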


\begin{restatable}{lemma}{lemmaxsmhassricttxbundling} The XSM \GCM strictly satisfies Transaction Bundling (\cref{prop:tx_bundling}).
\end{restatable}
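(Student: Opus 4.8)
The plan is to unfold the XSM definition on both sides and reduce the claimed strict inequality to a simple comparison of two makespans, for which scheduler property \ref{sched:monot-bundle} will do the heavy lifting. Write $m = |\txset|$, so that the bundled block $\txset \cup \{\tx_3\}$ contains $m+1$ transactions while the unbundled block $\txset \cup \{\tx_1, \tx_2\}$ contains $m+2$. Substituting into \cref{def:exp-split-makespan-gas-comp-mech}, the left-hand side of Transaction Bundling becomes $\gas_{\txset \cup \{\tx_1, \tx_2\}}^\textit{XSM}(\tx_1) + \gas_{\txset \cup \{\tx_1, \tx_2\}}^\textit{XSM}(\tx_2) = 2 \cdot v(\txset \cup \{\tx_1, \tx_2\}) / 3^{m+2}$, since both summands equal $v(\txset \cup \{\tx_1, \tx_2\}) / 3^{m+2}$, while the right-hand side is $\gas_{\txset \cup \{\tx_3\}}^\textit{XSM}(\tx_3) = v(\txset \cup \{\tx_3\}) / 3^{m+1}$.

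Next I would clear denominators: multiplying the target strict inequality through by $3^{m+2} > 0$ shows it is equivalent to $2 \cdot v(\txset \cup \{\tx_1, \tx_2\}) < 3 \cdot v(\txset \cup \{\tx_3\})$. Here the exponential split does exactly the work it was designed for: the two denominators differ by a factor of $3$, so the right-hand side is effectively inflated by $3$ against a factor of $2$ on the left, leaving comfortable slack. Applying scheduler property \ref{sched:monot-bundle} (monotonicity under bundling) gives $v(\txset \cup \{\tx_1, \tx_2\}) \leq v(\txset \cup \{\tx_3\})$, and multiplying by $2$ yields $2 \cdot v(\txset \cup \{\tx_1, \tx_2\}) \leq 2 \cdot v(\txset \cup \{\tx_3\})$.

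It then remains to upgrade this to a \emph{strict} inequality, which is the only subtle point and the step I would be most careful about. Since $\tx_3 \simeq (t_1 + t_2, K_1 \cup K_2)$ with $t_1, t_2 > 0$, any admissible schedule must execute $\tx_3$ on a single thread without preemption for $t_1 + t_2$ continuous time units, so $v(\txset \cup \{\tx_3\}) \geq t_1 + t_2 > 0$. Consequently $2 \cdot v(\txset \cup \{\tx_3\}) < 3 \cdot v(\txset \cup \{\tx_3\})$, and chaining this with the previous estimate gives $2 \cdot v(\txset \cup \{\tx_1, \tx_2\}) < 3 \cdot v(\txset \cup \{\tx_3\})$, as required. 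The essential observation is precisely this positivity of the makespan: it is what makes the factor-of-$3$ gap in the denominators bite and delivers strictness \emph{unconditionally}, that is, regardless of whether the underlying bundling inequality $v(\txset \cup \{\tx_1, \tx_2\}) \leq v(\txset \cup \{\tx_3\})$ happens to be strict or not.
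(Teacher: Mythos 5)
Your proof is correct and takes essentially the same route as the paper's: unfold the XSM definition so the factor-of-$3$ gap between $3^{|\txset|+2}$ and $3^{|\txset|+1}$ beats the factor of $2$ on the left, and invoke scheduler property \ref{sched:monot-bundle} for $v(\txset\cup\{\tx_1,\tx_2\}) \leq v(\txset\cup\{\tx_3\})$. The only cosmetic differences are that you apply the strict step ($2x < 3x$ for $x > 0$) to $v(\txset\cup\{\tx_3\})$ where the paper applies it to $v(\txset\cup\{\tx_1,\tx_2\})$, and that you justify the makespan's positivity explicitly, which the paper leaves implicit.
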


\emph{Set Inclusion  (\cref{prop:set_inclusion}).} We find that the TPM and ESM \GCMs satisfy Set Inclusion, while the Shapley, Banzhaf and XSM \GCMs do not satisfy the property. We prove these facts in the following 5 lemmas.

\begin{restatable}{lemma}{lemmashapleynotsetincl} The Shapley \GCM does \emph{not} satisfy Set Inclusion (\cref{prop:set_inclusion}).
\end{restatable}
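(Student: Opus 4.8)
The plan is to disprove \cref{prop:set_inclusion} for the Shapley \GCM by exhibiting an explicit counterexample, i.e., a background set $\txset$ together with $\txset_1 \subseteq \txset_2$ (both disjoint from $\txset$) for which $\gas_{\txset \cup \txset_1}^S(\txset_1) > \gas_{\txset \cup \txset_2}^S(\txset_2)$. I would first translate the statement into the language of cooperative game theory, viewing the makespan $v$ (under the optimal scheduler) as the valuation function and $\gas_{\cdot}^S$ as Shapley values, so that everything reduces to comparing Shapley allocations across the two player sets $P_1 = \txset \cup \txset_1$ and $P_2 = \txset \cup \txset_2$.

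The key reformulation I would carry out first uses Efficiency (\cref{lemma:shapley-efficient}): since $\sum_{\tx \in P_i}\phi_{\tx}^{P_i} = v(P_i)$, where $\phi^{P_i}$ denotes Shapley values computed on the player set $P_i$, we have $\gas_{\txset \cup \txset_i}^S(\txset_i) = v(P_i) - \sum_{\tx \in \txset}\phi_{\tx}^{P_i}$. Hence a violation is equivalent to the clean condition that the background set's total Shapley value rises by strictly more than the makespan when the new transactions are added, that is $\sum_{\tx \in \txset}\phi_{\tx}^{P_2} - \sum_{\tx \in \txset}\phi_{\tx}^{P_1} > v(P_2) - v(P_1)$. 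Equivalently, the total gas charged to the growing coalition must strictly decrease even though $\txset_1 \subseteq \txset_2$ gains members.

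To build such an instance, I would exploit the fact that the makespan game is neither sub- nor supermodular: a transaction that is ``absorbed'' in isolation (running in parallel, contributing almost nothing) can become pivotal once contention is introduced. Concretely, the design target is a newcomer in $\txset_2 \setminus \txset_1$ that is essentially redundant at the level of the grand coalition $P_2$ (so that $v(P_2) - v(P_1)$ stays small, ideally zero) yet makes one or more background transactions pivotal across many sub-coalitions $S$, thereby raising their averaged marginal contributions $v(S \cup \{\tx\}) - v(S)$ and hence their Shapley values. Overlapping storage keys and a finite thread count $n$ are the two levers I would use to create this localized supermodular effect while holding the overall makespan fixed, and verification is then a finite computation of the $2^{|P_2|}$ marginal contributions.

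The main obstacle is precisely engineering this redistribution against two opposing effects. First, a newcomer that conflicts with the background in order to make it pivotal tends to behave like a ``dictator'' for its own execution time, grabbing Shapley value itself rather than transferring it to the background. Second, any dominating (long or otherwise critical) transaction ``absorbs'' the newcomer in exactly those sub-coalitions where it appears, so the background's new pivotal opportunities are confined to small sub-coalitions, where the resulting gain is frequently cancelled out exactly by the dilution of the empty-set term. Indeed, the simplest attempts --- e.g.\ a single background transaction with a single newcomer, or unit-length transactions with unboundedly many threads --- yield only equality, so the counterexample must carefully tune the execution times, the pattern of shared keys, and $n$ so that the background's localized gain strictly exceeds $v(P_2)-v(P_1)$.
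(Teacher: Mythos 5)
Your reduction via Efficiency is sound, and your design intuition --- add a newcomer that leaves the grand-coalition makespan unchanged while making background transactions pivotal in sub-coalitions, thereby shifting Shapley mass onto $\txset$ --- is precisely the mechanism behind a valid counterexample. The genuine gap is that you never exhibit one: the lemma is an existential claim, and its proof must end with a concrete instance and a verified computation, whereas your proposal stops at a search strategy and an unperformed ``finite computation of the $2^{|P_2|}$ marginal contributions.''

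Worse, your closing claim --- that unit-length transactions with unboundedly many threads ``yield only equality,'' so that execution times, key patterns, and $n$ must be carefully tuned --- is false, and it is what stalls the construction. The paper's counterexample is exactly of the form you dismiss: take $\txset = \{\tx_1\}$, $\txset_1 = \{\tx_2, \tx_3\}$, $\txset_2 = \txset_1 \cup \{\tx_4\}$ with $\tx_1 \simeq (1,\{k_1\})$, $\tx_2 \simeq (1,\{k_2\})$, $\tx_3 \simeq (1,\{k_2\})$, $\tx_4 \simeq (1,\{k_1\})$; all makespans here are independent of $n$ for every $n \geq 2$. In $\txset \cup \txset_1$ the conflicting pair bears the load: $\gas_{\txset \cup \txset_1}^S(\tx_2) = \gas_{\txset \cup \txset_1}^S(\tx_3) = \frac{5}{6}$, so $\gas_{\txset \cup \txset_1}^S(\txset_1) = \frac{5}{3}$, while $\tx_1$ pays only $\frac{1}{3}$. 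Adding $\tx_4$, which conflicts with $\tx_1$ on $k_1$, leaves the makespan at $2$ but symmetrizes the game, so each of the four transactions pays $\frac{1}{2}$ and $\gas_{\txset \cup \txset_2}^S(\txset_2) = \frac{3}{2} < \frac{5}{3}$. Note that the newcomer does grab value for itself ($\frac{1}{2}$), but, consistent with your own reformulation, the background's share rises from $\frac{1}{3}$ to $\frac{1}{2} > 0 = v(P_2) - v(P_1)$, which is all that is needed. The likely source of your ``only equality'' finding is that you tried $|\txset_1| = 1$: there the newcomer's dictator effect indeed dominates (one checks, e.g., $\txset_1 = \{\tx_2\}$ gives $\frac{1}{2} \leq \frac{7}{6}$, no violation). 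The fix is to place a conflicting \emph{pair} in $\txset_1$ so that in $P_1$ the pair, not the background, absorbs the cost --- not to abandon unit lengths or bound the thread count.
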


\begin{restatable}{lemma}{lemmabanzhafnosetincl} The Banzhaf \GCM does \emph{not} satisfy Set Inclusion (\cref{prop:set_inclusion}).
\end{restatable}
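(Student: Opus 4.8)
The plan is to treat this as a negative result and exhibit an explicit counterexample, working with the optimal scheduler (our convention for negative results) and taking the base set $\txset = \varnothing$. With $\txset = \varnothing$, the quantity $\gas^B_{\txset_i}(\txset_i)$ is simply the sum of the Banzhaf values of \emph{all} transactions in $\txset_i$, so it suffices to find $\txset_1 \subsetneq \txset_2$ for which this \emph{total} Banzhaf value strictly \emph{decreases} when passing from $\txset_1$ to $\txset_2$.

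To guide the search, I would first record how the total Banzhaf value changes when a single transaction $\tx$ is appended to a set $\txset_1$ of size $n$. Using the reindexing $\sum_{\tx'}\sum_{S \not\ni \tx'} v(S \cup \{\tx'\}) = \sum_{S} |S|\,v(S)$ to rewrite the total as $\frac{1}{2^{n-1}}\sum_{S \subseteq \txset_1}(2|S| - n)\,v(S)$, a short difference computation yields the closed form
\[
    \sum_{\tx' \in \txset_1 \cup \{\tx\}} \gas^B_{\txset_1 \cup \{\tx\}}(\tx') - \sum_{\tx' \in \txset_1} \gas^B_{\txset_1}(\tx') = \frac{1}{2^{n}} \sum_{S \subseteq \txset_1} \bigl(2|S| - n + 1\bigr)\bigl[v(S \cup \{\tx\}) - v(S)\bigr].
\]
Since $v$ is monotone (\ref{sched:monot-t}), each bracketed marginal term is nonnegative, so the sign is controlled entirely by the coefficients $2|S| - n + 1$, which are negative for small $S$ and positive for large $S$. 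Hence the total can decrease only if the added transaction exhibits \emph{diminishing returns}: a large marginal contribution on small subsets but a negligible one on large subsets. The hard part is realizing this with a monotone makespan function, and this is the main obstacle. For unit-time or mutually conflicting transactions the marginal contribution is in fact largest on the full set (positive coefficient), so the total only grows; one must instead use non-uniform execution times together with a bounded number of threads, so that a schedulable idle ``gap'' appears once enough transactions are present and can absorb the new transaction at no cost to the makespan.

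Concretely, I would take two threads (the construction in fact works for any $n \geq 2$) and transactions $\tx_1 \simeq (3, \{k_1\})$, $\tx_2 \simeq (1, \{k_2\})$, and $\tx_3 \simeq (2, \{k_3\})$ on pairwise-disjoint keys, with $\txset_1 = \{\tx_1, \tx_2\}$ and $\txset_2 = \{\tx_1, \tx_2, \tx_3\}$. The key scheduling observation is that $\tx_3$ slots into the idle interval that $\tx_2$ leaves alongside the long transaction $\tx_1$, so $v(\txset_2) = v(\txset_1) = 3$, while $v(\{\tx_3\}) = 2 > 0$ gives $\tx_3$ a large marginal on the empty set, exactly the diminishing-returns pattern sought above. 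I would then tabulate the eight makespans $v(S)$ for $S \subseteq \txset_2$ and either substitute them into the identity or compute the Banzhaf values directly, obtaining $\gas^B_{\txset_2}(\txset_2) = \tfrac{11}{4} < 3 = \gas^B_{\txset_1}(\txset_1)$, which violates Set Inclusion.
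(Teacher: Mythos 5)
Your proof is correct, and at the top level it takes the same route as the paper: a three-transaction counterexample with base set $\txset = \varnothing$ under the optimal scheduler, where adding one transaction dilutes the Banzhaf values of the others by more than its own value adds. The differences are worth noting. The paper's example is simpler than yours and, in particular, uses \emph{unit-time} transactions: $\tx_1 \simeq (1,\{k_1\})$, $\tx_2 \simeq (1,\{k_1\})$, $\tx_3 \simeq (1,\{k_2\})$, with $\txset_1 = \{\tx_1,\tx_2\}$ and $\txset_2 = \txset_1 \cup \{\tx_3\}$, giving $\gas^B_{\txset_1}(\txset_1) = 2 > \tfrac{7}{4} = \gas^B_{\txset_2}(\txset_2)$. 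This directly refutes your heuristic claim that unit-time transactions can only increase the total and that non-uniform execution times are necessary: in the paper's example the disjoint transaction $\tx_3$ has marginal contribution $1$ on $S = \varnothing$ and $0$ on every nonempty $S$, which is precisely the diminishing-returns pattern your coefficient analysis calls for (your identity yields $\Delta = \tfrac{1}{4}\bigl[(-1)\cdot 1 + 1\cdot 0 + 1\cdot 0 + 3\cdot 0\bigr] = -\tfrac{1}{4}$, matching $2 \mapsto \tfrac{7}{4}$). Your claim is true for \emph{mutually conflicting} unit-time transactions, where $v(S) = |S|$ forces $\Delta = 1 > 0$, but the disjunction as you stated it overreaches; since this is only motivation for your search, it does not affect validity. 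Your own construction checks out: with pairwise-disjoint keys and any $n \geq 2$ the eight makespans are as you describe (the subsets of size at most two need only two threads, and $t(\tx_2)+t(\tx_3) = t(\tx_1)$ makes $v(\txset_2) = 3$ regardless of $n$), and the Banzhaf values come to $\tfrac{5}{2} + \tfrac{1}{2} = 3$ for $\txset_1$ versus $\tfrac{7}{4} + \tfrac{1}{4} + \tfrac{3}{4} = \tfrac{11}{4}$ for $\txset_2$, violating Set Inclusion. What your route buys beyond the paper's is the closed form $\sum_{\tx' \in \txset_1 \cup \{\tx\}} \gas^B_{\txset_1 \cup \{\tx\}}(\tx') - \sum_{\tx' \in \txset_1} \gas^B_{\txset_1}(\tx') = \tfrac{1}{2^{n}} \sum_{S \subseteq \txset_1} (2|S| - n + 1)\bigl[v(S \cup \{\tx\}) - v(S)\bigr]$, whose derivation via the reindexing $\sum_{S}(2|S|-n)\,v(S)$ is correct and which the paper does not contain; it characterizes exactly when adding a single transaction decreases the total and would support a more systematic analysis, whereas the paper's bare example is shorter and easier to verify.
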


\begin{restatable}{lemma}{lemmatpmhassetincl} The TPM \GCM satisfies Set Inclusion (\cref{prop:set_inclusion}).
\end{restatable}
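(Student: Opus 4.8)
The plan is to expand both sides of the Set Inclusion inequality using the definition of the TPM \GCM and reduce the claim to a product of two elementary monotonicity facts. Throughout, I would write $S_A := \sum_{\tx \in A} t(\tx)$ for any finite set of transactions $A$, noting that $S_A > 0$ whenever $A \neq \varnothing$ since all times are positive.

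First, I would sum the per-transaction TPM gas consumption over $\txset_i$ (for $i \in \{1,2\}$). The common makespan factor $v(\txset \cup \txset_i)$ pulls out of the sum, giving
\[
\gas^\textit{TPM}_{\txset \cup \txset_i}(\txset_i) = \sum_{\tx \in \txset_i} \frac{t(\tx)}{S_{\txset \cup \txset_i}} \cdot v(\txset \cup \txset_i) = \frac{S_{\txset_i}}{S_\txset + S_{\txset_i}} \cdot v(\txset \cup \txset_i),
\]
where the final equality uses that $\txset$ and $\txset_i$ are disjoint, so $S_{\txset \cup \txset_i} = S_\txset + S_{\txset_i}$. (If $\txset_1 = \varnothing$ the left-hand side is $0$ and the claim is immediate, so I may assume $\txset_1 \neq \varnothing$, making all denominators strictly positive.) The target inequality thus becomes
\[
\frac{S_{\txset_1}}{S_\txset + S_{\txset_1}} \cdot v(\txset \cup \txset_1) \leq \frac{S_{\txset_2}}{S_\txset + S_{\txset_2}} \cdot v(\txset \cup \txset_2).
\]

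I would establish this by dominating each factor on the left by the corresponding factor on the right. For the time-share factor: $\txset_1 \subseteq \txset_2$ and positivity of times give $S_{\txset_1} \leq S_{\txset_2}$, and the map $x \mapsto x/(S_\txset + x)$ is non-decreasing on $x > 0$ (its derivative $S_\txset/(S_\txset + x)^2$ is non-negative), so $\frac{S_{\txset_1}}{S_\txset + S_{\txset_1}} \leq \frac{S_{\txset_2}}{S_\txset + S_{\txset_2}}$. For the makespan factor: since $\txset \cup \txset_1 \subseteq \txset \cup \txset_2$, scheduler monotonicity~\ref{sched:monot-t} yields $v(\txset \cup \txset_1) \leq v(\txset \cup \txset_2)$. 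As all four quantities are non-negative, multiplying the two inequalities gives the result.

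The argument is short and I do not anticipate a genuine obstacle; the only points requiring care are the bookkeeping of the disjointness assumption (needed to split $S_{\txset \cup \txset_i}$ as $S_\txset + S_{\txset_i}$) and dispatching the degenerate empty-set cases separately. The substance of the proof is simply the observation that both the time-share factor and the makespan factor are individually monotone under enlarging $\txset_i$, together with the fact that a product of non-negative, non-decreasing factors is non-decreasing.
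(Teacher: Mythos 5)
Your proof is correct and follows essentially the same route as the paper's: expand the summed TPM gas into a time-share factor times the makespan, bound the share using $\txset_1 \subseteq \txset_2$ and disjointness from $\txset$, bound the makespan via scheduler monotonicity~\ref{sched:monot-t}, and multiply. The only cosmetic difference is that you verify monotonicity of the share via the derivative of $x \mapsto x/(S_\txset + x)$, whereas the paper adds $\sum_{\tx \in \txset_2 \setminus \txset_1} t(\tx)$ to both numerator and denominator of a fraction at most $1$ --- the same elementary fact.
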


\begin{restatable}{lemma}{lemmaesmhassetincl} The ESM \GCM satisfies Set Inclusion (\cref{prop:set_inclusion}).
\end{restatable}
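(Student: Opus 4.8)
The plan is to expand both sides of the Set Inclusion inequality directly from the ESM definition, and then split the comparison into two independent monotonicities that can be multiplied together. Write $A_i := \txset \cup \txset_i$ for $i \in \{1,2\}$; since $\txset_i$ is disjoint from $\txset$, we have $|A_i| = |\txset| + |\txset_i|$, and every transaction in $A_i$ is assigned the same gas $v(A_i)/|A_i|$. Summing over the $|\txset_i|$ transactions of $\txset_i$ gives
\[
    \gas_{\txset \cup \txset_i}^\textit{ESM}(\txset_i) = \sum_{\tx \in \txset_i} \frac{v(\txset \cup \txset_i)}{|\txset| + |\txset_i|} = \frac{|\txset_i|}{|\txset| + |\txset_i|} \cdot v(\txset \cup \txset_i).
\]
Thus the goal reduces to showing
\[
    \frac{|\txset_1|}{|\txset| + |\txset_1|} \cdot v(\txset \cup \txset_1) \leq \frac{|\txset_2|}{|\txset| + |\txset_2|} \cdot v(\txset \cup \txset_2).
\]

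Next I would prove the corresponding factors are ordered. For the makespan factor, $\txset_1 \subseteq \txset_2$ implies $\txset \cup \txset_1 \subseteq \txset \cup \txset_2$, so monotonicity of the scheduler in the transaction set (\ref{sched:monot-t}) gives $v(\txset \cup \txset_1) \leq v(\txset \cup \txset_2)$. For the combinatorial factor, I would observe that the map $x \mapsto \tfrac{x}{|\txset| + x} = 1 - \tfrac{|\txset|}{|\txset| + x}$ is nondecreasing in $x \geq 0$, so from $|\txset_1| \leq |\txset_2|$ we conclude $\tfrac{|\txset_1|}{|\txset| + |\txset_1|} \leq \tfrac{|\txset_2|}{|\txset| + |\txset_2|}$.

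Finally, all four quantities involved are nonnegative: the makespans are nonnegative by \ref{sched:empty-set} together with monotonicity, and the fractions are manifestly nonnegative. Multiplying the two ``smaller $\leq$ larger'' inequalities (valid because $0 \le a \le b$ and $0 \le c \le d$ yield $ac \le bd$) then gives the claim. There is no genuine obstacle in this argument; the only point requiring a little care is the degenerate case where $\txset_1 = \varnothing$ (so possibly also $|\txset \cup \txset_1| = 0$), where one should read $\gas_{\txset \cup \txset_1}(\txset_1)$ as the empty sum, which is exactly $0$, making the inequality hold trivially without having to evaluate the per-transaction quotient.
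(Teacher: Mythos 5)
Your proof is correct and takes essentially the same route as the paper's: both expand $\gas^\textit{ESM}_{\txset\cup\txset_i}(\txset_i)$ as $\frac{|\txset_i|}{|\txset\cup\txset_i|}\cdot v(\txset\cup\txset_i)$, bound the combinatorial factor via the monotonicity of $x \mapsto \frac{x}{|\txset|+x}$ (the paper phrases this as the mediant-style step $\frac{a}{b} \leq \frac{a+c}{b+c}$, using disjointness in the same way), and finish with scheduler monotonicity \ref{sched:monot-t}. Your explicit treatment of nonnegativity and of the degenerate case $\txset_1 = \varnothing$ is a minor point of care that the paper leaves implicit.
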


\begin{restatable}{lemma}{lemmaxsmnosetinclusion} The XSM \GCM does \emph{not} satisfy Set Inclusion (\cref{prop:set_inclusion}).
\end{restatable}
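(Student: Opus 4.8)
The plan is to exhibit an explicit counterexample, exploiting the fact that under XSM every transaction in a block $\txset'$ is charged the same amount $v(\txset')/3^{|\txset'|}$, so the total gas charged to a full block of $m$ transactions is $m \cdot v(\txset')/3^{m}$. The crucial observation is that the combinatorial prefactor $m/3^{m}$ is strictly decreasing in $m$ for $m \geq 1$ (it goes $1/3, 2/9, 3/27, \dots$). Hence, if we can enlarge a block by one transaction without letting the makespan grow too much, the total gas will actually \emph{decrease}, directly contradicting Set Inclusion.

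Concretely, I would take $\txset = \varnothing$ and choose two transactions accessing disjoint storage keys, say $\tx_1 \simeq (1, \{k_1\})$ and $\tx_2 \simeq (1, \{k_2\})$, setting $\txset_1 = \{\tx_1\} \subseteq \{\tx_1, \tx_2\} = \txset_2$. Since $\tx_1$ and $\tx_2$ have disjoint storage key sets and $n \geq 2$ threads are available, the optimal scheduler executes them in parallel, giving $v(\txset_1) = v(\txset_2) = 1$. Plugging into the definition of XSM (\cref{def:exp-split-makespan-gas-comp-mech}) yields $\gas_{\txset_1}(\txset_1) = 1 \cdot 1/3^{1} = 1/3$ and $\gas_{\txset_2}(\txset_2) = 2 \cdot 1/3^{2} = 2/9$, so that $\gas_{\txset_1}(\txset_1) = 1/3 > 2/9 = \gas_{\txset_2}(\txset_2)$, violating the inequality required by Set Inclusion (\cref{prop:set_inclusion}).

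There is essentially no hard step here: the only thing to verify carefully is that adding $\tx_2$ does not increase the makespan, which is immediate because the two transactions are parallelizable on $n \geq 2$ threads. The conceptual point worth stressing is \emph{why} the counterexample works --- the exponential denominator $3^{|\txset'|}$ shrinks the per-transaction charge so aggressively that it overwhelms both the extra summand contributed by the added transaction and any (bounded) growth in makespan permitted by scheduler monotonicity \ref{sched:monot-t}. This is precisely the ``pathological'' behaviour alluded to when XSM was introduced, and it is what separates XSM from ESM, whose linear split $v(\txset')/|\txset'|$ does satisfy the property.
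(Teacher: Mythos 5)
Your proposal is correct and uses the exact counterexample from the paper's own proof: $\txset = \varnothing$, $\txset_1 = \{\tx_1\}$, $\txset_2 = \{\tx_1, \tx_2\}$ with $\tx_1 \simeq (1, \{k_1\})$, $\tx_2 \simeq (1, \{k_2\})$, yielding $\frac{1}{3} > \frac{2}{9}$. The additional explanation of \emph{why} the exponential denominator drives the failure is a nice touch but the argument itself matches the paper's verbatim.
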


\emph{Storage Key-Time Monotonicity (\cref{prop:resource_time_monotonicity}).} All five mechanisms satisfy this property. For the ESM and XSM \GCMs, this is an immediate consequence of property \ref{sched:monot-t-R}. For the Shapley and Banzhaf \GCMs, one can see this by recalling that they compute the gas consumption of a transaction $\tx \in T$ as a weighted average over $\left(v(S\cup\{ \tx\} ) - v(S)\right)_{S\subseteq \txset\setminus\{\tx\}}$. Hence, by property \ref{sched:monot-t-R}, when $\tx$ is replaced with some $\tx' \gtrsim \tx$, no term in the previous decreases, so their weighted average also does not decrease. Finally, for the TPM \GCM, we show this in the lemma after the next paragraph*.

\emph{Storage Key Monotonicity (\cref{prop:resource_monotonicity}) and Time Monotonicity (\cref{prop:time_monotonicity}).} Because all five mechanisms satisfy Storage Key-Time Monotonicity, by \cref{lemma:prop-1-2-3-equivalence}, they also all satisfy Storage Key Monotonicity and Time Monotonicity. Out of the five mechanisms, the Shapley, Banzhaf, and TPM \GCMs satisfy the property strictly. For the first two, this is because when $t(\tx)$ increases, at least one term in $\left(v(S\cup\{ \tx\} ) - v(S)\right)_{S\subseteq \txset\setminus\{\tx\}}$ strictly increases, namely the term for $S = \varnothing$, while no terms decrease by property \ref{sched:monot-t-R}. Last, for the TPM \GCM, we show this in the lemma below.

\begin{restatable}{lemma}{lemmatpmrestimesensandtimesens} The TPM \GCM satisfies Storage Key-Time Monotonicity (\cref{prop:resource_time_monotonicity}) and strict Time Monotonicity (\cref{prop:time_monotonicity}). 
\end{restatable}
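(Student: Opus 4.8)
The plan is to exploit the multiplicative structure of the mechanism. Writing $S := \sum_{\tx' \in \txset} t(\tx')$ for the total time of the base set $\txset$, the gas assigned to an added transaction $\tx_i \simeq (t_i, K_i)$ is, by \cref{def:time-prop-makespan-gas-comp-mech},
\[
\gas_{\txset \cup \{\tx_i\}}^\textit{TPM}(\tx_i) = \frac{t_i}{S + t_i} \cdot v(\txset \cup \{\tx_i\}),
\]
a product of two non-negative factors. I will use three facts about the makespan: monotonicity \ref{sched:monot-t-R} in $t$ and $K$; that a single transaction has makespan equal to its time, $v(\{\tx\}) = t(\tx)$; and, via \ref{sched:monot-t}, the bound $v(\txset \cup \{\tx\}) \geq v(\{\tx\}) = t(\tx) > 0$. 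I will also use that $t \mapsto \tfrac{t}{S+t}$ is non-decreasing (derivative $\tfrac{S}{(S+t)^2} \geq 0$), and strictly increasing exactly when $S > 0$.

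For Storage Key-Time Monotonicity (\cref{prop:resource_time_monotonicity}), suppose $\tx_1 \lesssim \tx_2$, i.e., $t_1 \leq t_2$ and $K_1 \subseteq K_2$. Then the first factors satisfy $\tfrac{t_1}{S+t_1} \leq \tfrac{t_2}{S+t_2}$ by monotonicity of the map above, and the second factors satisfy $v(\txset \cup \{\tx_1\}) \leq v(\txset \cup \{\tx_2\})$ by \ref{sched:monot-t-R}. As both factors are non-negative, multiplying the two inequalities directly yields $\gas_{\txset \cup \{\tx_1\}}^\textit{TPM}(\tx_1) \leq \gas_{\txset \cup \{\tx_2\}}^\textit{TPM}(\tx_2)$, as required.

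For strict Time Monotonicity (\cref{prop:time_monotonicity}), suppose $K_1 = K_2$ and $t_1 < t_2$, and abbreviate $v_i := v(\txset \cup \{\tx_i\})$. Since $t_i > 0$, both denominators are positive, so the goal $\gas_{\txset \cup \{\tx_1\}}^\textit{TPM}(\tx_1) < \gas_{\txset \cup \{\tx_2\}}^\textit{TPM}(\tx_2)$ is equivalent, after cross-multiplying, to
\[
t_1 v_1 S + t_1 t_2 v_1 < t_2 v_2 S + t_1 t_2 v_2.
\]
By \ref{sched:monot-t-R} we have $v_1 \leq v_2$, which immediately gives $t_1 t_2 v_1 \leq t_1 t_2 v_2$; and since $t_1 < t_2$ with $v_1 > 0$ we get $t_1 v_1 < t_2 v_1 \leq t_2 v_2$, so also $t_1 v_1 S \leq t_2 v_2 S$. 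Thus both summand comparisons hold, and it remains only to check that at least one is strict.

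The main obstacle is the boundary case $S = 0$ (that is, $\txset = \varnothing$), where the first factor collapses to the constant $1$ and the $S$-weighted terms $t_1 v_1 S$ and $t_2 v_2 S$ both vanish, so strictness cannot come from them. Here I invoke $v(\{\tx\}) = t(\tx)$ to get $v_1 = t_1 < t_2 = v_2$, making the remaining comparison $t_1 t_2 v_1 < t_1 t_2 v_2$ strict. When $S > 0$, the situation is reversed: $v_1 > 0$ and $t_1 < t_2$ give $t_1 v_1 < t_2 v_1 \leq t_2 v_2$, hence $t_1 v_1 S < t_2 v_2 S$ supplies the strictness. In either case one of the two summand comparisons is strict, yielding the strict inequality and completing the proof.
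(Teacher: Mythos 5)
Your proof is correct and follows essentially the same route as the paper: both decompose $\gas^\textit{TPM}$ into the fraction factor $\tfrac{t}{S+t}$ and the makespan factor, apply \ref{sched:monot-t-R} to the latter and monotonicity of $t \mapsto \tfrac{t}{S+t}$ to the former, and handle strictness via the same case split on $S = 0$ (where $\txset = \varnothing$ forces $v(\{\tx_i\}) = t_i$) versus $S > 0$ (where the fraction is strictly increasing). Your cross-multiplication into summand-wise comparisons is merely an algebraic repackaging of the paper's chain of inequalities, not a different argument.
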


\section{Towards a Fee Market for Parallel Execution}\label{sec:outlook}

Armed with an understanding of the trade-offs between desirable properties in a \GCM, in particular the impossibility of satisfying all properties within a single mechanism, and informed by our analysis of various candidate mechanisms and their properties, we propose two mechanisms for practical implementations of a fee market supporting parallel execution. Each represents one side of the design spectrum: one drawn from the class of mechanisms with Easy Gas Estimation, and the other from the class without it.

The advantage of adopting a mechanism with Easy Gas Estimation is that satisfying this property ensures that each transaction consumes a fixed amount of gas, regardless of other transactions in the same block. From the perspective of currently deployed TFMs (e.g., EIP‑1559), which process transactions with fixed sizes, nothing changes. Thus, the existing properties of these mechanisms remain intact. On a high level, the key properties we strive for in a TFM are incentive compatibility for both block producers and users, welfare optimality, and collusion resistance. However, no TFM can achieve all these properties simultaneously~\cite{chung2023foundations,chung2024collusion,gafni2024barriers}. Importantly, when composing a \GCM that satisfies Easy Gas Estimation with a TFM of choice, the level of sophistication required from users in their bidding strategy does not increase, unlike in currently deployed fee markets for parallel execution~\cite{lostin2025truth,mueller2025}.

On the other hand, foregoing Easy Gas Estimation makes it possible to price transactions according to the load they impose on the network relative to the other transactions in the block, rather than evaluating each one in isolation. Mechanisms in this class can align the total gas charged in a block with the actual execution cost of that block, enabling resource pricing at the block level rather than only at the transaction level.
Recent discussions in Ethereum research have explored similar designs under the umbrella of block-level fee markets~\cite{ethresearch_blockfee2025}. If the protocol is already moving toward block-level metering for other resources, such as data availability or storage contention, extending this approach to execution costs could enable direct efficiency gains from parallel execution that are not achievable with transaction-level pricing alone.

For the class of mechanisms with Easy Gas Estimation, the natural choice is the Weighted Area \GCM. Given that it satisfies Easy Gas Estimation, it achieves the most additional properties one could hope for in a (non-constant) \GCM (see \cref{thm:impossibility}).

For the class of mechanisms without Easy Gas Estimation, we propose the Time-Proportional Makespan \GCM as a concrete candidate. By allocating gas in proportion to each transaction’s execution time relative to the block’s total makespan, it provides a simple way to align gas consumption with how transactions constrain parallel execution within a block.

The decision between these approaches ultimately depends on protocol-level priorities: maintaining user-side simplicity and compatibility with existing TFMs through Easy Gas Estimation, which also makes implementation simpler by reusing current transaction-level pricing infrastructure, or pursuing block-level efficiency and more accurate resource pricing at the cost of giving up the previous benefits.

\section{Related Work}

\subsection{Transaction Fee Mechanisms}

There is extensive research on blockchain fee markets, with a particular focus on Ethereum and Bitcoin. Early studies primarily examined Bitcoin, exploring monopolistic pricing mechanisms~\cite{lavi2022redesigning,yao2020incentive}. More recent contributions to this field include~\cite{nisan2023serial,gafni2022greedy,penna2024serial}. Unlike these works, our study concerns measuring storage key usage on a blockchain with client-side parallel execution, rather than focusing on pricing.

The TFM design framework was introduced by Roughgarden~\cite{roughgarden2020transaction,roughgarden2024transaction}. Roughgarden's analysis of the EIP-1559 mechanism~\cite{eip1559spec} initiated an active line of research on TFMs. Chung and Shi~\cite{chung2023foundations} demonstrated that no TFM can be ideal --- meaning it cannot simultaneously be incentive-compatible for users and block producers while also being resistant to collusion between the two. This conclusion holds even for weaker definitions of collusion resilience, as shown by Chung et al.~\cite{chung2024collusion} and Gafni and Yaish \cite{gafni2024barriers}. Finally, attempts to address these limitations using cryptographic techniques \cite{shi2022can,wu2023maximizing} have made progress in overcoming certain impossibilities, while other attempts relax the desiderata~\cite{gafni2024discrete}. However, designing an ideal TFM still remains out of reach. While these studies examine the limitations of TFMs, our focus is on GCMs for parallel execution and how to integrate them with a TFM.

A related body of work examines the dynamics of TFMs over multiple blocks, particularly focusing on the base fee in EIP-1559. Leonardos et al.~\cite{leonardos2021dynamical,leonardos2023optimality} demonstrate that the stability of the base fee depends on the adjustment parameter, with short-term volatility but long-term block size stability. Reijsbergen et al.~\cite{reijsbergen2021transaction} suggest using an adaptive adjustment parameter to mitigate block size fluctuations, while Ferreira et al.~\cite{ferreira2021dynamic} highlight user experience issues caused by bounded base fee oscillations. Additionally, Hougaard and Pourpouneh~\cite{hougaard2023farsighted} and Azouvi et al.~\cite{azouvi2023base} reveal that the base fee can be manipulated by non-myopic miners.

Given the discussion surrounding multi-dimensional fees in Ethereum~\cite{multidimensional_eip1559,buterin2024multidim} and the deployment of EIP-4844~\cite{eip4844} (a first step towards a multi-dimensional fee market on Ethereum), a recent line of work explores multi-dimensional fee markets, focusing on efficient pricing mechanisms and their optimality. This work is further refined by Diamandis et al.~\cite{diamandis2023designing}, who design and analyze multi-dimensional blockchain fee markets to align incentives and improve network performance. Building on this, Angeris et al.~\cite{angeris2024multidimensional} prove that such fee markets are nearly optimal, with efficiency improving over time even under adversarial conditions. Multidimensional fee markets are closely related to fee markets designed for parallel execution. In particular, in the weighted area \GCM, the weights can be interpreted as fees within a multidimensional fee market. Unlike previous literature on multidimensional fee markets, we focus on parallelization, introduce desirable properties, and evaluate how various mechanisms perform.

Further extensions of TFMs have emerged. Bahrani et al.~\cite{bahrani2024transaction} consider TFMs in the presence of maximal extractable value (MEV), i.e., value extractable by the block producer. Further, Wang et al.~\cite{wang2024mechanism} design a fee mechanism for proof networks, whereas Bahrani et al.~\cite{bahrani2024resonance} introduce a transaction fee mechanism for heterogeneous computation. Our work most closely relates to the latter, but, in contrast, our chosen approach is closer to multidimensional fee markets, trading complexity for the block producer for stronger incentive compatibility for the user.  

Local fee markets have recently been a topic of discussion in the blockchain space~\cite{eclipse2024local,diamandis2024toward,lostin2025truth,keyneom2024local}. The core idea is that transactions interacting with highly contested states incur higher fees, while those involving non-contested states pay lower fees. However, discussions on local fee markets have largely remained high-level, without a precise characterization of the desired properties beyond this general goal. Moreover, currently implemented local fee markets~\cite{lostin2025truth} require significant user sophistication to set fees appropriately. In this work, we formalize the desiderata for fee markets in the context of parallel execution and identify the weighted area \GCM as a promising candidate. One key advantage is its compatibility with a TFM, enabling simple fee estimation for users.

\subsection{Parallel Execution}

Blockchain concurrency has been a focal point in an active line of research. In particular, numerous efforts have aimed to enable parallel transaction processing through speculative execution~\cite{sergey2017concurrent,zhang2018enabling,amiri2019parblockchain,dickerson2017adding,anjana2022optsmart,gelashvili2022block,chen2021forerunner,saraph2019empirical}. Note that speculative execution is already deployed by multiple blockchains~\cite{aptos,sei_protocol,monad}. Static analysis has also been employed to identify parallelizable transactions, though it cannot completely eliminate inherent dependencies~\cite{pirlea2021practical,murgia2021theory}. Similarly, Neiheiser et al.~\cite{neiheiser2024pythia} demonstrate how parallel execution can assist struggling nodes in catching up. While these works are orthogonal to ours, they highlight the overhead of parallel execution when there is no advance knowledge about a transaction's state accesses.

Further, Saraph and Herlihy \cite{saraph2019empirical} and Heimbach et al.~\cite{heimbach2023defi} have evaluated the parallelization potential of the Ethereum workload. The latter demonstrates that a speedup of approximately fivefold is achievable, assuming state accesses are known in advance. Additionally, Solana~\cite{solana} and Sui~\cite{sui} already perform parallel execution with advance knowledge of state accesses. However, in practice, state accesses are not known beforehand on many blockchains such as Ethereum. There, less than 2\% of transactions disclose them proactively, as shown by Heimbach et al.~\cite{heimbach2023dissecting}, due to a lack of incentives. In this work, we aim to take a step toward unlocking the parallelization potential by designing a TFM that supports parallel execution. This mechanism relies on the disclosure of state accesses as done in Solana~\cite{solana} and Sui~\cite{sui}.

\subsection{Cloud Computing Pricing and Parallel Resource Allocation}

While our primary focus is on transaction fee mechanisms for blockchain networks, parallels can be drawn to resource allocation and pricing in cloud computing. Several papers study pricing schemes for cloud computing resources in a monopolistic setting, where a single provider sells access to multiple resources and sets prices to optimize revenue~\cite{10.1145/2479942.2479944}. In these models, the provider posts prices for resource usage, users select resource bundles to maximize their individual surplus, and the provider earns revenue from the resulting allocation. Mechanisms such as CloudPack~\cite{inproceedings} extend this approach by incorporating workload flexibility, using concepts like Shapley values for fair cost distribution among colocated jobs. Similarly, works such as~\cite{Pei_2024} address resource contention through dynamic pricing strategies that adjust costs based on runtime slowdowns.

While structurally similar, the blockchain context introduces an additional constraint. In addition to objectives like maximizing provider revenue, it emphasizes maximizing throughput by ensuring transactions are processed efficiently, minimizing their collective execution time within a block. Moreover, blockchain transactions explicitly declare access to distinct storage keys, introducing contention through overlapping access patterns. This adds complexity compared to cloud models, which typically abstract resource demands as scalar quantities and address contention at a more aggregated level.

Capturing the effects of storage key-level conflicts on parallel execution in blockchains requires fee computation mechanisms that are sensitive to these fine-grained interactions. Thus, while conceptually addressing similar challenges, the blockchain setting imposes unique constraints that motivate the need for specialized GCMs as explored in this work.

\section{Conclusion}

In this work, we took a step towards creating a fee market that meets the demands of parallel execution environments while also upholding the properties we want from a TFM. 

Recently, the idea of local fee markets has been proposed for blockchains that support parallel execution. However, to the best of our knowledge, before this work, the demands on these fee markets have only been outlined at a very high level, and the markets that have been implemented are not ideal yet, e.g., they require high levels of sophistication from users when bidding. 

In this work, we addressed this gap by introducing a framework with two key components: a GCM, which measures the execution-related load a transaction imposes on the network in units of gas, and a TFM, which determines the cost associated with each unit of gas. We then formalized the desired properties for the GCM in such a fee market. After outlining these desiderata, we evaluated various mechanisms against them and identified two strong candidates through this analysis: the \textit{weighted area} \GCM for the class of mechanisms with Easy Gas Estimation, and the \textit{time-proportional makespan} \GCM for the class without it.

Setting the right incentives in fee markets for parallel execution is crucial to unlocking the full potential of execution layer parallelization, and we hope that our work contributes to the development of fee markets capable of meeting the demands of such environments.



\bibliography{lipics-v2021-sample-article}

\begin{thebibliography}{10}

\bibitem{amiri2019parblockchain}
Mohammad~Javad Amiri, Divyakant Agrawal, and Amr El~Abbadi.
\newblock {Parblockchain: Leveraging Transaction Parallelism in Permissioned Blockchain Systems}.
\newblock In {\em 2019 IEEE 39th International Conference on Distributed Computing Systems (ICDCS)}, pages 1337--1347. IEEE, 2019.

\bibitem{angeris2024multidimensional}
Guillermo Angeris, Theo Diamandis, and Ciamac Moallemi.
\newblock {Multidimensional Blockchain Fees are (Essentially) Optimal}.
\newblock {\em arXiv preprint arXiv:2402.08661}, 2024.

\bibitem{anjana2022optsmart}
Parwat~Singh Anjana, Sweta Kumari, Sathya Peri, Sachin Rathor, and Archit Somani.
\newblock Optsmart: A space efficient optimistic concurrent execution of smart contracts.
\newblock {\em Distributed and Parallel Databases}, pages 1--53, 2022.

\bibitem{aptos}
{Aptos Labs}.
\newblock Aptos, 2023.
\newblock Accessed: 2025-01-23.
\newblock URL: \url{https://www.aptoslabs.com}.

\bibitem{azouvi2023base}
Sarah Azouvi, Guy Goren, Lioba Heimbach, and Alexander Hicks.
\newblock {Base Fee Manipulation in Ethereum’s EIP-1559 Transaction Fee Mechanism}.
\newblock In {\em 37th International Symposium on Distributed Computing}, 2023.

\bibitem{bahrani2024resonance}
Maryam Bahrani and Naveen Durvasula.
\newblock {Resonance: Transaction Fees for Heterogeneous Computation}.
\newblock {\em arXiv preprint arXiv:2411.11789}, 2024.

\bibitem{bahrani2024transaction}
Maryam Bahrani, Pranav Garimidi, and Tim Roughgarden.
\newblock {Transaction Fee Mechanism Design with Active Block Producers}.
\newblock In {\em International Conference on Financial Cryptography and Data Security}, pages 85--90. Springer, 2024.

\bibitem{multidimensional_eip1559}
Vitalik Buterin.
\newblock {Multidimensional EIP-1559}.
\newblock \url{https://ethresear.ch/t/multidimensional-eip-1559/11651}, 2022.
\newblock Accessed: 2025-01-23.

\bibitem{buterin2024multidim}
Vitalik Buterin.
\newblock {Multidimensional Pricing for EIP-1559}.
\newblock \url{https://vitalik.eth.limo/general/2024/05/09/multidim.html}, May 2024.
\newblock Accessed: 2025-01-23.

\bibitem{ethresearch_blockfee2025}
Vitalik Buterin.
\newblock Block-level fee markets: Four easy pieces.
\newblock \url{https://ethresear.ch/t/block-level-fee-markets-four-easy-pieces/21448}, 2025.
\newblock Accessed: 2025-08-04.

\bibitem{eip1559spec}
Vitalik Buterin, Eric Conner, Rick Dudley, Matthew Slipper, Ian Norden, and Abdelhamid Bakhta.
\newblock Eip-1559: Fee market change for eth 1.0 chain.
\newblock \url{https://github.com/ethereum/EIPs/blob/master/EIPS/eip-1559.md}, 2024.
\newblock Accessed: 2025-01-23.

\bibitem{eip4844}
Vitalik Buterin, Dankrad Feist, Diederik Loerakker, George Kadianakis, Matt Garnett, Mofi Taiwo, and Ansgar Dietrichs.
\newblock {EIP-4844: Shard Blob Transactions}.
\newblock \url{https://eips.ethereum.org/EIPS/eip-4844}, 2022.
\newblock Accessed: 2025-01-23.

\bibitem{eip2930}
Vitalik Buterin and Martin Swende.
\newblock Eip-2930: Optional access lists, October 2020.
\newblock Accessed: 2025-01-22.
\newblock URL: \url{https://eips.ethereum.org/EIPS/eip-2930}.

\bibitem{cooperative_book_edith}
Georgios Chalkiadakis, Edith Elkind, and Michael Wooldridge.
\newblock {\em Weighted Voting Games}, pages 49--70.
\newblock Springer International Publishing, Cham, 2012.
\newblock \href {https://doi.org/10.1007/978-3-031-01558-8_5} {\path{doi:10.1007/978-3-031-01558-8_5}}.

\bibitem{chen2021forerunner}
Yang Chen, Zhongxin Guo, Runhuai Li, Shuo Chen, Lidong Zhou, Yajin Zhou, and Xian Zhang.
\newblock {Forerunner: Constraint-Based Speculative Transaction Execution for Ethereum}.
\newblock In {\em Proceedings of the ACM SIGOPS 28th Symposium on Operating Systems Principles}, pages 570--587, 2021.

\bibitem{chung2024collusion}
Hao Chung, Tim Roughgarden, and Elaine Shi.
\newblock {Collusion-Resilience in Transaction Fee Mechanism Design}.
\newblock In {\em Proceedings of the 25th ACM Conference on Economics and Computation}, pages 1045--1073, 2024.

\bibitem{chung2023foundations}
Hao Chung and Elaine Shi.
\newblock {Foundations of Transaction Fee Mechanism Design}.
\newblock In {\em Proceedings of the 2023 Annual ACM-SIAM Symposium on Discrete Algorithms (SODA)}, pages 3856--3899. SIAM, 2023.

\bibitem{shapley_banzhaf_1}
Xiaotie Deng and Christos~H. Papadimitriou.
\newblock On the complexity of cooperative solution concepts.
\newblock {\em Mathematics of Operations Research}, 19(2):257--266, 1994.
\newblock \href {https://doi.org/10.1287/moor.19.2.257} {\path{doi:10.1287/moor.19.2.257}}.

\bibitem{diamandis2024toward}
Theo Diamandis, Tarun Chitra, and 0xShitTrader.
\newblock Toward multidimensional solana fees.
\newblock {\em Umbra Research}, 2024.
\newblock URL: \url{https://www.umbraresearch.xyz/writings/toward-multidimensional-solana-fees}.

\bibitem{diamandis2023designing}
Theo Diamandis, Alex Evans, Tarun Chitra, and Guillermo Angeris.
\newblock {Designing Multidimensional Blockchain Fee Markets}.
\newblock In {\em 5th Conference on Advances in Financial Technologies (AFT 2023)}. Schloss Dagstuhl-Leibniz-Zentrum f{\"u}r Informatik, 2023.

\bibitem{dickerson2017adding}
Thomas Dickerson, Paul Gazzillo, Maurice Herlihy, and Eric Koskinen.
\newblock {Adding Concurrency to Smart Contracts}.
\newblock In {\em Proceedings of the ACM Symposium on Principles of Distributed Computing}, pages 303--312, 2017.

\bibitem{eclipse2024local}
{Eclipse Labs}.
\newblock Local fee markets are necessary to scale ethereum.
\newblock {\em Eclipse}, 2024.
\newblock URL: \url{https://www.eclipse.xyz/articles/local-fee-markets-are-necessary-to-scale-ethereum}.

\bibitem{shapley_banzhaf_2}
Piotr Faliszewski and Lane Hemaspaandra.
\newblock The complexity of power-index comparison.
\newblock {\em Theoretical Computer Science}, 410(1):101--107, 2009.
\newblock URL: \url{https://www.sciencedirect.com/science/article/pii/S030439750800710X}, \href {https://doi.org/10.1016/j.tcs.2008.09.034} {\path{doi:10.1016/j.tcs.2008.09.034}}.

\bibitem{ferreira2021dynamic}
Matheus~VX Ferreira, Daniel~J Moroz, David~C Parkes, and Mitchell Stern.
\newblock {Dynamic Posted-Price mMchanisms for the Blockchain Transaction-Fee Market}.
\newblock In {\em Proceedings of the 3rd ACM Conference on Advances in Financial Technologies}, pages 86--99, 2021.

\bibitem{gafni2022greedy}
Yotam Gafni and Aviv Yaish.
\newblock {Greedy Transaction Fee Mechanisms for (Non-) Myopic Miners}.
\newblock {\em arXiv preprint arXiv:2210.07793}, 5, 2022.

\bibitem{gafni2024barriers}
Yotam Gafni and Aviv Yaish.
\newblock {Barriers to Collusion-Resistant Transaction Fee Mechanisms}.
\newblock In {\em Proceedings of the 25th ACM Conference on Economics and Computation}, pages 1074--1096, 2024.

\bibitem{gafni2024discrete}
Yotam Gafni and Aviv Yaish.
\newblock {Discrete and Bayesian Transaction Fee Mechanisms}.
\newblock In {\em The International Conference on Mathematical Research for Blockchain Economy}, pages 145--171. Springer, 2024.

\bibitem{gelashvili2022block}
Rati Gelashvili, Alexander Spiegelman, Zhuolun Xiang, George Danezis, Zekun Li, Yu~Xia, Runtian Zhou, and Dahlia Malkhi.
\newblock {Block-STM: Scaling Blockchain Execution by Turning Ordering Curse to a Performance Blessing}.
\newblock {\em arXiv preprint arXiv:2203.06871}, 2022.

\bibitem{heimbach2023defi}
Lioba Heimbach, Quentin Kniep, Yann Vonlanthen, and Roger Wattenhofer.
\newblock Defi and nfts hinder blockchain scalability.
\newblock In {\em International Conference on Financial Cryptography and Data Security}, pages 291--309. Springer, 2023.

\bibitem{heimbach2023dissecting}
Lioba Heimbach, Quentin Kniep, Yann Vonlanthen, Roger Wattenhofer, and Patrick Z{\"u}st.
\newblock {Dissecting the EIP-2930 Optional Access Lists}.
\newblock {\em arXiv preprint arXiv:2312.06574}, 2023.

\bibitem{hougaard2023farsighted}
Jens~Leth Hougaard and Mohsen Pourpouneh.
\newblock {Farsighted Miners Under Transaction Fee Mechanism EIP-1559}.
\newblock In {\em 2023 IEEE International Conference on Blockchain and Cryptocurrency (ICBC)}, pages 1--9. IEEE, 2023.

\bibitem{inproceedings}
Vatche Isahagian, Raymond Sweha, Azer Bestavros, and Jonathan Appavoo.
\newblock Cloudpack: Exploiting workload flexibility through rational pricing.
\newblock In {\em Proceedings of the 13th ACM/IFIP/USENIX Middleware Conference}, 12 2012.

\bibitem{keyneom2024local}
keyneom.
\newblock Local fee markets in ethereum.
\newblock {\em Ethereum Research}, 2024.
\newblock URL: \url{https://ethresear.ch/t/local-fee-markets-in-ethereum/20754}.

\bibitem{sui_object_model}
Mysten Labs.
\newblock Sui object model documentation, 2025.
\newblock Accessed: 2025-01-22.
\newblock URL: \url{https://docs.sui.io/concepts/object-model}.

\bibitem{solana_transactions}
Solana Labs.
\newblock Solana transactions documentation, 2025.
\newblock Accessed: 2025-01-22.
\newblock URL: \url{https://solana.com/de/docs/core/transactions}.

\bibitem{lavi2022redesigning}
Ron Lavi, Or~Sattath, and Aviv Zohar.
\newblock {Redesigning Bitcoin’s Fee Market}.
\newblock {\em ACM Transactions on Economics and Computation}, 10(1):1--31, 2022.

\bibitem{leonardos2021dynamical}
Stefanos Leonardos, Barnab{\'e} Monnot, Dani{\"e}l Reijsbergen, Efstratios Skoulakis, and Georgios Piliouras.
\newblock {Dynamical Analysis of the EIP-1559 Ethereum Fee Market}.
\newblock In {\em Proceedings of the 3rd ACM Conference on Advances in Financial Technologies}, pages 114--126, 2021.

\bibitem{leonardos2023optimality}
Stefanos Leonardos, Dani{\"e}l Reijsbergen, Barnab{\'e} Monnot, and Georgios Piliouras.
\newblock {Optimality Despite Chaos in Fee Markets}.
\newblock In {\em International Conference on Financial Cryptography and Data Security}, pages 346--362. Springer, 2023.

\bibitem{lostin2025truth}
Lostin.
\newblock The truth about solana local fee markets.
\newblock {\em Helius}, 2025.
\newblock URL: \url{https://www.helius.dev/blog/solana-local-fee-markets}.

\bibitem{shapley_banzhaf_3}
Tomomi Matsui and Yasuko Matsui.
\newblock A survey of algorithms for calculating power indices of weighted majority games.
\newblock {\em Journal of the Operations Research Society of Japan}, 43:71--86, Nov 2000.
\newblock \href {https://doi.org/10.15807/jorsj.43.71} {\path{doi:10.15807/jorsj.43.71}}.

\bibitem{shapley_banzhaf_4}
Yasuko Matsui and Tomomi Matsui.
\newblock Np-completeness for calculating power indices of weighted majority games.
\newblock {\em Theoretical Computer Science}, 263(1):305--310, 2001.
\newblock Combinatorics and Computer Science.
\newblock URL: \url{https://www.sciencedirect.com/science/article/pii/S0304397500002516}, \href {https://doi.org/10.1016/S0304-3975(00)00251-6} {\path{doi:10.1016/S0304-3975(00)00251-6}}.

\bibitem{monad}
{Monad Labs}.
\newblock Monad, 2023.
\newblock Accessed: 2025-01-23.
\newblock URL: \url{https://www.monad.xyz}.

\bibitem{mueller2025}
Sebastian Mueller.
\newblock X post by sebastian mueller.
\newblock \url{https://x.com/NaitsabesMue/status/1862519048069959893}, 2025.
\newblock Accessed: 2025-02-08.

\bibitem{murgia2021theory}
Maurizio Murgia, Letterio Galletta, and Massimo Bartoletti.
\newblock {A Theory of Transaction Parallelism in Blockchains}.
\newblock {\em Logical Methods in Computer Science}, 17, 2021.

\bibitem{sui}
{Mysten Labs}.
\newblock Sui, 2023.
\newblock Accessed: 2025-01-23.
\newblock URL: \url{https://sui.io}.

\bibitem{neiheiser2024pythia}
Ray Neiheiser, Arman Babaei, Ioannis Alexopoulos, Marios Kogias, and Eleftherios~Kokoris Kogias.
\newblock {Pythia: Supercharging Parallel Smart Contract Execution to Guide Stragglers and Full Nodes to Safety}.
\newblock \url{https://aftsib.com/papers/SIB24_paper_4.pdf}, 2024.
\newblock Accessed: 2025-01-23.

\bibitem{nisan2023serial}
Noam Nisan.
\newblock {Serial Monopoly on Blockchains}.
\newblock {\em arXiv preprint arXiv:2311.12731}, 2023.

\bibitem{Pei_2024}
Qi~Pei, Yipeng Wang, and Seunghee Shin.
\newblock Litmus: Fair pricing for serverless computing.
\newblock In {\em Proceedings of the 29th ACM International Conference on Architectural Support for Programming Languages and Operating Systems, Volume 4}, ASPLOS ’24, page 155–169. ACM, April 2024.
\newblock URL: \url{http://dx.doi.org/10.1145/3622781.3674181}, \href {https://doi.org/10.1145/3622781.3674181} {\path{doi:10.1145/3622781.3674181}}.

\bibitem{penna2024serial}
Paolo Penna and Manvir Schneider.
\newblock {Serial Monopoly on Blockchains with Quasi-patient Users}.
\newblock {\em arXiv preprint arXiv:2405.17334}, 2024.

\bibitem{pirlea2021practical}
George P{\^\i}rlea, Amrit Kumar, and Ilya Sergey.
\newblock {Practical Smart Contract Sharding with Ownership and Commutativity Analysis}.
\newblock In {\em Proceedings of the 42nd ACM SIGPLAN International Conference on Programming Language Design and Implementation}, pages 1327--1341, 2021.

\bibitem{shapley_banzhaf_5}
K.~Prasad and J.~S. Kelly.
\newblock Np-completeness of some problems concerning voting games.
\newblock {\em International Journal of Game Theory}, 19(1):1--9, Mar 1990.
\newblock \href {https://doi.org/10.1007/BF01753703} {\path{doi:10.1007/BF01753703}}.

\bibitem{reijsbergen2021transaction}
Dani{\"e}l Reijsbergen, Shyam Sridhar, Barnab{\'e} Monnot, Stefanos Leonardos, Stratis Skoulakis, and Georgios Piliouras.
\newblock {Transaction Fees on a Honeymoon: Ethereum's EIP-1559 One Month Later}.
\newblock In {\em 2021 IEEE International Conference on Blockchain (Blockchain)}, pages 196--204. IEEE, 2021.

\bibitem{roughgarden2020transaction}
Tim Roughgarden.
\newblock {Transaction Fee Mechanism Design for the Ethereum Blockchain: An Economic Analysis of EIP-1559}.
\newblock {\em arXiv preprint arXiv:2012.00854}, 2020.

\bibitem{roughgarden2024transaction}
Tim Roughgarden.
\newblock {Transaction Fee Mechanism Design}.
\newblock {\em Journal of the ACM}, 71(4):1--25, 2024.

\bibitem{saraph2019empirical}
Vikram Saraph and Maurice Herlihy.
\newblock {An Empirical Study of Speculative Concurrency in Ethereum Smart Contracts}.
\newblock {\em arXiv preprint arXiv:1901.01376}, 2019.

\bibitem{sei_protocol}
{Sei Labs}.
\newblock Sei protocol, 2023.
\newblock Accessed: 2025-01-23.
\newblock URL: \url{https://www.sei.io}.

\bibitem{sergey2017concurrent}
Ilya Sergey and Aquinas Hobor.
\newblock {A Concurrent Perspective on Smart Contracts}.
\newblock In {\em International Conference on Financial Cryptography and Data Security}, pages 478--493. Springer, 2017.

\bibitem{shi2022can}
Elaine Shi, Hao Chung, and Ke~Wu.
\newblock {What can Cryptography do for Decentralized Mechanism Design}.
\newblock {\em arXiv preprint arXiv:2209.14462}, 2022.

\bibitem{solana}
{Solana Labs}.
\newblock Solana, 2023.
\newblock Accessed: 2025-01-23.
\newblock URL: \url{https://solana.com}.

\bibitem{wang2024mechanism}
Wenhao Wang, Lulu Zhou, Aviv Yaish, Fan Zhang, Ben Fisch, and Benjamin Livshits.
\newblock {Mechanism Design for ZK-Rollup Prover Markets}.
\newblock {\em arXiv preprint arXiv:2404.06495}, 2024.

\bibitem{wu2023maximizing}
Ke~Wu, Elaine Shi, and Hao Chung.
\newblock {Maximizing Miner Revenue in Transaction Fee Mechanism Design}.
\newblock {\em arXiv preprint arXiv:2302.12895}, 2023.

\bibitem{10.1145/2479942.2479944}
Hong Xu and Baochun Li.
\newblock A study of pricing for cloud resources.
\newblock {\em SIGMETRICS Perform. Eval. Rev.}, 40(4):3–12, April 2013.
\newblock \href {https://doi.org/10.1145/2479942.2479944} {\path{doi:10.1145/2479942.2479944}}.

\bibitem{yao2020incentive}
Andrew Chi-Chih Yao.
\newblock {An Incentive Analysis of some {B}itcoin Fee Designs}.
\newblock In {\em Proceedings of the 47th International Colloquium on Automata, Languages, and Programming (ICALP)}, 2020.

\bibitem{zhang2018enabling}
An~Zhang and Kunlong Zhang.
\newblock {Enabling Concurrency on Smart Contracts Using Multiversion Ordering}.
\newblock In {\em Asia-Pacific Web (APWeb) and Web-Age Information Management (WAIM) Joint International Conference on Web and Big Data}, pages 425--439. Springer, 2018.

\end{thebibliography}

\appendix

\section{Proofs Omitted From \cref{sect:gcm-prop}}
\lemmaproponetwothree*
\begin{proof} \cref{prop:resource_monotonicity,prop:time_monotonicity} are special cases of \cref{prop:resource_time_monotonicity}, corresponding to the scenarios where $\txtime_1 = \txtime_2$ and $\txkeys_1 = \txkeys_2$, respectively. Therefore, the $(\Rightarrow)$ direction holds. To prove the $(\Leftarrow)$ direction, assume \cref{prop:resource_monotonicity,prop:time_monotonicity} hold and consider a set of transactions $\txset$ and two transactions $\tx_1 \simeq (\txtime_1, \txkeys_1)$ and $\tx_2 \simeq (\txtime_2, \txkeys_2)$, both not in $T$, such that $\txtime_1 \leq \txtime_2$ and $\txkeys_1 \subseteq \txkeys_2$. Let $\tx_3$ be another transaction not in $T$ such that $\tx_3 \simeq (\txtime_1, \txkeys_2)$. 
By first applying \cref{prop:resource_monotonicity} and then \cref{prop:time_monotonicity}, we obtain:
\[
\gas_{T \cup \{\tx_1\}}(\tx_1) \leq \gas_{T \cup \{\tx_3\}}(\tx_3) \leq \gas_{T \cup \{\tx_2\}}(\tx_2).
\]
This establishes the conclusion.
\end{proof}

\lemmaproponetwothreeequiv*
\begin{proof} The strict variants of \cref{prop:resource_monotonicity,prop:time_monotonicity} are special cases of the strict variant of \cref{prop:resource_time_monotonicity}, corresponding to the scenarios where $\txtime_1 = \txtime_2$ and $\txkeys_1 = \txkeys_2$, respectively. Therefore, the $(\Rightarrow)$ direction holds. To prove the $(\Leftarrow)$ direction, assume \cref{prop:resource_monotonicity,prop:time_monotonicity} hold strictly and consider a set of transactions $\txset$ and two transactions $\tx_1 \simeq (\txtime_1, \txkeys_1)$ and $\tx_2 \simeq (\txtime_2, \txkeys_2)$, both not in $T$, such that $\txtime_1 \leq \txtime_2$ and $\txkeys_1 \subseteq \txkeys_2$, and additionally $\tx_1 \not\simeq \tx_2$ (i.e., at least one of the previous holds strictly). Let $\tx_3$ be another transaction not in $T$ such that $\tx_3 \simeq (\txtime_1, \txkeys_2)$. By first applying \cref{prop:resource_monotonicity} and then \cref{prop:time_monotonicity} (their \emph{non-strict} versions), we obtain:
\begin{equation}\label{eq:ineq-chain}    
\gas_{T \cup \{\tx_1\}}(\tx_1) \leq \gas_{T \cup \{\tx_3\}}(\tx_3) \leq \gas_{T \cup \{\tx_2\}}(\tx_2).
\end{equation}

From this, we get that $\gas_{T \cup {\tx_1}}(\tx_1) \leq \gas_{T \cup {\tx_2}}(\tx_2)$, so it remains to rule out the equality case. Assume for a contradiction that $\gas_{T \cup \{\tx_1\}}(\tx_1) = \gas_{T \cup \{\tx_2\}}(\tx_2)$, from which the two inequalities hold with equality in \cref{eq:ineq-chain}. Because \cref{prop:resource_monotonicity,prop:time_monotonicity} hold strictly, this can only be the case if $\tx_1 \simeq \tx_3 \simeq \tx_2$, contradicting the assumption that $\tx_1 \not\simeq \tx_2$.
\end{proof}

\section{Proofs Omitted From \cref{sect:our-gcms}}

\banzhafnotefficient*
\begin{proof} Consider the set of transactions $T=\{\tx_1,\tx_2,\tx_3\}$ where $\tx_1\simeq(1,\{k_1\})$, $\tx_2\simeq(1,\{k_1\})$, and $\tx_3\simeq(1,\{k_2\})$. Then, for any number of threads $n \geq 2$ and the optimal scheduler, we have:
\[
v(\{\tx_1\})=v(\{\tx_2\})=v(\{\tx_3\})=v(\{\tx_1,\tx_3\})=v(\{\tx_2,\tx_3\})=1,\quad
v(\{\tx_1,\tx_2\}) = v(T) = 2.
\]
A short calculation then shows that:
\[
\gas_T^B(\tx_1)=\gas_T^B(\tx_2)=\frac{1+1+0+1}{4}=\frac{3}{4},\quad
\gas_T^B(\tx_3) = \frac{1 + 0 + 0 + 0}{4} = \frac{1}{4}.
\]
Thus, \(\sum_{\tx\in T}\gas_T^B(\tx)=\frac{3}{4}+\frac{3}{4}+\frac{1}{4}=\frac{7}{4}\neq 2 = v(T)\), violating Efficiency.
\end{proof}

\section{Proofs Omitted From \cref{sect:gcm-analysis}}

\subsection{Transaction Bundling (\cref{prop:tx_bundling})}
\lemmashapleynotxbundling*
\begin{proof} Consider the set of transactions \(\txset=\{\tx_4\}\) and three other transactions $\tx_1, \tx_2, \tx_3$ such that $\tx_3$ is the concatenation of $\tx_1$ and $\tx_2$, where  $\tx_1\simeq(1,\{k_2\})$, $\tx_2\simeq(1,\{k_2\})$, $\tx_3\simeq(2,\{k_2\})$, and $\tx_4\simeq(1,\{k_1\})$. Then, for any number of threads $n \geq 2$ and the optimal scheduler, we have:
\begin{align*}
    \gas_{\txset \cup \{\tx_1, \tx_2\}}^S(\tx_1)=\gas_{\txset \cup \{\tx_1, \tx_2\}}^S(\tx_2)&=\frac{0+1+1+1+1+1}{6} \\
    \gas_{\txset \cup \{\tx_3\}}^S(\tx_3)&=\frac{2+1}{2}.
\end{align*}

Thus, \(\gas_{\txset \cup \{\tx_1, \tx_2\}}^S(\tx_1)+\gas_{\txset \cup \{\tx_1, \tx_2\}}^S(\tx_2)=\frac{10}{6}>\frac{3}{2}= \gas_{\txset \cup \{\tx_3\}}^S(\tx_3)\), which violates Transaction Bundling.
\end{proof}

\lemmabanzhafhastxbundl*
\begin{proof}
    Consider a set of transactions $\txset$ and three transactions $\tx_1, \tx_2, \tx_3 \notin T$ such that $\tx_3$ is the concatenation of $\tx_1$ and $\tx_2$. Then, for any scheduler satisfying property \ref{sched:monot-bundle}, we have:
    \begin{align*}
        \gas^B_{\txset\cup \{\tx_1, \tx_2\}}(\tx_1) &+ \gas^B_{\txset\cup \{\tx_1, \tx_2\}}(\tx_2)
        \\&= \frac{1}{2^{|\txset\cup \{\tx_1, \tx_2\}| - 1}}\sum_{S\subseteq (\txset\cup\{\tx_1, \tx_2\})\setminus\{\tx_1\}} \left[ v ( S\cup\{ \tx_1\} ) - v(S) \right] \\&+ \frac{1}{2^{|\txset\cup \{\tx_1, \tx_2\}| - 1}}\sum_{S\subseteq (\txset\cup\{\tx_1, \tx_2\})\setminus\{\tx_2\}} \left[ v ( S\cup\{ \tx_2\} ) - v(S) \right]
        \\&=\frac{1}{2^{|\txset\cup \{\tx_1, \tx_2\}| - 1}}\sum_{S\subseteq \txset} \left[ v ( S\cup\{ \tx_1\} ) - v(S) + v ( S\cup\{ \tx_1, \tx_2\} ) - v(S\cup\{\tx_2\}) \right] \\&+ \frac{1}{2^{|\txset\cup \{\tx_1, \tx_2\}| - 1}}\sum_{S\subseteq \txset} \left[ v ( S\cup\{ \tx_2\} ) - v(S) + v ( S\cup\{ \tx_1,\tx_2\} ) - v(S\cup\{\tx_1\}) \right]
        \\&=\frac{2}{2^{|\txset\cup \{\tx_1, \tx_2\}| - 1}}\sum_{S\subseteq \txset} \left[v ( S\cup\{ \tx_1, \tx_2\} ) - v(S)\right]
        \\&\leq\frac{1}{2^{|\txset\cup \{\tx_3\}| - 1}}\sum_{S\subseteq \txset} \left[v ( S\cup\{ \tx_3\} ) - v(S)\right] \quad = \quad \gas^B_{\txset\cup \{\tx_3\}}(\tx_3). \qedhere
    \end{align*}
\end{proof}

\lemmatpmhasbundling*
\begin{proof}
    Consider a set of transactions $\txset$ and three transactions $\tx_1, \tx_2, \tx_3 \notin T$ such that $\tx_3$ is the concatenation of $\tx_1$ and $\tx_2$. 
    Then, for any scheduler satisfying property \ref{sched:monot-bundle}, we have:
    \begin{align*}
        \gas^\textit{TPM}_{\txset\cup \{\tx_1, \tx_2\}}(\tx_1) &+ \gas^\textit{TPM}_{\txset\cup \{\tx_1, \tx_2\}}(\tx_2)
        \\&=\frac{t(\tx_1)}{\sum_{\tx' \in \txset\cup\{\tx_1,\tx_2\}}t(\tx')} \cdot v(\txset\cup\{\tx_1,\tx_2\})+\frac{t(\tx_2)}{\sum_{\tx' \in \txset\cup\{\tx_1,\tx_2\}}t(\tx')} \cdot v(\txset\cup\{\tx_1,\tx_2\})
        \\&=\frac{t(\tx_1)+t(\tx_2)}{\sum_{\tx' \in \txset\cup\{\tx_1,\tx_2\}}t(\tx')} \cdot v(\txset\cup\{\tx_1,\tx_2\})
        \\&=\frac{t(\tx_3)}{\sum_{\tx' \in \txset\cup\{\tx_3\}}t(\tx')} \cdot v(\txset\cup\{\tx_1,\tx_2\})
        \\&\leq\frac{t(\tx_3)}{\sum_{\tx' \in \txset\cup\{\tx_3\}}t(\tx')} \cdot v(\txset\cup\{\tx_3\}) \quad = \quad \gas^\textit{TPM}_{\txset\cup \{\tx_3\}}(\tx_3). \qedhere
    \end{align*}
\end{proof}

\lemmaesmnobundling*
\begin{proof} Consider the set of transactions \(\txset=\{\tx_4\}\) and three transactions $\tx_1, \tx_2, \tx_3 \notin \txset$ such that $\tx_3$ is the concatenation of $\tx_1$ and $\tx_2$, where $\tx_1\simeq(1,\{k_1\})$, $\tx_2\simeq(1,\{k_1\})$, $\tx_3\simeq(2,\{k_1\})$, and $\tx_4\simeq(1,\{k_1\})$. Then, for any number of threads $n \geq 2$ and the optimal scheduler, we have:
\[
    \gas_{\txset \cup \{\tx_1, \tx_2\}}^\textit{ESM}(\tx_1)=\gas_{\txset \cup \{\tx_1, \tx_2\}}^\textit{ESM}(\tx_2) = \frac{3}{3}, \quad \gas_{\txset \cup \{\tx_3\}}^\textit{ESM}(\tx_3) = \frac{3}{2}.
\]

Thus, \(\gas_{\txset \cup \{\tx_1, \tx_2\}}^\textit{ESM}(\tx_1)+\gas_{\txset \cup \{\tx_1, \tx_2\}}^\textit{ESM}(\tx_2)=2>\frac{3}{2}= \gas_{\txset \cup \{\tx_3\}}^\textit{ESM}(\tx_3)\), which violates Transaction Bundling.
\end{proof}

\lemmaxsmhassricttxbundling*
\begin{proof}
    Consider a set of transactions $\txset$ and three transactions $\tx_1, \tx_2, \tx_3 \notin T$ such that $\tx_3$ is the concatenation of $\tx_1$ and $\tx_2$. 
    Then, for any scheduler satisfying property \ref{sched:monot-bundle}, we have:
    \begin{align*}
        \gas^\textit{XSM}_{\txset\cup \{\tx_1, \tx_2\}}(\tx_1) &+ \gas^\textit{XSM}_{\txset\cup \{\tx_1, \tx_2\}}(\tx_2)
        \\&=\frac{v(\txset\cup \{\tx_1, \tx_2\})}{3^{|\txset\cup \{\tx_1, \tx_2\}|}}+\frac{v(\txset\cup \{\tx_1, \tx_2\})}{3^{|\txset\cup \{\tx_1, \tx_2\}|}}
        \\&=\frac{2}{3}\cdot\frac{v(\txset\cup \{\tx_1, \tx_2\})}{3^{|\txset\cup\{\tx_3\}|}}
        \\&<\frac{v(\txset\cup \{\tx_1, \tx_2\})}{3^{|\txset\cup\{\tx_3\}|}}
        \\&\leq\frac{v(\txset\cup \{\tx_3\})}{3^{|\txset\cup\{\tx_3\}|}} =\gas^\textit{XSM}_{\txset\cup \{\tx_3\}}(\tx_3). \qedhere
    \end{align*}
\end{proof}

\subsection{Set Inclusion (\cref{prop:set_inclusion})}

\lemmashapleynotsetincl*
\begin{proof} Consider the transaction sets \(\txset=\{\tx_1\}\), \(\txset_1=\{\tx_2, \tx_3\}\), and \(\txset_2=\{\tx_2,\tx_3,\tx_4\}\) where $\tx_1\simeq(1,\{k_1\})$, $\tx_2\simeq(1,\{k_2\})$, $\tx_3\simeq(1,\{k_2\})$, and $\tx_4\simeq(1,\{k_1\})$.
Then, for any number of threads $n \geq 2$ and the optimal scheduler, we have:
\[
    \gas_{\txset \cup \txset_1}^S(\txset_1)=\frac{5}{6}+\frac{5}{6}, \quad \quad \gas_{\txset \cup \txset_2}^S(\txset_2)=\frac{12}{24}+\frac{12}{24}+\frac{12}{24}.
\]
Thus, \(\gas_{\txset \cup \txset_1}^S(\txset_1)=\frac{10}{6}>\frac{36}{24}=\gas_{\txset \cup \txset_2}^S(\txset_2)\), which violates Set Inclusion.
\end{proof}

\lemmabanzhafnosetincl*
\begin{proof} Consider the transaction sets \(\txset=\emptyset\), \(\txset_1=\{\tx_1, \tx_2\}\), and \(\txset_2=\{\tx_1,\tx_2,\tx_3\}\) where $\tx_1\simeq(1,\{k_1\})$, $\tx_2\simeq(1,\{k_1\})$, and $\tx_3\simeq(1,\{k_2\})$.
Then, for any number of threads $n \geq 2$ and the optimal scheduler, we have:
\[
    \gas_{\txset \cup \txset_1}^B(\txset_1)=\frac{1+1}{2}+\frac{1+1}{2}, \quad \quad \gas_{\txset \cup \txset_2}^B(\txset_2)=\frac{1+1+0+1}{4}+\frac{1+1+0+1}{4}+\frac{1+0+0+0}{4}.
\]
Thus, \(\gas_{\txset \cup \txset_1}^B(\txset_1)=2>\frac{7}{4}=\gas_{\txset \cup \txset_2}^B(\txset_2)\), which violates Set Inclusion.
\end{proof}

\lemmatpmhassetincl*
\begin{proof}
    Consider a set of transactions  $\txset$ and two sets of transactions $\txset_1 \subseteq \txset_2$, disjoint from $\txset$. 
    Then, for any scheduler satisfying property \ref{sched:monot-t}, we have:
    \begin{align*}
        \gas^\textit{TPM}_{\txset\cup \txset_1}(\txset_1)&=\frac{\sum_{\tx\in\txset_1}t(\tx)}{\sum_{\tx' \in \txset\cup\txset_1}t(\tx')} \cdot v(\txset\cup\txset_1)
        \\&\leq \frac{\sum_{\tx\in\txset_1}t(\tx)+\sum_{\tx\in\txset_2\backslash\txset_1}t(\tx)}{\sum_{\tx' \in \txset\cup\txset_1}t(\tx')+\sum_{\tx'\in\txset_2\backslash\txset_1}t(\tx')} \cdot v(\txset\cup\txset_1)
        \\&= \frac{\sum_{\tx\in\txset_2}t(\tx)}{\sum_{\tx' \in \txset\cup\txset_2}t(\tx')} \cdot v(\txset\cup\txset_1)
        \\&\leq \frac{\sum_{\tx\in\txset_2}t(\tx)}{\sum_{\tx' \in \txset\cup\txset_2}t(\tx')} \cdot v(\txset\cup\txset_2) \quad = \quad \gas^\textit{TPM}_{\txset\cup \txset_2}(\txset_2). \qedhere
    \end{align*}
\end{proof}

\lemmaesmhassetincl*
\begin{proof}
    Consider a set of transactions  $\txset$ and two sets of transactions $\txset_1 \subseteq \txset_2$, disjoint from $\txset$. 
    Then, for any scheduler satisfying property \ref{sched:monot-t}, we have:
    \begin{align*}
        \gas^\textit{ESM}_{\txset\cup \txset_1}(\txset_1)&=\sum_{\tx\in\txset_1}\frac{v(\txset\cup\txset_1)}{|\txset\cup\txset_1|}
        \\&=\frac{|\txset_1|}{|\txset\cup\txset_1|}\cdot v(\txset\cup\txset_1)
        \\&\leq\frac{|\txset_1|+|T_2\backslash\txset_1|}{|\txset\cup\txset_1|+|T_2\backslash\txset_1|}\cdot v(\txset\cup\txset_1)
        \\&=\frac{|\txset_2|}{|\txset\cup\txset_2|}\cdot v(\txset\cup\txset_1)
        \\&\leq\frac{|\txset_2|}{|\txset\cup\txset_2|}\cdot v(\txset\cup\txset_2) \quad = \quad \gas^\textit{ESM}_{\txset\cup \txset_2}(\txset_2). \qedhere
    \end{align*}
\end{proof}

\lemmaxsmnosetinclusion*
\begin{proof} Consider the transaction sets \(\txset=\emptyset\), \(\txset_1=\{\tx_1\}\), and \(\txset_2=\{\tx_1,\tx_2\}\), where $\tx_1\simeq(1,\{k_1\})$ and $\tx_2\simeq(1,\{k_2\})$.
Then, for any number of threads $n \geq 2$ and the optimal scheduler, we have:
\[
    \gas_{\txset \cup \txset_1}^\textit{XSM}(\txset_1)=\frac{1}{3^1}, \quad \quad \gas_{\txset \cup \txset_2}^\textit{XSM}(\txset_2)=\frac{1}{3^2}+\frac{1}{3^2}.
\]
Thus, \(\gas_{\txset \cup \txset_1}^\textit{XSM}(\txset_1)=\frac{1}{3}>\frac{2}{9}=\gas_{\txset \cup \txset_2}^\textit{XSM}(\txset_2)\), which violates Set Inclusion.
\end{proof}

\subsection{Storage Key-Time Monotonicity (\cref{prop:resource_time_monotonicity}) and Strict Time Monotonicity (\cref{prop:time_monotonicity})}

\lemmatpmrestimesensandtimesens*
\begin{proof} 
Consider a set of transactions $\txset$ and two transactions $\tx_1 \simeq (\txtime_1, \txkeys_1)$ and $\tx_2 \simeq (\txtime_2, \txkeys_2)$, both not in $T$,
    such that $\txtime_1 \leq \txtime_2$ and $\txkeys_1 \subseteq \txkeys_2$. Then, for any scheduler satisfying property \ref{sched:monot-t-R}:
    \begin{align*}
        \gas^\textit{TPM}_{\txset\cup \{\tx_1\}}(\tx_1)&=\frac{t_1}{\sum_{\tx' \in \txset\cup\{\tx_1\}}t(\tx')} \cdot v(\txset\cup\{\tx_1\})
        \\&=\frac{t_1}{\sum_{\tx' \in \txset}t(\tx')+t_1} \cdot v(\txset\cup\{\tx_1\})
        \\&\leq\frac{t_2}{\sum_{\tx' \in \txset}t(\tx')+t_2} \cdot v(\txset\cup\{\tx_1\})
        \\&\leq\frac{t_2}{\sum_{\tx' \in \txset}t(\tx')+t_2} \cdot v(\txset\cup\{\tx_2\}) \quad = \quad \gas^\textit{TPM}_{\txset\cup \{\tx_2\}}(\tx_2).
    \end{align*}
    This establishes Storage Key-Time Monotonicity.

    To also get strict Time Monotonicity, assume  $t_1 < t_2$ and $K_1 = K_2$ in the above proof. Note that, when
    \(\sum_{\tx' \in \txset} t(\tx') > 0\), the function \(\frac{t}{\sum_{\tx'\in \txset} t(\tx') + t}\) is strictly increasing in \(t\), guaranteeing that the first inequality above is strict, so
    \(\gas_{\txset\cup \{\tx_1\}}^\textit{TPM}(\tx_1) < \gas_{\txset\cup \{\tx_2\}}^\textit{TPM}(\tx_2)\), as required. In the degenerate case where \(\sum_{\tx' \in \txset} t(\tx') = 0\), the fraction becomes 1 in both cases, but this can only happen when $T = \varnothing$, in which case the value \(v\) increases strictly because $v(\{\tx_1\}) = t_1 < t_2 = v(\{\tx_2\})$. Overall, the first inequality above again holds strictly, ensuring that \(\gas_{\txset\cup \{\tx_1\}}^\textit{TPM}(\tx_1) < \gas_{\txset\cup \{\tx_2\}}^\textit{TPM}(\tx_2)\) still holds, as required. Thus, the strict Time Monotonicity property is satisfied.
\end{proof}

\end{document}